\newcommand{\handout}[5]{
   \renewcommand{\thepage}{#1-\arabic{page}}
   \noindent
   \begin{center}
   \framebox{
      \vbox{
    \hbox to 5.78in { {\bf 6.893 Sub-linear Algorithms}
     	 \hfill #2 }
       \vspace{4mm}
       \hbox to 5.78in { {\Large \hfill #5  \hfill} }
       \vspace{2mm}
       \hbox to 5.78in { {\it #3 \hfill #4} }
      }
   }
   \end{center}
   \vspace*{4mm}
}
\newtheorem{theorem}{Theorem}
\newtheorem{lemma}{Lemma}
\newtheorem{corollary}{Corollary}
\newtheorem{proposition}[theorem]{Proposition}
\newcommand{\qed}{\rule{7pt}{7pt}}
\newenvironment{proof}{\noindent{\bf Proof}\hspace*{1em}}{\qed\bigskip}
\newenvironment{proof-sketch}{\noindent{\bf Sketch of Proof}\hspace*{1em}}{\qed\bigskip}
\newenvironment{proof-idea}{\noindent{\bf Proof Idea}\hspace*{1em}}{\qed\bigskip}
\newenvironment{proof-of-lemma}[1]{\noindent{\bf Proof of Lemma #1}\hspace*{1em}}{\qed\bigskip}
\newenvironment{proof-attempt}{\noindent{\bf Proof Attempt}\hspace*{1em}}{\qed\bigskip}
\newenvironment{remark}{\noindent{\bf Remark}\hspace*{1em}}{\bigskip}
\newenvironment{proof-of-theorem}[1]{\noindent{\bf Proof of Theorem #1}\hspace*{1em}}{\qed\bigskip}
\newenvironment{proof-of-corollary}[1]{\noindent{\bf Proof of Corollary #1}\hspace*{1em}}{\qed\bigskip}
\newenvironment{remarks}{\noindent{\bf Remarks}\hspace*{1em}}{\bigskip}
\def\fnum@figure{{\bf Figure \thefigure}}
\def\fnum@table{{\bf Table \thetable}}
\long\def\@mycaption#1[#2]#3{\addcontentsline{\csname
  ext@#1\endcsname}{#1}{\protect\numberline{\csname
  the#1\endcsname}{\ignorespaces #2}}\par
  \begingroup
    \@parboxrestore
    \small
    \@makecaption{\csname fnum@#1\endcsname}{\ignorespaces #3}\par
  \endgroup}
\def\mycaption{\refstepcounter\@captype \@dblarg{\@mycaption\@captype}}
\newcommand{\mathify}[1]{\ifmmode{#1}\else\mbox{$#1$}\fi}
\newcommand{\bigO}O
\newcommand{\bb}[1]{\bm{#1}}
\begin{document}

\title{Unifying approach to selective inference with applications to cross-validation}

\runtitle{Unifying approach to selective inference with applications to CV}

\begin{aug}
\author{\fnms{Jelena} \snm{Markovic}\corref{}\thanksref{t1}\ead[label=e1]{jelenam@stanford.edu},  
\fnms{Lucy} \snm{Xia}\ead[label=e2]{lucyxia@stanford.edu},
\fnms{Jonathan} \snm{Taylor}\thanksref{t2}\ead[label=e3]{jonathan.taylor@stanford.edu}
}
\runauthor{Markovic et al.}
\affiliation{Stanford University}
\address{Department of Statistics\\ Stanford University 
\\ Sequoia Hall \\ Stanford, CA 94305, USA\\ \texttt{$\{$jelenam, lucyxia, jonathan.taylor$\}$@stanford.edu}}
\end{aug}

\thankstext{t1}{Supported by Stanford Graduate Fellowship.}
\thankstext{t2}{Supported in part by National Science Foundation grant DMS-1208857 and Air Force Office of Sponsored Research grant 113039.}

\begin{abstract} 

We develop tools to do valid post-selective inference for a family of model selection procedures, including choosing a model via cross-validated Lasso. The tools apply universally when the following random vectors are jointly asymptotically multivariate Gaussian: 1. the vector composed of each model's quality value evaluated under certain model selection criteria (e.g.~cross-validation errors across folds, AIC, prediction errors etc.) 2. the test statistics from which we make inference on the parameters; it is worth noting that the parameters here are chosen after model selection methods are performed. Under these assumptions, we derive a pivotal quantity that has an asymptotically Unif$(0,1)$ distribution which can be used to perform tests and construct confidence intervals. Both the tests and confidence intervals are selectively valid for the chosen parameter. While the above assumptions may not be satisfied in some applications, we propose a novel variation to these model selection procedures by adding Gaussian randomizations to either one of the two vectors. As a result, the joint distribution of the above random vectors is multivariate Gaussian and our general tools apply. We illustrate our method by applying it to four important procedures for which very few selective inference results have been developed: cross-validated Lasso, cross-validated randomized Lasso, AIC-based model selection among a fixed set of models and inference for a newly introduced novel marginal LOCO parameter, inspired by the LOCO parameter of \cite{rinaldo2016bootstrapping}; and we provide complete results for these cases.  
For randomized model selection procedures, we develop Markov chain Monte Carlo sampling scheme to construct valid post-selective confidence intervals empirically. 

\end{abstract} 

\maketitle

\section{Introduction}

\textit{Selective inference} has attracted a lot of research interest in recent years. Intuitively, if we find important variables/associations after performing statistical learning methods on a set of data, to conduct proper inference on the selected variables or assess the strength of the associations, we should adjust for the selective procedure, since we have ``searched for/cherry-picked'' these variables \citep{berk2013valid, taylor2015statistical}. For a few model selection procedures, valid post-selective inferences have been developed, but it is rare that any of these methods can be applied universally. Recall that general model selection procedures can be briefly described as the following: we start with a measure of model quality, such as the likelihood, AIC, BIC, or prediction errors. After calculating each model's quality value under a specific measure (or criterion), we select the best model that either attains the minimum or the maximum of these values. Once we pick a model and choose a corresponding parameter of interest, we build a test statistic conditional on the selected model and use it to do inference on the parameter. We see that to do valid selective inference, we must account for observing the minimizer of the vector of all models' quality values.

In this paper, we develop tools to do post-selective inference that apply for general rank-based model selection procedures. The technique applies universally as long as the joint distribution of the vector of the models' quality values and the test statistic is asymptotically a multivariate Gaussian distribution. However, as we will illustrate later in the paper, these multivariate Gaussian assumptions do not hold in general. As a remedy, we propose a novel way to Gaussian-ize either the quality values vector or the test statistic by adding small Gaussian randomizations to it. In this way, we will show that we are able to carry out valid inference without losing model selection properties by added randomization. We apply randomization and our technical tools to four widely-used cases. The first three of them, cross-validated Lasso, cross-validated randomized Lasso, and AIC-based model selection over a fixed set of models, are examples when the vector of models' quality values itself is not asymptotically jointly Gaussian. The fourth one, inference after marginal LOCO parameter, defined in this work, is an example when the test statistics does not follow Gaussian distribution. Let us start with a brief introduction of these selection procedures.

\begin{enumerate}
\item \textbf{Cross-validated Lasso:} Among many learning methods, variable selection via Lasso has been one of the most popular ones. It describes the following procedure: given the data $(X,y)\in\mathbb{R}^{n\times p}\times\mathbb{R}^n$, we choose a set of important variables as the non-zero set in $\hat{\beta}=\hat{\beta}(X,y,\lambda)$, where

\begin{equation} \label{eq:lasso:objective:simple}
	\hat{\beta}(X,y,\lambda) = \textnormal{arg}\underset{\beta\in\mathbb{R}^p}{\min}\:\frac{1}{2}\left\|y-X\beta\right\|_2^2+\lambda\|\beta\|_1. 
\end{equation}
The above objective induces sparsity in the solution $\hat{\beta}$, so we denote with $\widehat{E}=\widehat{E}(X,y,\lambda)$ the set of the non-zero coefficients of $\hat{\beta}(X,y,\lambda)$. The goal is to provide valid inference for some model parameters chosen after observing $\widehat{E}=E$, where $E$ represents the realized, or observed, set of the selected predictors. So far, all previous attempts to address this problem, including \citealt{lee2013exact, lee_screening, sequential_post_selection}, assume fixed (pre-specified) $\lambda$ and typically fixed $X$. Let us denote with $\hat{\beta}_E=\hat{\beta}(X,y,\lambda)_E$ and $\hat{\beta}_{-E}=\hat{\beta}(X,y,\lambda)_{-E}$ the active (non-zero) and inactive sub-vector of $\hat{\beta}$ respectively, and the signs of the active sub-vector with $s_E=\textnormal{sign}(\hat{\beta}_E)$. Treating $\lambda$ as a constant, the Lasso selection event can be represented as 
\begin{equation} \label{eq:lasso:selection:fixed:lambda}
\begin{aligned}
	(X,y)\in\mathcal{S}_{E,s_E}
	&=\left\{(X',y') \in\mathbb{R}^{n\times p}\times\mathbb{R}^n: \widehat{E}(X',y',\lambda)=E, \textnormal{sign}(\hat{\beta}(X',y',\lambda)_E) = s_E \right\} \\
	&=\left\{(X',y')\in\mathbb{R}^{n\times p}\times\mathbb{R}^n:\hat{\beta}(X',y',\lambda)_{-E}=0, \textnormal{sign}(\hat{\beta}(X',y',\lambda)_E)=s_E\right\},
\end{aligned}
\end{equation}
and we need to do inference conditional on this event. Note that all the previous works, including our current work, condition on both the set of selected predictors $E$ and the observed signs $s_E$.

Fixing $\lambda$ implies the selection event $\mathcal{S}_{E,s_E}$ is affine in the response vector $y$ \citep{lee2013exact}. However, in practice, the optimal $\lambda$ is not given automatically, but usually a result of cross-validation (CV). Therefore, it is necessary to adjust for cross-validation to perform valid selective inference.

\citealt{loftus_cv} tried to perform post-selection inference after adjusting for cross-validation by writing out explicitly the quadratic constraints coming from cross-validation. This approach conditions not only on the minimizer of the cross-validated error vector but also on the $K\times(\textnormal{the grid size for }\lambda)$ intermediate models produced through $K$-fold CV. This is redundant in practice, since an analyst usually does not look at these intermediate models. Extra conditioning leaves less information for inference, reducing the statistical power of the tests performed.

In this paper, we provide a cleaner solution by conditioning only on the final model chosen by cross-validated Lasso, in addition to the constraint on vector of cross-validated errors introduced by the CV procedure. The latter event accounts for the fact that the penalty level $\lambda$ is chosen by looking at the minimizer of the cross-validated errors across folds.
Specifically, we combine Lasso optimization with cross-validation as one selection event and construct a selective pivot, which is a test statistics valid post-selection inference. To apply our technique, for a given grid of $\lambda$ values, we need the corresponding cross-validated error vector to be jointly asymptotically normal with the data. This is not true with vanilla cross-validation. Therefore, we propose a randomized version of cross-validation to fulfill our goal.

\item \textbf{Cross-validated randomized Lasso:} As illustrated in \citealt{tian2016magic, selective_sampler}, adding randomization to the model selection procedure, such as Lasso, greatly enhances power. Meanwhile, with the help of added randomization in the Lasso objective, the selection region simplifies compared to the one in \citealt{lee2013exact} and hence enables us to easily adapt Markov chain Monte Carlo (MCMC) techniques when sampling from post-selection density. Due to these benefits, we report post-selective $p$-values and construct confidence intervals after cross-validated randomized Lasso. We borrow techniques from \citealt{bootstrap_mv}, where they perform valid inference after running several model selection procedures on the same data set, with each procedure called a view/query. 

\item \textbf{AIC-based model selection among a given set of models:} In this application, the vector of models' quality values are their prediction errors and similar to the above two examples, this vector is asymptotically generally not multivariate Gaussian. 

\end{enumerate}

A crucial ingredient in the examples above is adding randomization to the curve composed of models' quality values, whether it is cross-validation curve or AIC criteria curve, to make this vector asymptotically jointly Gaussian. The test statistic used also needs to be asymptotically Gaussian pre-selection, hence we need to make sure this requirement is also satisfied. The usual selective inference parameters are population regression parameters corresponding to the selected model. In this case the test statistic used is the least squares estimator, hence asymptotically Gaussian pre-selection (treating the selected model as fixed in advance and non-random) under mild conditions. However, for other parameters the choice for the corresponding test statistics might not be straightforward.

\begin{enumerate} 
\setcounter{enumi}{3}
\item \textbf{Marginal LOCO parameter:} We introduce a novel parameter, called marginal LOCO, defined after selecting a model. In its plain version (without added randomization), the test statistic for marginal LOCO parameter is not asymptotically Gaussian so we add randomization to satisfy this requirement and apply our tools to get valid inference for this parameter after selection. The marginal LOCO parameter we introduce is inspired by the Leave One Covariate (LOCO) parameter of \cite{rinaldo2016bootstrapping}. These two are, however, different as explained in Section \ref{sec:loco}.

\end{enumerate}

We emphasize that this work analyzes three types of randomizations. 
\begin{enumerate}[label=(\alph*)]
\item \label{item:a:cv:random} Randomizing the vector consisting of quality values, e.g.~cross-validation error vector or the vector consisting of AIC criteria evaluated across different models. This randomization enables asymptotic normality of the corresponding randomized vector, a crucial assumption needed for the post-selection validity of our constructed test statistics.
\item Randomizing the test statistic for the marginal LOCO parameter is essentially done for the same reason as \ref{item:a:cv:random}. We state this separately since here randomization is only applied to the test statistic used for inference and does not modify the selection event.
\item Randomizing the objective function as in randomized Lasso increases statistical power by leaving more information for inference, leading to shorter confidence intervals. We analyze this procedure in conjunction with \ref{item:a:cv:random} (Section \ref{sec:randomized:lasso:cv}).
\end{enumerate}

\subsection{Outline} In Section \ref{sec:gen:framework}, we present a general framework for selective inference. We apply this framework to a wide range of examples, starting with inference after running Lasso with a data-independent penalty level $\lambda$ fixed in advance, in Section \ref{sec:lasso:fixed:lambda}. In Section \ref{sec:nonrandomized:lasso:cv}, we present a way to do valid inference after running cross-validated Lasso. In Section \ref{sec:randomized:lasso:cv}, we show how to account for cross-validation after running randomized model selection procedures to achieve greater power. Two additional applications, inference for the selected coefficients after model selection under AIC and inference for the marginal LOCO parameter, are presented in Section \ref{sec:AIC} and Section \ref{sec:loco}, respectively.

\section{General framework for selective inference} \label{sec:gen:framework}

Before going into details for each of the specific examples let us describe the general framework we propose. 
Suppose our dataset is $S\sim\mathbb{F}_n$, where $\mathbb{F}_n$ is a data generating distribution. We make further assumptions on $\mathbb{F}_n$ in specific examples in Section \ref{sec:nonrandomized:lasso:cv}. We run a model selection procedure $\mathcal{M}$ on data $S$. We assume that the selected model $\mathcal{M}(S)=M$ depends on $S$ only through a data vector $\widetilde{D}=\widetilde{D}(S,M)$ (we intentionally save $D$ for future use). For example, in Lasso example with fixed $\lambda$, $\mathcal{M}$ becomes the Lasso objective and $M=(E,s_E)$ becomes the observed selected model. $\widetilde{D}$ is mathematically complicated so we leave the details to Section \ref{sec:lasso:fixed:lambda}. This setting is fairly general and applicable for model selection procedures other than the Lasso as illustrated in Sections \ref{sec:AIC} and \ref{sec:loco}.

There are three important objects in our framework. 

\begin{itemize} 

\item  The \textit{parameter of interest} $\theta=\theta(\mathbb{F}_n, M)$ is a function of $\mathbb{F}_n$ and it is chosen after we observe the selected model $M$.
 
  \textbf{Example:} In regression examples with $S=(X,y)\in\mathbb{R}^{n\times p}\times\mathbb{R}^n$ we might chose $\theta$ to be the population regression coefficient $\beta^*_E=\left(\mathbb{E}_{\mathbb{F}_n}\left[X_E^\top X_E\right]\right)^{-1}\mathbb{E}_{\mathbb{F}_n}\left[X_E^\top y\right]$ corresponding to the set $E$ of selected predictors, a result of a model selection procedure. $X_E$ are the columns in $X$ corresponding to set $E$ and $X_{-E}$ are the rest of the columns of $X$.
   However, $\theta$ can be other parameter as well. In Section \ref{sec:loco}, $\theta$ is chosen to measure how much a single selected predictor affects the second moment of the population residuals. We call this $\theta$ the {\it Marginal LOCO parameter}.

\item The corresponding \textit{test statistic} $T=T(S,M)$, also called the \textit{target statistic}, centered around $\theta$ pre-selection (as in our example above, suppose we treat $E$ as fixed and not chosen in a data-dependent way). In other words, under $\mathbb{F}_n$ and treating $E$ as non-random, $T-\theta$ follows a mean-zero Gaussian distribution asymptotically and we use $T$ for inference on $\theta$ ignoring selection. Taking selection into consideration, $T$ is generally not Gaussian.

\textbf{Example:} When we do inference on the population regression coefficient $\theta=\beta_E^*$, we take the target statistic to be $T=\bar{\beta}_E=\left(X_E^\top X_E\right)^{-1}X_E^\top y$, the ordinary least square (OLS) estimator calculated on $(X_E, y)$. Under mild assumptions, $\bar{\beta}_E-\beta_E^*$ is asymptotically normal under original data generating distribution treating $E$ as fixed in advance.

\item  The \textit{selection event} $\mathcal{S}=\left\{S':\mathcal{M}(S')=M\right\}$ describes all possible data sets $S'$ for which running the same model selection procedure $\mathcal{M}$ gives the identical observed model $M$ as on the original dataset. Since $\mathcal{M}$ depends on $S$ through $\widetilde{D}$ only, we can write the selection event only in terms of $\widetilde{D}$, denoting a different parametrization of $\mathcal{S}$ as $\mathcal{S}_{\widetilde{D}}$.
Throughout this work, we only consider the selection events $\mathcal{S}_{\widetilde{D}}$ that are affine in $\widetilde{D}$, i.e.
\begin{equation} \label{eq:affine:selection}
	\widetilde{A} \widetilde{D} \\ \leq \tilde{a}_n,	
\end{equation}
where we assume $\tilde{a}_n\rightarrow\tilde{a}$ as $n\rightarrow\infty$. In general, $\widetilde{A}$ is a matrix, $\widetilde{D}$ and $\tilde{a}_n$ are vectors and the inequality is coordinate-wise. We will write them out explicitly in each example later.

\textbf{Example:} The constraints in $\mathcal{S}$ might include the model selection adjustment coming from looking at the minimizer of the vector of models' quality values. Specifically, $\widetilde{D}$ includes the randomized version of this vector, denoted as $Err_R$. Conditioning on observing $r^*$, the index of the minimizer of $Err_R$, induces an affine constraint on $Err_R$. This constraint becomes part of $\mathcal{S}_{\widetilde{D}}$ along with other constraints coming from additional procedures we run on the data.
  
\end{itemize}

Since the parameter $\theta$ has been chosen after looking at the outcome $M$, we need inference on $\theta$ based on the distribution of $T$ conditional on $\mathcal{S}$. The conditioning ``adjusts'' for the pre-selection asymptotic Gaussian distribution of $T-\theta$ to provide a valid post-selection distribution, which we use for inference on $\theta$. In order to have the post-selection distribution of $T-\theta$ not depending on nuisance parameters other than $\theta$, we need to have the selection event only written in terms of $T$. Thus we decompose $\widetilde{D}$ in terms of $T$, and $\mathcal{S}_{\widetilde{D}}$ is now described through $T$ instead of $\widetilde{D}$.  

We describe the re-parametrization of $\mathcal{S}_{\widetilde{D}}$ in terms of $T$ and randomization $\omega$. Assuming that jointly $(T,\widetilde{D})$ is an asymptotically multivariate Gaussian vector, we decompose $\widetilde{D}=\Sigma_{\widetilde{D},T}\Sigma_T^{-1}T+N_{\widetilde{D}}$, where $\Sigma_{\widetilde{D},T}$ and $\Sigma_T$ are the corresponding covariance matrices, $N_{\widetilde{D}}$ is a vector independent of $T$ and we condition on it later. This allows us to write $\mathcal{S}_{\widetilde{D}}$ in terms of $T$ and we denote this new parametrization as $\mathcal{S}_T$. Using the affine representation of $\mathcal{S}_{\widetilde{D}}$, we write $\mathcal{S}_T$ as
\begin{equation*}
	\widetilde{A} \left(\Sigma_{\widetilde{D},T}\Sigma_T^{-1}T+N_{\widetilde{D}}\right) \leq\tilde{a}_n.
\end{equation*}
Given that $T$ is asymptotically Gaussian pre-selection, we derive its asymptotic post-selective distribution by conditioning this pre-selection Gaussian that lands on the set $\mathcal{S}_T$.  This general framework will be used throughout the paper. In specific examples, we elaborate what $\mathcal{S}$, $\mathcal{S}_{\widetilde{D}}$, $\mathcal{S}_T$ are and how to handle additional randomization. 

We emphasize notations for four different distributions which we use frequently in this paper.

\begin{itemize} 

\item $\mathbb{F}_n$: the distribution of the data $S$ \textit{pre-selection}. Since $T$ and $\widetilde{D}$ are functions of $S$ and $M$, to save notations, we also use $\mathbb{F}_n$ to denote the pre-selection distribution of $(T,\widetilde{D})=(T(S,M), \widetilde{D}(S,M))$, treating $M$ as fixed. Index $n$ denotes that the underlying data generating distribution can change with $n$. $\mathbb{F}_n^n$ denotes the distribution of $n$ i.i.d.~copies from $\mathbb{F}_n$; at places with no ambiguity we use $\mathbb{F}_n$ instead of $\mathbb{F}_n^n$.
 
\item $\mathbb{F}_n^*$: the distribution of the data $S$ \textit{post-selection}, i.e.~the distribution of the data $S\sim\mathbb{F}_n$ conditional on the selection event $\mathcal{S}$. $\mathbb{F}_n^*$ is also used for distribution of $(T,\widetilde{D})\sim\mathbb{F}_n$ conditional on the selection event $\widetilde{D}\in\mathcal{S}_{\widetilde{D}}$.

\item $\Phi$: the {\it asymptotic} Gaussian distribution of $(T,\widetilde{D})$ \textit{pre-selection}. We assume that under $S\sim\mathbb{F}_n$ and non-random $M$
\begin{equation} \label{eq:D:T:gaussian}
	\begin{pmatrix}	T(S,M) \\ \widetilde{D}(S,M) \end{pmatrix} \overset{d}{\rightarrow}
	 	 \mathcal{N}\left(\begin{pmatrix}
		\theta \\ \mu_{\widetilde{D}} \end{pmatrix}, \begin{pmatrix} \Sigma_T & \Sigma_{T,\widetilde{D}} \\ \Sigma_{\widetilde{D},T} & \Sigma_{\widetilde{D}} \end{pmatrix}\right) \sim \Phi
\end{equation}
as $n\rightarrow\infty$ for some covariance matrices $\Sigma_T$, $\Sigma_{\widetilde{D}}$ and cross-covariance matrix $\Sigma_{\widetilde{D},T}=\Sigma_{T,\widetilde{D}}^\top$. Since we require the above convergence to hold, we assume $\textnormal{dim}(T)+\textnormal{dim}(\widetilde{D})$ does not grow with the sample size $n$. In regression examples, it implies the number of predictors $p$ to be fixed.

\item $\Phi^*$: the \textit{post-selection} counterpart of $\Phi$. More specifically, for $(Z_T,Z_{\widetilde{D}})\sim\Phi$
\begin{equation*}
 \begin{pmatrix} Z_T \\ Z_{\widetilde{D}} \end{pmatrix}\:\Big|\:\left\{\widetilde{A}Z_{\widetilde{D}}\leq \tilde{a}\right\} \sim \Phi^*.
\end{equation*}
\end{itemize}

\subsection{Selective pivot} \label{sec:selective:pivot}
The joint asymptotic normality of $(T,\widetilde{D})$ is sufficient for us to construct a test statistic valid post-selection, which we call the \textit{selective pivot}. In particular, it is defined as 
\begin{equation} \label{eq:nonrandomized:pivot}
\begin{aligned} 
	& \mathcal{P}\left((T,\widetilde{D}); \widetilde{A}, \tilde{a}_n\right) \\
	&= \mathbb{P}_{(Z_T, Z_{\widetilde{D}})\sim\Phi}\left\{\left\|Z_T-\theta\right\|_2\leq \left\|T-\theta\right\|_2\:\Big|\: Z_T\in\mathcal{S}_T, Z_{\widetilde{D}}-\Sigma_{\widetilde{D},T}\Sigma_T^{-1}Z_T=N_{\widetilde{D}} \right\} \cr
	&=\mathbb{P}_{(Z_T, Z_{\widetilde{D}})\sim\Phi}\left\{\left\|Z_T-\theta\right\|_2\leq \left\|T-\theta\right\|_2\:\Big|\: Z_{\widetilde{D}}\in\mathcal{S}_{\widetilde{D}}, Z_{\widetilde{D}}-\Sigma_{\widetilde{D},T}\Sigma_T^{-1}Z_T=N_{\widetilde{D}}\right\},
\end{aligned}
\end{equation}
where $N_{\widetilde{D}}=\widetilde{D}-\Sigma_{\widetilde{D},T}\Sigma_T^{-1}T$. We define a similar quantity for inference after randomized model selection procedures in Section \ref{sec:randomized:lasso:cv}. Note that the probabilities on the RHS above are only with respect to $(Z_T, Z_{\widetilde{D}})\sim\Phi$. 
Since we condition on $Z_{\widetilde{D}}-\Sigma_{\widetilde{D},T}\Sigma_T^{-1}T$, the pivot depends only on $T$, $\theta$ and the covariance matrices, but not on $\mu_{\widetilde{D}}$. This makes the selective pivot a valid test statistic for $\theta$.

\begin{remarks}
\begin{itemize}[leftmargin=*]

\item[--] It is clear that if $(T,\widetilde{D})$ were exactly Gaussian, the selective pivot is uniformly distributed. More precisely, assuming $\mathbb{F}_n=\Phi$, under $\mathbb{F}_n^*$ the following distributional result holds
\begin{equation*}
	\mathcal{P}\left((T,\widetilde{D}); \widetilde{A}, \tilde{a}_n\right)\sim\textnormal{Unif}(0,1).
\end{equation*} 

\item[--] When $T$ is one-dimensional, the pivot above becomes the truncated Gaussian (TG) test statistic of \cite{lee2013exact}. 

\item[--] Conditioning on the observed value $N_{\widetilde{D}}$ is crucial in removing the dependence of the selective pivot on the nuisance parameter $\mu_{\widetilde{D}}$. Without conditioning we would have the quantity 
\begin{equation} \label{eq:alt:pivot}
	\mathbb{P}_{(Z_T, Z_{\widetilde{D}})\sim\Phi}\left\{\left\|Z_T-\theta\right\|_2\leq \left\|T-\theta\right\|_2\:\Big|\: Z_{\widetilde{D}}\in\mathcal{S}_{\widetilde{D}}\right\}
\end{equation}
 to depend on both $\theta$ and $\mu_{\widetilde{D}}$, so it is not a valid test statistic for doing inference on $\theta$. 
Constructing a test statistic by plugging in the estimates for the nuisance parameter in \eqref{eq:alt:pivot} does not lead to valid inference for $\theta$, as the resulting test statistic does not have estimable cumulative distribution function \citep{leeb2006can, leeb2006performance}. 
\end{itemize}
\end{remarks}

Without assuming the exact normality on $(T,\widetilde{D})$, the following theorem proves $\mathcal{P}$ is asymptotically pivotal after selection given \eqref{eq:D:T:gaussian} holds, i.e.~assuming $(T,\widetilde{D})$ satisfy the Central Limit Theorem (CLT) pre-selection. The asymptotic convergence is under the conditional distribution $\mathbb{F}_n^*$, implying conditional validity of the proposed test statistic post-selection.
The proof of the theorem is given in Section \ref{sec:proofs} in the appendix.

\begin{theorem}[Valid selective pivot after model selection] \label{thm:nonrandomized:pivot} Assuming \eqref{eq:affine:selection} and \eqref{eq:D:T:gaussian} hold, we have that under $(T,\widetilde{D})=(T(S,M),\widetilde{D}(S,M))\sim\mathbb{F}_n^*$
	\begin{equation*}
		\mathcal{P}\left((T,\widetilde{D}); \widetilde{A},\tilde{a}_n\right)\overset{d}{\rightarrow}\textnormal{Unif}(0,1)
	\end{equation*}
	as $n\rightarrow\infty$. In other words, under $(T,\widetilde{D})=(T(S,M), \widetilde{D}(S,M))\sim\mathbb{F}_n$ we have
	\begin{equation*}
		\mathbb{P}_{(Z_T, Z_{\widetilde{D}})\sim\Phi}\left\{\left\|Z_T-\theta\right\|_2\leq \left\|T-\theta\right\|_2\Big|\: Z_{\widetilde{D}}\in\mathcal{S}_{\widetilde{D}}, Z_{\widetilde{D}}-\Sigma_{\widetilde{D},T}\Sigma_T^{-1}T =N_{\widetilde{D}}\right\} \Big |\:\widetilde{D}\in\mathcal{S}_{\widetilde{D}} \overset{d}{\rightarrow} \textnormal{Unif}(0,1)
	\end{equation*}
	as $n\rightarrow\infty$.
\end{theorem}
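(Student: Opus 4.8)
The plan is to split the argument into an exact finite-dimensional Gaussian computation and an asymptotic transfer. Write $g_n(t,d):=\mathcal P\big((t,d);\widetilde A,\tilde a_n\big)$ and $g_\infty(t,d):=\mathcal P\big((t,d);\widetilde A,\tilde a\big)$ for the pivot evaluated with the finite-$n$ and with the limiting right-hand side, and set $\mathcal S^n:=\{d:\widetilde A d\le\tilde a_n\}$, $\mathcal S^\infty:=\{d:\widetilde A d\le\tilde a\}$. I would (i) show that $g_\infty(Z_T,Z_{\widetilde D})\sim\textnormal{Unif}(0,1)$ \emph{exactly} when $(Z_T,Z_{\widetilde D})\sim\Phi^*$; (ii) show the conditional weak convergence $\mathbb F_n^*\Rightarrow\Phi^*$; (iii) show $g_n\to g_\infty$ uniformly on compacts with $g_\infty$ bounded and continuous; and then, by the extended continuous mapping theorem, conclude that $g_n(T,\widetilde D)$ under $\mathbb F_n^*$ converges in law to $g_\infty(Z_T,Z_{\widetilde D})$ under $\Phi^*$, which is $\textnormal{Unif}(0,1)$ by (i).

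For step (i): under $Z=(Z_T,Z_{\widetilde D})\sim\Phi$ the residual $N:=Z_{\widetilde D}-\Sigma_{\widetilde D,T}\Sigma_T^{-1}Z_T$ is Gaussian and independent of $Z_T$, which is the point of the regression decomposition defining $\mathcal S_T$. Conditioning $\Phi$ on $\{Z_{\widetilde D}\in\mathcal S^\infty\}$ and further on $N=\nu$ turns the selection event into the affine constraint $\widetilde A\Sigma_{\widetilde D,T}\Sigma_T^{-1}Z_T\le\tilde a-\widetilde A\nu$ on $Z_T$ and leaves the pre-selection law of $Z_T$ untouched, so $Z_T\mid(N=\nu,\ \text{selection})$ is $\mathcal N(\theta,\Sigma_T)$ truncated to a polyhedron; since $\Sigma_T$ is nondegenerate this truncated law is absolutely continuous, hence $\rho\mapsto F_\nu(\rho):=\mathbb P\{\|Z_T-\theta\|_2\le\rho\mid N=\nu,\ \text{selection}\}$ is continuous. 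Unwinding \eqref{eq:nonrandomized:pivot}, the pivot at any point whose residual is $\nu$ equals $F_\nu(\|\cdot-\theta\|_2)$; therefore, conditionally on $N=\nu$ and selection, $g_\infty(Z_T,Z_{\widetilde D})=F_\nu(\|Z_T-\theta\|_2)$ is the probability integral transform of $\|Z_T-\theta\|_2$ through its own continuous c.d.f.\ and is $\textnormal{Unif}(0,1)$; averaging over $\nu$ gives the claim under $\Phi^*$. This is exactly the content of the first of the Remarks above, carried out with the extra conditioning on $N$.

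For step (ii): by \eqref{eq:D:T:gaussian}, $\mathbb F_n\Rightarrow\Phi$. Since $\Phi$ is a nondegenerate Gaussian and each row of $\widetilde A$ gives a nontrivial linear functional of $Z_{\widetilde D}$, every hyperplane $\{(\widetilde A d)_j=\tilde a_j\}$ is $\Phi$-null, so $\partial\mathcal S^\infty$ is $\Phi$-null; assuming, as is implicit (equivalently $\liminf_n\mathbb P_{\mathbb F_n}(\mathcal S^n)>0$), that $\Phi(\mathcal S^\infty)>0$, we may pass to the conditional laws. To handle $\tilde a_n\ne\tilde a$, fix $\epsilon>0$, sandwich $\mathcal S^{\infty,-\epsilon}\subseteq\mathcal S^n\subseteq\mathcal S^{\infty,+\epsilon}$ for all large $n$ (the superscript $\pm\epsilon$ perturbing the right-hand side), apply the portmanteau lemma to the fixed polyhedra $\mathcal S^{\infty,\pm\epsilon}$, whose boundaries are $\Phi$-null, against each bounded continuous test function, and let $\epsilon\downarrow0$ using $\Phi\big(\mathcal S^{\infty,+\epsilon}\setminus\mathcal S^{\infty,-\epsilon}\big)\downarrow0$; this gives $\mathbb E_{\mathbb F_n}[h(T,\widetilde D)\mathbf{1}\{\widetilde D\in\mathcal S^n\}]\to\mathbb E_\Phi[h(Z)\mathbf{1}\{Z_{\widetilde D}\in\mathcal S^\infty\}]$, and dividing by $\mathbb P_{\mathbb F_n}(\mathcal S^n)\to\Phi(\mathcal S^\infty)$ yields $\mathbb F_n^*\Rightarrow\Phi^*$, in particular $(T,\widetilde D)$ is tight under $\mathbb F_n^*$.

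For step (iii), $\tilde a_n\to\tilde a$ together with the continuous dependence of the truncated-Gaussian c.d.f.\ $F_\nu$ on its defining right-hand side gives $g_n\to g_\infty$ uniformly on compacts, and $g_\infty$ is bounded and jointly continuous in $(t,d)$; combined with the tightness from step (ii), the extended continuous mapping theorem finishes the proof. I expect the main obstacle to be precisely the conditional weak convergence $\mathbb F_n^*\Rightarrow\Phi^*$: conditioning is a discontinuous operation, made worse here by the conditioning region itself moving with $n$ through $\tilde a_n$, so one must lean on the Gaussian boundary being null together with the $\epsilon$-sandwiching. The joint continuity of $(\nu,\rho)\mapsto F_\nu(\rho)$ — and in particular that the normalizing Gaussian probability $\mathbb P_{Z_T\sim\mathcal N(\theta,\Sigma_T)}\{\widetilde A\Sigma_{\widetilde D,T}\Sigma_T^{-1}Z_T\le\tilde a-\widetilde A\nu\}$ stays bounded away from $0$ on the region carrying $\Phi^*$ — is the other spot where nondegeneracy of $\Sigma_T$ and $\Sigma_{\widetilde D}$ is used.
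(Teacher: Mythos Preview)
Your proposal is correct and follows essentially the same route as the paper: establish the conditional weak convergence $\mathbb{F}_n^*\Rightarrow\Phi^*$ by writing the conditional law as a ratio and passing to the limit in numerator and denominator, then invoke continuity of the pivot in $(t,\tilde d,\tilde a)$ together with the Continuous Mapping Theorem. The paper's proof is terser---it cites results from Lehmann--Romano for the portmanteau step rather than spelling out the $\epsilon$-sandwich for the moving threshold $\tilde a_n$, and it takes the exact-Gaussian uniformity (your step (i)) as already established in the preceding Remark---but the skeleton is identical to yours.
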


\subsection{Post-selection consistency}

In order to make the selective pivot useful in practice, we need to use estimated covariance matrices. As we consider low-dimensional examples in this paper, the estimated covariance matrices are consistent pre-selection. The following Lemma shows that, under the conditions of Theorem \ref{thm:nonrandomized:pivot}, these estimated covariances are also consistent post-selection.

\begin{lemma}[Post-selection consistency] \label{lemma:consistency}
	Given a parameter $\xi=\xi(\mathbb{F}_n,M)$ we assume there exists an estimator $\hat{\xi}=\hat{\xi}(S,M)$ such that under $S\sim\mathbb{F}_n$ and treating $M$ as fixed, we have the following convergence in probability
	\begin{equation*}
		\hat{\xi}\overset{P}{\rightarrow}\xi
	\end{equation*}
	as $n\rightarrow\infty$, i.e.~for every $\epsilon>0$ we have $\underset{n\rightarrow\infty}{\lim}\mathbb{F}_n\left\{\left\|\hat{\xi}-\xi\right\|_2\geq \epsilon\right\}=0$. Assuming \eqref{eq:affine:selection} and \eqref{eq:D:T:gaussian} hold,  under $S\sim\mathbb{F}_n^*$, i.e.~post-selection (conditional on selection), we also have
	\begin{equation*}
	 \hat{\xi}\overset{P}{\rightarrow} \xi	
	\end{equation*}
	as $n\rightarrow\infty$, i.e.~for every $\epsilon>0$ we have $\underset{n\rightarrow\infty}{\lim}\mathbb{F}^*_n\left\{\left\|\hat{\xi}-\xi\right\|_2\geq \epsilon\right\}=0$.
\end{lemma}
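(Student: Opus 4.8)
The plan is to show that post-selection convergence in probability follows from pre-selection convergence together with the key structural fact that the limiting post-selection event $\{\widetilde{A}Z_{\widetilde D}\le \tilde a\}$ has strictly positive probability under the limiting Gaussian law $\Phi$. The starting observation is that convergence in probability of $\hat\xi-\xi$ to $0$ under $\mathbb{F}_n$ is equivalent to a statement about the joint asymptotic behavior of $(\hat\xi, T, \widetilde D)$: since $\hat\xi-\xi \overset{P}{\to} 0$ pre-selection and $(T,\widetilde D)\overset{d}{\to}\Phi$ pre-selection, by Slutsky the triple $(\hat\xi-\xi, T, \widetilde D)$ converges in distribution under $\mathbb{F}_n$ to $(0, Z_T, Z_{\widetilde D})$ with $(Z_T,Z_{\widetilde D})\sim\Phi$. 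The goal is to upgrade this to: conditionally on $\widetilde D\in\mathcal{S}_{\widetilde D}$, we still have $\hat\xi-\xi\overset{P}{\to}0$.

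First I would fix $\epsilon>0$ and write, for the conditional probability,
\begin{equation*}
\mathbb{F}^*_n\left\{\|\hat\xi-\xi\|_2\ge\epsilon\right\} = \frac{\mathbb{F}_n\left\{\|\hat\xi-\xi\|_2\ge\epsilon,\ \widetilde D\in\mathcal{S}_{\widetilde D}\right\}}{\mathbb{F}_n\left\{\widetilde D\in\mathcal{S}_{\widetilde D}\right\}}.
\end{equation*}
The numerator is bounded above by $\mathbb{F}_n\{\|\hat\xi-\xi\|_2\ge\epsilon\}$, which $\to 0$ by the pre-selection hypothesis. For the denominator I would argue it is bounded away from $0$: the affine event $\{\widetilde A\widetilde D\le\tilde a_n\}$ with $\tilde a_n\to\tilde a$, combined with $\widetilde D\overset{d}{\to}Z_{\widetilde D}\sim\mathcal{N}(\mu_{\widetilde D},\Sigma_{\widetilde D})$, gives by the portmanteau-type argument (using that the Gaussian limit assigns zero mass to the boundary of the polyhedron, and handling the drifting threshold $\tilde a_n$) that $\mathbb{F}_n\{\widetilde D\in\mathcal{S}_{\widetilde D}\}\to \mathbb{P}_\Phi\{\widetilde A Z_{\widetilde D}\le\tilde a\} =: c$. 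One must assume, or observe is implicit in the setup, that $c>0$ — i.e.\ the selection event is not asymptotically negligible; this is exactly the same non-degeneracy needed to make Theorem \ref{thm:nonrandomized:pivot} meaningful. Hence the ratio is eventually at most $\mathbb{F}_n\{\|\hat\xi-\xi\|_2\ge\epsilon\}/(c/2)\to 0$.

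The main obstacle is the handling of the moving threshold $\tilde a_n\to\tilde a$ in establishing the denominator limit: one cannot directly apply the portmanteau lemma to the fixed set $\{\widetilde A z\le\tilde a\}$ because the events depend on $n$ both through the distribution of $\widetilde D$ and through $\tilde a_n$. I would resolve this by a sandwiching argument: for any $\delta>0$, eventually $\{\widetilde A z\le\tilde a-\delta\mathbf 1\}\subseteq\{\widetilde A z\le\tilde a_n\}\subseteq\{\widetilde A z\le\tilde a+\delta\mathbf 1\}$, apply portmanteau to the two fixed sandwiching polyhedra (whose boundaries are $\Phi$-null), then let $\delta\downarrow 0$ using continuity of the Gaussian measure of the shrinking/growing polyhedra. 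A symmetric sandwiching also controls the numerator's analogous event if one prefers to keep $(\hat\xi-\xi,\widetilde D)$ jointly rather than crudely bounding the numerator — but the crude bound suffices here since we only need an upper bound on the numerator and a lower bound on the denominator, so the joint-distribution subtleties mostly disappear. A brief remark that the same argument proves joint convergence in probability of a vector of such estimators (needed later for plug-in covariance consistency) would round out the proof.
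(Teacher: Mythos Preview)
Your proposal is correct and follows essentially the same route as the paper: write the conditional probability as a ratio, bound the numerator crudely by the unconditional probability (which vanishes by the pre-selection hypothesis), and show the denominator converges to the strictly positive limit $\Phi\{\widetilde A Z_{\widetilde D}\le\tilde a\}$. Your sandwiching argument for the drifting threshold $\tilde a_n\to\tilde a$ is a welcome elaboration of a step the paper leaves implicit, and the preliminary Slutsky remarks about joint convergence of $(\hat\xi-\xi,T,\widetilde D)$ are not needed for the argument since the crude numerator bound already suffices.
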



\section{Inference after Lasso with random $X$ and $\lambda$ pre-fixed} \label{sec:lasso:fixed:lambda}

We apply our framework to the problem of doing inference after Lasso when the design matrix $X$ is random and $\lambda$ is fixed in advance. We focus on the $\ell_2$ loss; however, our technique transfers to smooth convex losses. Recall that, with any fixed $\lambda$ in \eqref{eq:lasso:objective:simple}, $E$ is the set of nonzero coefficients in the Lasso solution $\hat{\beta}$, and $s_E$ contains their signs.

We start by describing the data vector and selection event for the Lasso. The selection event of interest is $\mathcal{S}_{E,s_E}$ from \eqref{eq:lasso:selection:fixed:lambda}, consisting of the event that Lasso selected predictors in $E$ together with the signs of the estimated predictors being fixed at $s_E$.
We write this selection event in terms of Karush-Kuhn-Tucker (KKT) conditions of Lasso for clearer expression.  To do that, let us define the so-called \textit{data vector} $D$ as 
\begin{equation}\label{eq:D:vector}
	D=D\left((X,y),E\right)=\begin{pmatrix}
		\bar{\beta}_E \\ X_{-E}^\top \left(y-X_E\bar{\beta}_E\right)
	\end{pmatrix}.
\end{equation} 
Recall that, $\bar{\beta}_E$ is the OLS estimator of regressing $y$ on $X_E$.

\begin{remark}
$D$ is a special case of $\widetilde{D}$ we defined earlier. In other words, $D$ serves as the data vector for Lasso with fixed $\lambda$ and we later use $\widetilde{D}$ to represent the data vector for Lasso where optimal $\lambda$ is chosen via cross-validation (cross-validated Lasso).
\end{remark}

\noindent With some algebraic calculations and a pre-fixed $\lambda$ in \eqref{eq:lasso:objective:simple}, the KKT conditions (hence selection event) can be represented in terms of $D$ as
\begin{equation} \label{eq:lasso:affine:constraints}
	\begin{pmatrix}
		-\textnormal{diag}(s_E) & 0 \\ 0 &  I_{p-|E|} \\ 0 &  -I_{p-|E|} 
	\end{pmatrix} D \leq \begin{pmatrix}
		-\lambda \textnormal{diag}(s_E)\left(X_E^\top X_E\right)^{-1}s_E \\ \lambda 1_{p-|E|}-\lambda X_{-E}^\top \left(X_E^\top \right)^{\dagger}s_E \\ \lambda 1_{p-|E|}+\lambda X_{-E}^\top \left(X_E^\top \right)^{\dagger}s_E
	\end{pmatrix},
\end{equation}
where $(X_E^\top)^{\dagger}=X_E\left(X_E^\top X_E\right)^{-1}$, $I_{p-|E|}$ is the identity matrix of dimension $p-|E|$ and $1_{p-|E|}\in\mathbb{R}^{p-|E|}$ is a vector of all ones \cite[Theorem 4.3]{lee2013exact}. To ease our notation, let us write the selection event $\mathcal{S}$ in terms of $D$ as 
\begin{equation*}
	D\in\mathcal{S}_D=\left\{D'\in\mathbb{R}^p: A\cdot D'\leq a_n\right\},
\end{equation*} 
with $A$ and $a_n$ defined accordingly as in the above inequality \eqref{eq:lasso:affine:constraints}. Properly scaled $a_n$  converges to a fixed vector (usually the case by SLLN).

After observing the selected set $E$, an analyst might decide to do inference for the parameter $\beta_E^*$, the regression population coefficients corresponding to the selected model. In that case the target statistic we use is $\bar{\beta}_E$ that is asymptotically normal pre-selection with mean $\beta_E^*$ under mild moment conditions. In practice, with some prior knowledge, an analyst might decide to do inference on another population regression coefficients $\beta_{\widetilde{E}}^*$ corresponding to set $\widetilde{E}$ that is not necessarily equal to $E$. Our framework provides valid inference for $\beta_{\widetilde{E}}^*$ in this case as well.

For inference on $\beta_E^*$ via $\bar{\beta}_E$, \eqref{eq:D:T:gaussian} holds under mild moment assumptions on the data generating mechanism $\mathbb{F}_n$ in the random $X$ and fixed $p$ setting. Thus we can decompose $D$ in terms of $T$ as $D=\Sigma_{D,T}\Sigma_T^{-1}T+N_D$, where $N_D$ is asymptotically independent of $T$. Hence, by conditioning on $N_D$ (fixing $N_D$ at its observed value), we can rewrite the selection event $\mathcal{S}$ in terms of $T$ as 
\begin{equation} \label{eq:lasso:constraints:target}
	T\in\mathcal{S}_T=\left\{T':A\cdot\Sigma_{D,T}\Sigma_T^{-1}T'\leq a_n - AN_D\right\}.
\end{equation}
Notice that, $\mathcal{S}$, $\mathcal{S}_D$ represent the same selection event with different parameterizations. $\mathcal{S}_T$ is $\mathcal{S}_D$ conditioning on one more variable $N_D$. Constructing the selective pivot based on \eqref{eq:nonrandomized:pivot} gives valid inference for our target parameter $\beta_E^*$. 

In practice, we estimate the covariance matrices using pairs-bootstrap. In the random $X$ and  fixed $p$ setting, the estimates via pairs bootstrap are consistent pre-selection \citep{freedman1981bootstrapping, buja2014conspiracy}; using Lemma \ref{lemma:consistency} gives us that these estimates are consistent post-selection as well.

\section{Inference after cross-validated Lasso} \label{sec:nonrandomized:lasso:cv}

In this section, we present a way to do inference after Lasso where the penalty level $\lambda=\lambda^{cv}$ has been chosen using cross-validation.  

\subsection{$K$-fold cross-validation}

Our goal is to extend the ideas from the previous section so that we can do valid inference after $\lambda$ has been chosen via cross-validation. Let us first review the cross-validation procedure. Given data $(X,y)\in\mathbb{R}^{n\times p}\times\mathbb{R}^n$, we split it  into $K$ disjoint folds denoted as $(X^k,y^k)\in\mathbb{R}^{n_k\times p}\times\mathbb{R}^{n_k}$, $k=1,\ldots, K$, containing $n_1,\ldots, n_K$ observations, respectively. The data without the $k$-th fold is denoted as $(X^{-k}, y^{-k})$. Suppose we choose a grid of $\lambda$'s to be $\Lambda = \{\lambda_1,\ldots, \lambda_L\}$. To choose $\lambda^{cv}$, for each $\lambda \in \Lambda$, we follow the steps below: 

\begin{enumerate}

\item For each fold $k$ and each $\lambda\in\Lambda$, we compute the Lasso estimator $\hat{\beta}^{-k}(\lambda)$ on the training data $(X^{-k},y^{-k})$: 
\begin{equation*}
	\hat{\beta}^{-k}(\lambda) = \textnormal{arg}\underset{\beta\in\mathbb{R}^p}{\min}\:\frac{1}{2}\left\|y^{-k}-X^{-k}\beta\right\|_2^2+\lambda\|\beta\|_1. 
\end{equation*} 

\item For each fold $k$ and each $\lambda\in\Lambda$, we evaluate the error of this estimator on the test data $(X^k,y^k)$ as 
	\begin{equation*}
		Err(\lambda,k) = \frac{1}{n_k}\left\|y^k-X^k\hat{\beta}^{-k}(\lambda)\right\|_2^2.
	\end{equation*}

\item We define the cross-validated error (CV error from now on) for each $\lambda\in\Lambda$ as 
	\begin{equation*}
		Err(\lambda) = \sum_{k=1}^K Err(\lambda, k).
	\end{equation*}
	 We use phrase \textit{CV curve} to denote the vector 
	\begin{equation*}
		Err=\left(Err(\lambda_1),\ldots, Err(\lambda_L)\right).
	\end{equation*}

\item We pick $\lambda^{cv}=\lambda_{l^*}$, $1\leq l^*\leq L$ ($l^*$ is the index of the minimizer), that minimizes the CV curve: 
\begin{equation} \label{eq:cv:nonrandom}
	\lambda^{cv} = \textnormal{arg}\underset{\lambda\in\Lambda}{\min}\: Err(\lambda).
\end{equation}

\end{enumerate}

Let us see what happens when we use the truncated Gaussian (TG) test statistic of \citealt{lee2013exact} without adjusting for CV for the selected coefficients, output by cross-validated Lasso. With simulation settings described in its caption, Figure \ref{fig:lee_et_al} shows a clear violation of $p$-values from the uniform distribution (the straight 45 degrees line). We see that in this case, accounting for model selection via Lasso using TG is not enough since we do not take into account the fact that $\lambda$ has also been chosen in a data-dependent manner. The naive $p$-values are also added for comparison. They are constructed based on the normal quantiles by using asymptotic normality of $T$ pre-selection. Since they ignore both model selection and cross-validation, the naive $p$-values deviate further from uniform.

\begin{figure}[h!]
  \centering
    \includegraphics[width=0.6\textwidth]{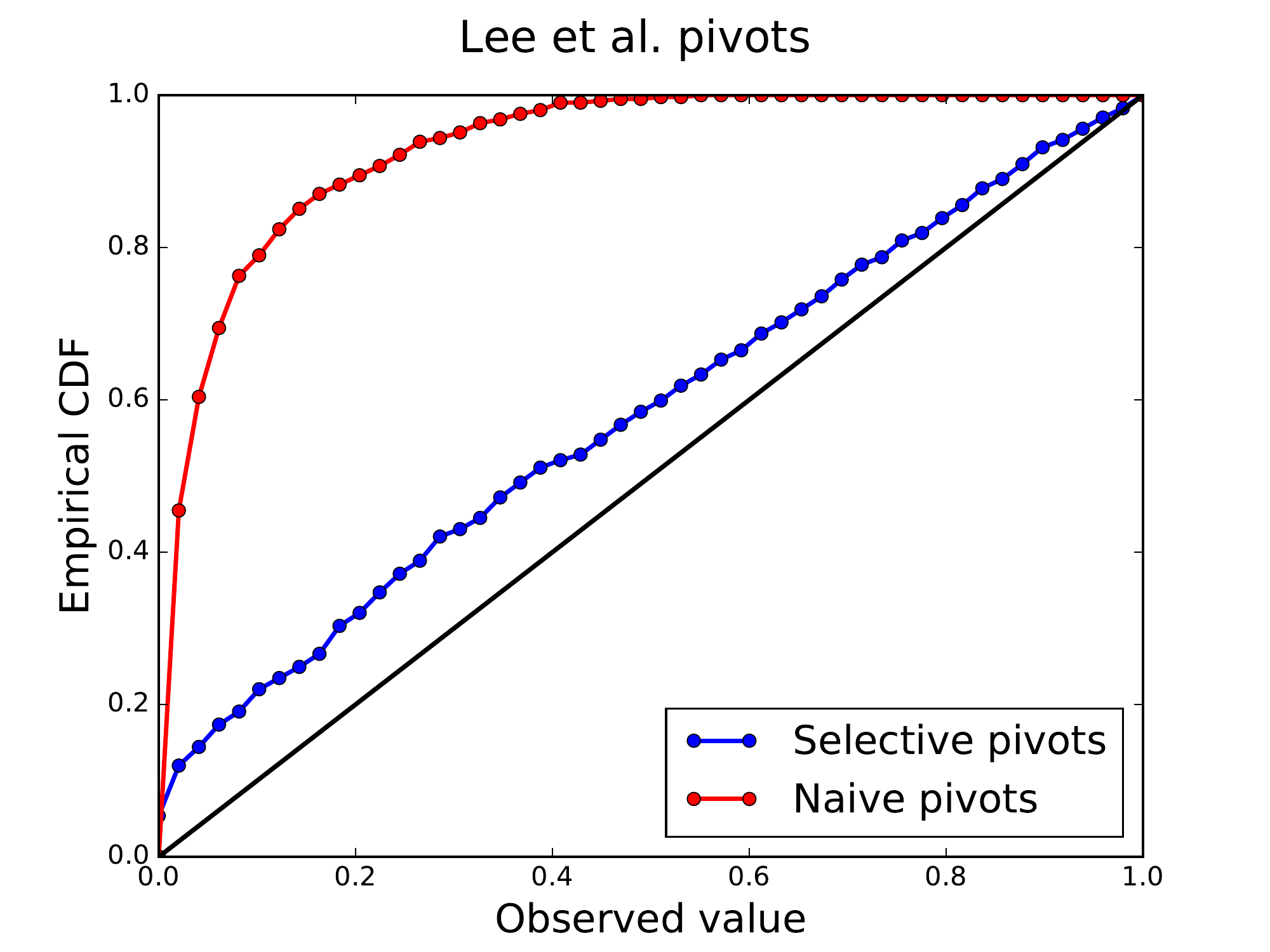}
    \caption{The $p$-values constructed for the selected coefficients using TG test statistic of \citealt{lee2013exact} (blue) and the naive ones (red) after running Lasso to select the model with $\lambda$ chosen using cross-validation. We take $n=500, p=100$ and the entries of matrix $X$ are generated as independent standard normal random variables, with the columns of $X$ normalized to have empirical variance 1. The response vector is $y\sim\mathcal{N}(0,I_n)$, i.e.~a null signal setting.}
    \label{fig:lee_et_al}
\end{figure}

If $\lambda$ has been chosen in a data-dependent manner, the selection event we ``look'' at differs from the selection event with fixed $\lambda$; it becomes much more complicated to describe.

\subsection{Randomized CV curve} \label{sec:randomized:cv:curve}

To account for cross-validation, it is necessary that we adjust the selection region $\mathcal{S}$ further taking into account the minimizer of the CV curve as in \eqref{eq:cv:nonrandom}. In other words, the right selection event $\mathcal{S}$ is composed of two parts: the selection effect from Lasso described in \eqref{eq:lasso:affine:constraints} and the selection resulted from CV. By conditioning on $l^*$, the index of the cross-validation minimizer, we can rewrite the CV part of the selection event $Err\in\{Err'\in\mathbb{R}^L:Err'_{l^*}=\underset{1\leq l\leq L}{\min}Err'_l\}$ as 
\begin{equation} \label{eq:cv:affine}
	B_{l^*} \cdot Err\leq 0,
\end{equation}
with $B_{l^*}\in\mathbb{R}^{(L-1)\times L}$ a matrix of zeros, ones and minus ones. Intuitively, if $(Err,T,D)$ is a jointly Gaussian vector, by decomposing $Err$ in terms of $T$ as $Err=\Sigma_{Err,T}\Sigma_T^{-1}T+N_{Err}$, we combine constraints \eqref{eq:lasso:constraints:target} and \eqref{eq:cv:affine}, and thus the combined selection event will be affine in $T$. The selective pivot construction will follow from Section \ref{sec:selective:pivot}. However, $Err$ alone may not follow a multivariate Gaussian, and thus we have no hope in $(Err,T,D)$.

To see why $Err$ is not multivariate Gaussian, suppose the linear model $y=X_{E^*}\beta_{E^*}+\epsilon$, $\epsilon|X\sim\mathcal{N}(0,I_n)$, is true for a set $E^*\subset\{1,\ldots,p\}$. For some $\lambda$ in the grid $\Lambda$ which selects a set $E\supseteq E^*$, according to our cross-validation procedure, one fold in the cross-validated error centered around its expected value looks like: 
\begin{equation*}
\begin{aligned}
  &\frac{1}{n_k}\left(\left\|y^k-X^k_E\bar{\beta}^{-k}(\lambda)\right\|^2_2-\mathbb{E}\left[\left\|y^k-X^k_E\bar{\beta}^{-k}(\lambda)\right\|^2_2\right]\right) \\
  &=\frac{1}{n_k}\left(\left\|\epsilon^{k}-X^k_E({X_E^{-k}}^{\top}X_E^{-k})^{-1} {X_E^{-k}}^{\top}\epsilon^{-k}\right\|^2_2 
  -\mathbb{E}\left[\left\|y^k-X^k_E\bar{\beta}^{-k}(\lambda)\right\|^2_2\right]\right) \\
  &=\frac{\left\|\epsilon^{k}\right\|_2^2-\mathbb{E}\left[\left\|\epsilon^k\right\|_2^2\right]}{n_k}
   -2\frac{{\epsilon^k}^{\top} X^k_E ({X^{-k}_E}^{\top}X^{-k}_E)^{-1} {X^{-k}_E}^{\top}\epsilon^{-k}}{n_k} +\frac{\left\|X^k_Ev^{-k}\right\|_2^2-\mathbb{E}\left[\left\|X^k_Ev^{-k}\right\|_2^2\right]}{n_k}, 
\end{aligned}
\end{equation*} 
where $\epsilon^k=y^k-X^k_{E^*}\beta_{E^*}\sim\mathcal{N}(0,I_{n_k})$,  $\epsilon^{-k}=y^{-k}-X^{-k}_{E^*}\beta_{E^*}\sim\mathcal{N}(0,I_{n-n_k})$, $v^{-k} = \left({X^{-k}_E}^{\top}X^{-k}_E\right)^{-1}{X^{-k}_E}^{\top}\epsilon^{-k}$. For the purpose of providing intuitive explanation, we assume above that after fitting Lasso on the training data, we compute $\bar{\beta}^{-k}(\lambda)$ as the OLS estimator on $y^{-k}\sim X^{-k}_E$. Using $\hat{\beta}^{-k}(\lambda)$ is more complicated and will make $Err$ deviate further from a Gaussian distribution asymptotically. By the CLT, the first term in the above equation asymptotically follows distribution $Z/\sqrt{n_k}$, where $Z\sim\mathcal{N}(0,1)$. The CLT together with Slutsky's lemma tell us that the second and the third term are of order $O_P(1/n)$, but without a Gaussian distribution. Therefore, we see that marginally, each fold of the cross-validate error and thus a centered and scaled version of $Err(\lambda)$ asymptotically follows $Z/\sqrt{n}$. However, if $\textnormal{supp}(\hat{\beta}^{-k}(\lambda_1))$ and $\textnormal{supp}(\hat{\beta}^{-k}(\lambda_2))$ contain the true set $E^*$ for all $k$, then the first term in the above equation would cancel and only the non-Gaussian part of order $O_P(1/n)$ will be left. As a result, CV curve $Err$ is not asymptotically distributed as multivariate Gaussian, since we see the difference between the two entries $Err(\lambda_1)$ and $Err(\lambda_2)$ is asymptotically non-Gaussian. The left plot in Figure \ref{fig:qq_plots} is an illustration of the non-Gaussianity, and we will come back to it later.

As a remedy, we propose using a randomized version of cross-validation vector, e.g.
\begin{equation} \label{eq:cv:randomizing:residuals}
	Err_R(\lambda,k)=\frac{1}{n_k}\left\|y^k-X^k\hat{\beta}^{-k}(\lambda)\right\|_2^2+\frac{1}{\sqrt{n_k}}R^{k,\lambda},
\end{equation}
where $R^{k,\lambda}\sim\mathcal{N}(0,\tau^2)$, for some pre-specified parameter $\tau$; and $R^{k,\lambda}$ are independent across different $\lambda$ and $k$ and independent of the data. Note the scale of randomization is the same as the scale of the Gaussian random variable in $Err(\lambda)$. Let us denote the joint distribution of the added randomizations $\left\{R^{k,\lambda}:1\leq k\leq K, \lambda\in\Lambda\right\}$ as $\mathbb{F}_R$ and the sum of the randomized cross-validation errors across folds is defined as 
\begin{equation*}
	Err_R(\lambda) = \sum_{k=1}^K Err_R(\lambda, k).
\end{equation*}	
Concatenating $Err_R(\lambda)$, $\lambda\in\Lambda$, into a vector, we call
$Err_R=\left(Err_R(\lambda_1),\ldots, Err_R(\lambda_L)\right)\in\mathbb{R}^{|\Lambda|}$  a \textit{randomized CV error curve}.

According to the above derivation, the added randomization is distributed as Gaussian and is of order $\tau\sim O(1)$. Also, the differences between CV errors across different $\lambda$'s do not cancel out since $R^{k,\lambda}$ are generated independently. Therefore, $Err_R$ is asymptotically jointly Gaussian now. Figure \ref{fig:qq_plots} illustrates this phenomenon for data generated from a null model, i.e.~true model does not contain any variable. The differences between the two coordinates of the randomized CV error curve are much closer to being Gaussian.

\begin{figure}[h!]
  \centering
    \includegraphics[width=0.8\textwidth]{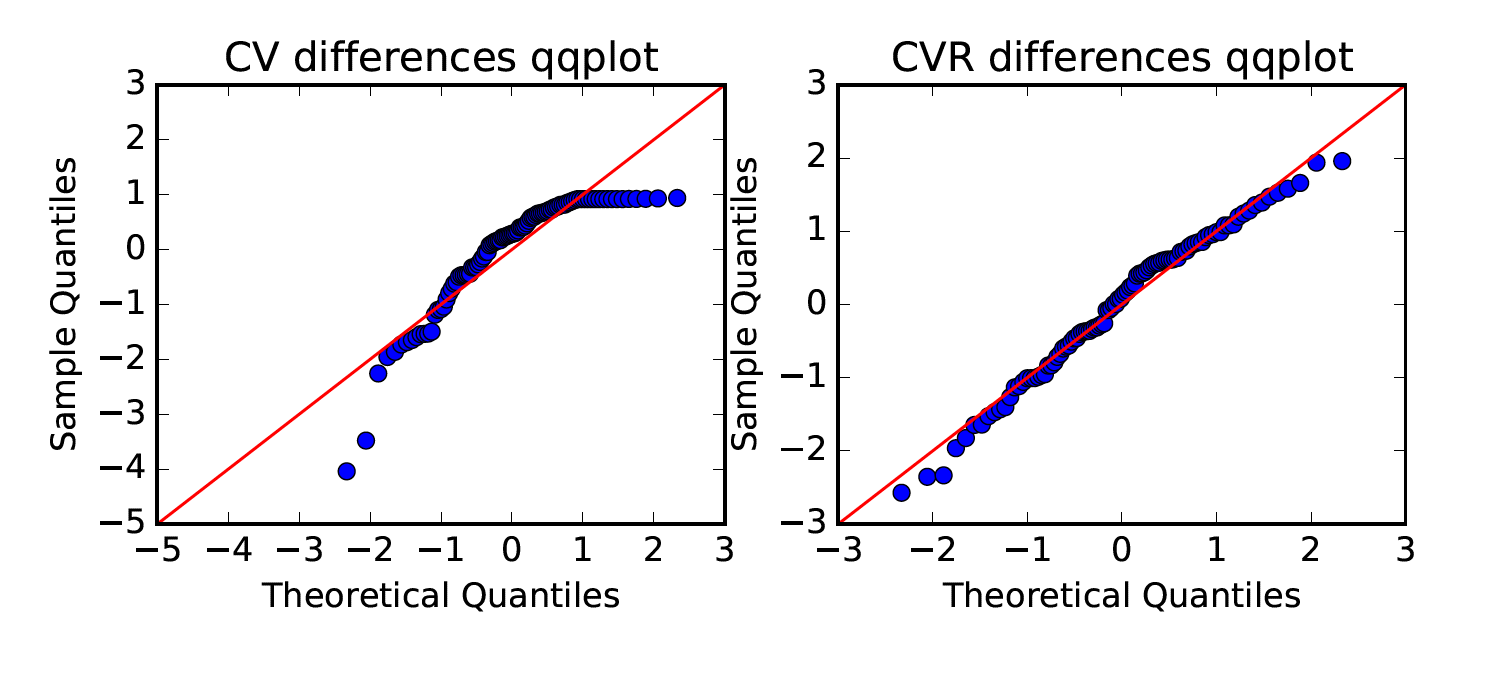}
    \caption{QQ plots of the differences between two coordinates of CV error curve (left) and randomized CV error curve (right), both compared with normal quantiles. }
    \label{fig:qq_plots}
\end{figure}

\begin{remarks}
\begin{itemize}[leftmargin=*]
\item[--]
	In an independent recent work, \citealt{rinaldo2016bootstrapping} proposed a similar randomized construction for establishing a CLT for their LOCO parameter. At the time when cross-validation part of this paper was developed, we were not aware of their results.
\item[--]
	Note that $Err_R$ is unbiased for the CV error. Using the results of \citealt{cv_risk_consistency, dudoit2005asymptotics}, we can also show that $Err_R(\lambda_R^{cv})$ is risk consistent, under similar assumptions. Also, \cite{dudoit2005asymptotics} derived the asymptotic normality of properly scaled and centered cross-validated risk estimator $Err(\lambda^{cv})$, where the centering is around the conditional risk that marginalizes over the validation set and conditions on the training set (assuming we have two folds). However, we need the centered and scaled version of $Err$ to be jointly Gaussian across $\lambda$ values. Furthermore, we want the centering term to be $\mathbb{E}[Err]$, where the expectation marginalizes over the whole dataset.
\end{itemize}
\end{remarks}

Now we choose $\lambda_R^{cv}=\lambda_{r^*}$, $1\leq r^*\leq L$,  where
\begin{equation} \label{eq:cv:random}
	\lambda_R^{cv} = \textnormal{arg}\underset{\lambda\in\Lambda}{\min}\: Err_R(\lambda).
\end{equation}
This $\lambda=\lambda_R^{cv}$ value is then used in the Lasso objective. Recall that cross-validation by itself (without added randomization) is a randomized procedure since it chooses folds randomly. The added randomization is usually of a small order, and it does not affect the value of the minimizer much. On the other hand, it adds enough to the existing variance of $Err(\lambda)$ so that the resulting curve is jointly Gaussian. In the next subsection, we provide rigorous proof of the joint Gaussianity of the data vector and the randomized CV curve.

\subsection{Proving randomized CV curve is asymptotically Gaussian}

We consider the random $X$ and fixed $p$ setting. We start by providing a theorem for general training and testing data. Given this theorem, the main conclusion will follow. We assume data consists of
	 $n$ i.i.d.~observations $(x_i, y_i)\in\mathbb{R}^p\times\mathbb{R}$, $ 1\leq i \leq n$ from distribution $\mathbb{F}$ (does not change with $n$ for simplicity). We split data into disjoint training and test set.
	 Let us denote the training set of size $n_1$ as $(X^{train}, y^{train})\sim\mathbb{F}^{n_1}$ and an independent test set of size $n_2=\Theta(n_1)$ as $(X^{test}, y^{test})\sim\mathbb{F}^{n_2}$. Denote the Lasso estimator $\hat{\beta}^{train}(\lambda)$ we get by solving Lasso objective on $(X^{train}, y^{train})$ for a particular penalty level $\lambda$. 
Given this setting, the assumptions we need are as follows.	 
\begin{itemize}
	\item \textbf{Consistency assumption:} We assume that for each $\lambda\in\Lambda$, the Lasso estimator $\hat{\beta}^{train}(\lambda)$ is consistent for some parameter $\beta_0(\lambda)$ at the rate $n_1^{1/4}$, i.e.~
	\begin{equation*}
		\sqrt{n_1}\left\|\hat{\beta}^{train}(\lambda)-\beta_0(\lambda)\right\|_2^2\overset{P}{\rightarrow} 0
	\end{equation*}
	as $n_1\rightarrow\infty$. Note we assume neither the linear model is true nor there exists any relationship among the parameters $\beta_0(\lambda)$, $\lambda\in\Lambda$.
	
	\item \textbf{Moment assumption:} We further assume the moment conditions under $(x_1,y_1)\sim\mathbb{F}$ as follows:
	\begin{itemize}
	\item $\sigma^2(\lambda)=\textnormal{Var}\left(\left(y_1-x_1^\top\beta_0(\lambda)\right)^2\right)<\infty$,  
	\item $\mathbb{E}\left[\left\|x_1\right\|_2^2(y_1-x_1^\top\beta_0(\lambda))^2\right]<\infty$ and
	\item $\mathbb{E}\left[\left\|x_1\right\|_2^2\right]<\infty$.
	\end{itemize}
\end{itemize}

\begin{theorem}[Normality of the test error] \label{thm:test:error:fixed:p} Suppose the consistency and moment assumptions above hold. Then for all $\lambda\in\Lambda$, we have
\begin{equation*}
	\frac{1}{\sqrt{n_2}}\left(\left\|y^{test}-X^{test}\hat{\beta}^{train}(\lambda)\right\|_2^2 - n_2\mu(\lambda)\right)\overset{d}{\rightarrow}\mathcal{N}\left(0,\sigma^2(\lambda)\right)
\end{equation*}
as $n_1,n_2\rightarrow\infty$, where $\mu(\lambda)=\mathbb{E}\left[\left(y_1-x_1^\top\beta_0(\lambda)\right)^2\right]$.
\end{theorem}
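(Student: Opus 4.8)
The plan is to freeze the training data, expand the squared test error around $\beta_0(\lambda)$, apply the classical (Lindeberg--L\'evy) CLT to the leading i.i.d.\ term, and show the two remainder terms vanish on the $\sqrt{n_2}$ scale. Write $b=\hat\beta^{train}(\lambda)$; conditionally on $(X^{train},y^{train})$ this is a fixed vector, while the test observations $(x_i,y_i)$ (index $i$ over the test set) remain i.i.d.\ from $\mathbb{F}$ and independent of $b$. Expanding each test residual around $\beta_0(\lambda)$ gives
\begin{equation*}
\left\|y^{test}-X^{test}b\right\|_2^2 - n_2\mu(\lambda)
= S_1 - 2(b-\beta_0(\lambda))^\top\Big(\sum_i (y_i-x_i^\top\beta_0(\lambda))x_i\Big) + (b-\beta_0(\lambda))^\top\Big(\sum_i x_ix_i^\top\Big)(b-\beta_0(\lambda)),
\end{equation*}
where $S_1=\sum_i\big[(y_i-x_i^\top\beta_0(\lambda))^2-\mu(\lambda)\big]$ and all sums run over the test indices; call the last two terms $S_2$ and $S_3$.

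For $S_1/\sqrt{n_2}$ the summands are i.i.d.\ with mean zero and variance $\sigma^2(\lambda)<\infty$, so the CLT gives $S_1/\sqrt{n_2}\overset{d}{\rightarrow}\mathcal{N}(0,\sigma^2(\lambda))$ (note $S_1$ does not involve $b$ at all). For $S_3$, bound $|S_3|\le\big(\sum_i\|x_i\|_2^2\big)\|b-\beta_0(\lambda)\|_2^2$ and use $\frac1{n_2}\sum_i\|x_i\|_2^2\overset{P}{\rightarrow}\mathbb{E}\|x_1\|_2^2<\infty$; hence $S_3/\sqrt{n_2}=O_P(\sqrt{n_2})\,\|b-\beta_0(\lambda)\|_2^2$, which is $o_P(1)$ because $n_2=\Theta(n_1)$ and the consistency assumption states precisely $\sqrt{n_1}\|b-\beta_0(\lambda)\|_2^2\overset{P}{\rightarrow}0$ --- this is exactly where the $n_1^{1/4}$ rate gets consumed.

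The cross term $S_2$ is the delicate one, and I expect it to be the main obstacle. Decompose $\frac1{n_2}\sum_i(y_i-x_i^\top\beta_0(\lambda))x_i=c(\lambda)+\Delta_{n_2}$, where $c(\lambda)=\mathbb{E}\big[(y_1-x_1^\top\beta_0(\lambda))x_1\big]$ and $\Delta_{n_2}=O_P(n_2^{-1/2})$ by the CLT (the variance being finite because $\mathbb{E}\big[\|x_1\|_2^2(y_1-x_1^\top\beta_0(\lambda))^2\big]<\infty$), so that
\begin{equation*}
S_2/\sqrt{n_2} = -2\sqrt{n_2}\,(b-\beta_0(\lambda))^\top c(\lambda) - 2\sqrt{n_2}\,(b-\beta_0(\lambda))^\top\Delta_{n_2}.
\end{equation*}
The second piece is $O_P(\sqrt{n_2})\cdot o_P(n_1^{-1/4})\cdot O_P(n_2^{-1/2})=o_P(1)$, but the first piece is only $o_P(n_1^{1/4})$ a priori and so need not vanish unless $c(\lambda)$ is annihilated by $b-\beta_0(\lambda)$. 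This is where one must exploit the population first-order (stationarity / normal-equation) characterization of $\beta_0(\lambda)$: on the high-probability event that the estimated support agrees with the population one, $b-\beta_0(\lambda)$ is supported on the coordinates where the restricted normal equations force the corresponding entries of $c(\lambda)$ to be zero, so $(b-\beta_0(\lambda))^\top c(\lambda)$ contributes nothing. Pinning down this selection-stability from the hypotheses already in force is, I expect, the crux; everything else is routine.

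Granting that, Slutsky's lemma yields $\frac1{\sqrt{n_2}}\big(\|y^{test}-X^{test}\hat\beta^{train}(\lambda)\|_2^2-n_2\mu(\lambda)\big)=S_1/\sqrt{n_2}+o_P(1)\overset{d}{\rightarrow}\mathcal{N}(0,\sigma^2(\lambda))$ in the joint probability space of training and test data; conditioning on the training set was used only to treat $b$ as fixed while bounding $S_2,S_3$ and to invoke independence of the test sample, and since the limit law does not depend on the training data no separate un-conditioning step is needed (a subsequence plus bounded-convergence argument on the conditional c.d.f.\ would give the same conclusion if one prefers). The reasoning is identical for every $\lambda\in\Lambda$.
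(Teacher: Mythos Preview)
Your decomposition into $S_1,S_2,S_3$ and your handling of $S_1$ (CLT on i.i.d.\ centered squares) and $S_3$ (Cauchy--Schwarz plus the $n_1^{1/4}$ consistency rate) match the paper exactly.

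Where you diverge is on the cross term $S_2$. The paper dispatches it in one stroke via Cauchy--Schwarz,
\[
\frac{1}{\sqrt{n_2}}\,|S_2|
\;\le\;\frac{1}{\sqrt{n_2}}\big\|{X^{test}}^{\top}\epsilon^{test}(\lambda)\big\|_2\cdot\big\|\hat\beta^{train}(\lambda)-\beta_0(\lambda)\big\|_2,
\]
and then uses the moment assumption $\mathbb{E}\big[\|x_1\|_2^2(y_1-x_1^\top\beta_0(\lambda))^2\big]<\infty$ together with the LLN to claim $\frac{1}{n_2}\|{X^{test}}^{\top}\epsilon^{test}(\lambda)\|_2^2=O_P(1)$, so the bound is $O_P(1)\cdot o_P(1)=o_P(1)$ using only plain consistency of $\hat\beta^{train}$ (the $n_1^{1/4}$ rate is spent entirely on $S_3$). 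There is no split into $c(\lambda)+\Delta_{n_2}$ and no appeal to selection stability.

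Your proposed repair for the $c(\lambda)$ piece --- that on the event of support agreement the restricted normal equations kill the relevant entries of $c(\lambda)$ --- is not supported by the stated hypotheses and is not correct as written: the theorem assumes nothing about support recovery, and if $\beta_0(\lambda)$ is the population Lasso minimizer its KKT conditions give $c(\lambda)_j=\lambda\,\textnormal{sign}(\beta_0(\lambda)_j)\neq 0$ on the active set, so $(b-\beta_0(\lambda))^\top c(\lambda)$ does not vanish by that mechanism. Your underlying worry about $c(\lambda)$ is in fact legitimate --- the paper's LLN step for $\frac{1}{n_2}\|{X^{test}}^\top\epsilon^{test}\|_2^2$ likewise tacitly needs $c(\lambda)=0$ --- but the selection-stability detour is neither what the paper does nor a valid fix from the given assumptions. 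Drop it and mirror the paper's one-line Cauchy--Schwarz bound; any residual concern about $c(\lambda)$ is shared by the paper's own proof.
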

\noindent The proof of the theorem above is given in Section \ref{sec:proofs} in the appendix.

Let us turn to the cross-validation error vector. Recall that for each of the $k=1,\ldots, K$ folds, data is split in two disjoint sets $(X^k, y^k)$ and its complement $(X^{-k}, y^{-k})$ of sizes $n_k=O(n/K)$ and $n-n_k$ respectively. We can apply Theorem \ref{thm:test:error:fixed:p} for each of these $K$ data splits. The conclusion, summarized in the following corollary, implies that the randomized CV error curve is asymptotically jointly Gaussian for all $\lambda$ in the given range.
The proof of the corollary is in Section \ref{sec:proofs} in the appendix.

\begin{corollary}[Randomized CV curve is asymptotically Gaussian] \label{cor:randomized:cv:gaussian}
	Under the consistency and moment assumptions above, we have
	\begin{equation*}
		\sqrt{n}\left(Err_R-\mathbb{E}_{\mathbb{F}\times\mathbb{F}_R}[Err_R]\right) \overset{d}{\rightarrow} \mathcal{N}\left(0,\Sigma_{Err_R}\right)
	\end{equation*}
	as $n\rightarrow\infty$, for some invertible covariance matrix $\Sigma_{Err_R}\in\mathbb{R}^{|\Lambda|\times|\Lambda|}$.
\end{corollary}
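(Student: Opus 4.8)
\medskip
\noindent\textbf{Proof strategy.}
The plan is to derive Corollary~\ref{cor:randomized:cv:gaussian} from Theorem~\ref{thm:test:error:fixed:p}, applied once to each of the $K$ folds, by upgrading the marginal (one-$\lambda$, one-fold) limits to a single joint multivariate CLT and then adding on the independent Gaussian randomizations. Fix a fold $k$: its training set $(X^{-k},y^{-k})$ has size $n-n_k=\Theta(n)$ and its test set $(X^k,y^k)$ has size $n_k=\Theta(n)$, so the hypotheses of Theorem~\ref{thm:test:error:fixed:p} hold and $\sqrt{n_k}(Err(\lambda,k)-\mu(\lambda))\overset{d}{\to}\mathcal{N}(0,\sigma^2(\lambda))$ for each $\lambda$. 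What I actually want to extract from the proof of that theorem is the asymptotically linear representation underlying this limit, namely
\begin{equation*}
\sqrt{n_k}\left(Err(\lambda,k)-\mu(\lambda)\right)=\frac{1}{\sqrt{n_k}}\sum_{i\in\,\textnormal{fold }k}\left[(y_i-x_i^\top\beta_0(\lambda))^2-\mu(\lambda)\right]+o_P(1),
\end{equation*}
where the remainder collects the cross term in $\delta=\hat\beta^{-k}(\lambda)-\beta_0(\lambda)$ and the quadratic term in $\delta$, both negligible by the consistency assumption $\sqrt{n_1}\|\delta\|_2^2\to 0$ and the moment assumptions; since $\Lambda$ is finite, the remainder is $o_P(1)$ simultaneously over all $\lambda\in\Lambda$.

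Given this representation, joint asymptotic normality of the within-fold vector $(Err(\lambda,k))_{\lambda\in\Lambda}$ follows from the ordinary multivariate CLT for the i.i.d.\ vectors $((y_i-x_i^\top\beta_0(\lambda))^2)_{\lambda\in\Lambda}$, $i$ in fold $k$, whose covariance is finite by the moment assumptions and Cauchy--Schwarz, combined with the Cram\'er--Wold device and Slutsky's lemma to absorb the $o_P(1)$ remainders. To combine the $K$ folds, note that they are disjoint and the observations are i.i.d., so the \emph{leading} i.i.d.-sum terms of distinct folds are independent --- each depends on its own fold only, since $\beta_0(\lambda)$ is deterministic --- while all remainders are $o_P(1)$; hence the stacked vector $(\sqrt{n_k}(Err(\lambda,k)-\mu(\lambda)))_{k,\lambda}$ converges to a Gaussian with covariance block-diagonal in $k$. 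Rescaling the $k$-th block by $\sqrt{n/n_k}$ (which tends to a constant since $n_k=\Theta(n)$) and summing over $k$ gives $\sqrt{n}(Err-\mathbb{E}_{\mathbb{F}}[Err])\overset{d}{\to}\mathcal{N}(0,\Sigma_{Err})$ with $\Sigma_{Err}$ positive semidefinite; here I have also used that $\mathbb{E}_{\mathbb{F}}[Err(\lambda)]$ differs from the deterministic centering $\sum_k\mu(\lambda)$ only by a sum of excess risks of order $O(\mathbb{E}\|\delta\|_2^2)=o(1/\sqrt{n})$ (the consistency assumption together with uniform integrability from the moment conditions), so the two centerings are interchangeable after the $\sqrt{n}$ scaling.

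For the randomized curve, $Err_R(\lambda,k)=Err(\lambda,k)+n_k^{-1/2}R^{k,\lambda}$ with the $R^{k,\lambda}$ i.i.d.\ $\mathcal{N}(0,\tau^2)$ and independent of the data, so
\begin{equation*}
\sqrt{n}\left(Err_R-\mathbb{E}_{\mathbb{F}\times\mathbb{F}_R}[Err_R]\right)=\sqrt{n}\left(Err-\mathbb{E}_{\mathbb{F}}[Err]\right)+\sqrt{n}\sum_{k=1}^K n_k^{-1/2}R^{k,\cdot}.
\end{equation*}
The second summand is an exact finite sum of independent centered Gaussian vectors, with diagonal covariance $\left(\tau^2 n\sum_k n_k^{-1}\right)I$ --- a strictly positive multiple of the identity since $n_k=\Theta(n)$ --- and it is independent of the first summand. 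By Slutsky the whole expression converges to $\mathcal{N}(0,\Sigma_{Err_R})$ with $\Sigma_{Err_R}=\Sigma_{Err}+\sigma_R^2 I$ for some $\sigma_R^2>0$; since $\Sigma_{Err}$ is positive semidefinite, $\Sigma_{Err_R}$ is strictly positive definite and therefore invertible, which is exactly what the corollary claims. I expect the crux to be the first step: pulling a clean influence-function expansion of the test error out of the proof of Theorem~\ref{thm:test:error:fixed:p} and confirming that its remainder --- which involves the training portion of the data and so a priori couples different folds --- is genuinely $o_P(1)$ after the $\sqrt{n}$ normalization; the centering reconciliation above is a related but more routine loose end.
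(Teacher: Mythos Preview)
Your proposal is correct and follows essentially the same route as the paper: extract from the proof of Theorem~\ref{thm:test:error:fixed:p} the asymptotically linear representation
\[
\sqrt{n_k}\bigl(Err(\lambda,k)-\mu(\lambda)\bigr)=\frac{1}{\sqrt{n_k}}\sum_{i\in\text{fold }k}\bigl[(y_i-x_i^\top\beta_0(\lambda))^2-\mu(\lambda)\bigr]+o_P(1),
\]
stack over $\lambda\in\Lambda$ and $k=1,\ldots,K$, apply the multivariate CLT to the i.i.d.\ summands, and then add on the independent Gaussian randomizations. The paper's proof is terser --- it writes out the same representation and then simply says ``by the CLT'' --- so your write-up is in fact more explicit on three points the paper glosses over: the Cram\'er--Wold/Slutsky packaging of the joint limit, the reconciliation between the deterministic centering $K\mu(\lambda)$ and the stated centering $\mathbb{E}_{\mathbb{F}\times\mathbb{F}_R}[Err_R]$, and the invertibility of $\Sigma_{Err_R}$ via the $\sigma_R^2 I$ contribution from the randomization. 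One small caveat: your centering reconciliation appeals to uniform integrability of $\sqrt{n}\|\delta\|_2^2$, which is not strictly implied by the stated in-probability consistency and moment assumptions on $(x_1,y_1)$; the paper does not address this either, so you are not falling short of its standard, but if you want the argument airtight you should either add a mild $L^1$ version of the consistency assumption or note the gap.
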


\begin{remarks}
\begin{itemize}[leftmargin=*]
\item[--] Here is some intuition on why when doing inference for Lasso with randomized cross-validation, in general, we cannot ignore the adjustment for cross validation and adjust only for the selection event of Lasso. 

When writing constraints coming from cross-validation in the selection event, we look at $\lambda_R^{cv}= \textnormal{arg}\underset{\lambda\in\Lambda}{\min}Err_R(\lambda)$. This event can also be described in terms of the vector of differences $\left(Err_R(\lambda_R^{cv})-Err_R(\lambda):\lambda\in\Lambda\right)$. In other words, it is equivalent to describe the constraints as: $\left\{Err_R(\lambda_R^{cv})-Err_R(\lambda)\leq 0:\lambda\in\Lambda\right\}$. 
Arguments below show that, under the assumptions of this section, $Err_R(\lambda_1)-Err_R(\lambda_2)$ is not in general independent of the data for two given values $\lambda_1,\lambda_2\in\Lambda$ and this implies that the differences of the randomized CV errors $Err(\lambda)$ across $\lambda$ values are not independent of the data.

From the proof of Theorem \ref{thm:test:error:fixed:p}, we have that 
\begin{equation*}
	\frac{1}{\sqrt{n_k}}\left(\left\|y^k-X^k\hat{\beta}^{-k}(\lambda)\right\|_2^2 -n_k\mu(\lambda)\right)=\frac{1}{\sqrt{n}_k}\left(\left\|\epsilon^k(\lambda)\right\|_2^2-n_k\mu(\lambda)\right)+o_P(1),
\end{equation*}	
where $\epsilon^k(\lambda)=\left(\epsilon^k_1(\lambda),\ldots, \epsilon^k_{n_k}(\lambda)\right)=y^k-X^k\beta_0(\lambda)\in\mathbb{R}^{n_k}$. It is worth noting that the random variables $\epsilon^k_i(\lambda)$ are i.i.d.~across $i=1,\ldots, n_k$ and $k=1,\ldots, K$ for each $\lambda$. In general, the parameters $\beta_0(\lambda)$ are not equal across different $\lambda\in\Lambda$ values, thus the random variables $\epsilon^k_i(\lambda)$ are neither equal nor identically distributed across $\lambda\in\Lambda$ for fixed $k=1,\ldots, K$ and $i=1,\ldots, n_k$. For each $\lambda\in\Lambda$, the randomized quantity $Err_R(\lambda)$ becomes asymptotically the scaled and centered sum of i.i.d.~terms $\epsilon_i^k(\lambda)^2$, where the sum is across $i=1,\ldots, n_k$ and $k=1,\ldots, K$, and randomization. Thus the difference $Err_R(\lambda_1)-Err_R(\lambda_2)$ across two values $\lambda_1,\lambda_2\in\Lambda$ is not independent of the data $(X,y)$ as the residual terms $\epsilon_i^k(\lambda_1)^2$ and $\epsilon_i^k(\lambda_2)^2$ do not cancel in general.

\item[--] Under additional assumptions, e.g.~assuming $\beta_0(\lambda)$ are equal across all $\lambda\in\Lambda$ values, we have that the differences $Err_R(\lambda_1)-Err_R(\lambda_2)$ consist of only the added randomization. In this case, the differences between the randomized cross-validation errors $Err_R(\lambda)$ across $\lambda$ values are independent of the data. Therefore, under these strict assumptions it is possible to ignore adjusting for randomized cross-validation. To do valid inference we would adjust only for the Lasso selection event with the penalty chosen based on $\lambda_R^{cv}$. Note that we still choose the penalty level for the Lasso based on the vector of randomized cross-validation errors. Further adjusting for the randomized cross-validation, however, is a more robust approach, requiring less assumptions.
\end{itemize}
\end{remarks}

In order to prove the joint normality of the data vector $D$ and randomized CV curve, we state the following mild assumptions.
\begin{itemize}
	\item $\frac{1}{n}X^\top X\overset{P}{\rightarrow} \mathbb{E}\left[\frac{1}{n}X^\top X\right]$ and $\left(\frac{X^\top X}{n}\right)^{-1}\overset{P}{\rightarrow}\mathbb{E}\left[\left(\frac{X^\top X}{n}\right)^{-1}\right]$ as $n\rightarrow\infty$.
	\item $\mathbb{E}_{(x_1,y_1)\sim\mathbb{F}}\left[\left\|x_1\right\|_2^2\left(y_1-x_1^\top \beta_E^*\right)\right]<\infty$.
\end{itemize}

\begin{corollary}[Joint asymptotic normality of data and CV curve] 
\label{cor:joint:normality} Under the assumptions above, we have that pre-selection 
\begin{equation*}
	\sqrt{n}\left(\begin{pmatrix}	\bar{\beta}_E \\ \frac{1}{n}X_{-E}^\top(y-X_E\bar{\beta}_E) \\ Err_R \end{pmatrix} - \begin{pmatrix}
		\beta_E^* \\ \frac{1}{n}\mathbb{E}\left[X_{-E}^\top\left(y-X_E\beta_E^*\right)\right] \\ \mathbb{E}\left[Err_R\right]
	\end{pmatrix} \right) \overset{d}{\rightarrow}\mathcal{N}\left(0,\Sigma\right)
\end{equation*}
as $n\rightarrow\infty$ for some joint covariance matrix $\Sigma$.
\end{corollary}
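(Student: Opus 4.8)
The plan is to combine the asymptotic normality of the data vector $D=\bigl(\bar\beta_E,\;\tfrac1n X_{-E}^\top(y-X_E\bar\beta_E)\bigr)$ with Corollary~\ref{cor:randomized:cv:gaussian} (normality of $Err_R$) via the Cramér--Wold device, after putting both blocks into a common asymptotically linear form in the i.i.d.\ observations. First I would record the standard asymptotically linear expansion of the data vector: writing $\sqrt n(\bar\beta_E-\beta_E^*)=\bigl(\tfrac1n X_E^\top X_E\bigr)^{-1}\tfrac1{\sqrt n}X_E^\top(y-X_E\beta_E^*)$ and $\sqrt n\bigl(\tfrac1n X_{-E}^\top(y-X_E\bar\beta_E)-\tfrac1n\mathbb E[X_{-E}^\top(y-X_E\beta_E^*)]\bigr)=\tfrac1{\sqrt n}X_{-E}^\top(y-X_E\beta_E^*)-\bigl(\tfrac1n X_{-E}^\top X_E\bigr)\sqrt n(\bar\beta_E-\beta_E^*)+o_P(1)$, and using the first of the assumptions stated just before the corollary ($\tfrac1n X^\top X\to\mathbb E[\tfrac1n X^\top X]$ and the analogue for its inverse) together with the moment condition on $\|x_1\|_2^2(y_1-x_1^\top\beta_E^*)$, Slutsky's lemma reduces $\sqrt n(D-\mu_D)$ to a fixed, data-independent linear image of the i.i.d.\ average $\tfrac1{\sqrt n}\sum_{i=1}^n\bigl(x_{i,E}(y_i-x_i^\top\beta_E^*),\,x_{i,-E}(y_i-x_i^\top\beta_E^*)\bigr)$ plus $o_P(1)$.

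Next I would invoke the representation of the randomized CV curve underlying Corollary~\ref{cor:randomized:cv:gaussian} (equivalently, the display in the proof of Theorem~\ref{thm:test:error:fixed:p} and in the remarks following that corollary): for each $\lambda\in\Lambda$, $\sqrt n\bigl(Err_R(\lambda)-\mathbb E_{\mathbb F\times\mathbb F_R}[Err_R(\lambda)]\bigr)$ equals, up to $o_P(1)$, a scaled centered sum over $i=1,\dots,n$ of the i.i.d.\ terms built from $\bigl(y_i-x_i^\top\beta_0(\lambda)\bigr)^2$ (the test-error contribution of the fold containing observation $i$; the training-dependent parts are $o_P(1)$ by the $n_1^{1/4}$-consistency assumption, exactly as in the proof of Theorem~\ref{thm:test:error:fixed:p}) plus the independent Gaussian term $\tfrac1{\sqrt n}\sum_{k,\lambda}\sqrt{n/n_k}\,R^{k,\lambda}$ coming from the $\tfrac1{\sqrt{n_k}}R^{k,\lambda}$ in \eqref{eq:cv:randomizing:residuals}. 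Crucially, the residuals $y_i-x_i^\top\beta_E^*$ driving $D$ and the residuals $y_i-x_i^\top\beta_0(\lambda)$ driving $Err_R$ are functions of the \emph{same} observation $(x_i,y_i)$, so the two blocks are coupled through a single i.i.d.\ average plus an independent (finite-dimensional) Gaussian randomization average.

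With both blocks in this form, the argument finishes by Cramér--Wold. Fix $a\in\mathbb R^p$ and $b\in\mathbb R^{|\Lambda|}$; then $a^\top\sqrt n(D-\mu_D)+b^\top\sqrt n(Err_R-\mathbb E[Err_R])=\tfrac1{\sqrt n}\sum_{i=1}^n\xi_i+\tfrac1{\sqrt n}\sum_{k,\lambda}\eta_{k,\lambda}+o_P(1)$, where the $\xi_i$ are i.i.d.\ mean-zero (a fixed linear functional of the $D$-scores and the $Err_R$-scores at observation $i$), of finite variance under the stated moment assumptions ($\sigma^2(\lambda)<\infty$, $\mathbb E[\|x_1\|_2^2(y_1-x_1^\top\beta_0(\lambda))^2]<\infty$, $\mathbb E[\|x_1\|_2^2]<\infty$, and the two assumptions listed before the corollary), and the $\eta_{k,\lambda}$ are mean-zero Gaussians independent of the data. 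The classical i.i.d.\ CLT applied to $\tfrac1{\sqrt n}\sum\xi_i$, normality of the fixed finite collection of Gaussian randomization terms, and independence of data from randomization give asymptotic univariate normality for every $(a,b)$; Cramér--Wold then yields the claimed joint convergence, with $\Sigma$ the limiting covariance of $\xi_1$ plus the randomization scores.

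The main obstacle is bookkeeping the fold structure: an individual observation $(x_i,y_i)$ is a training point in $K-1$ of the folds and a test point in exactly one, so $Err_R$ is not literally a sum of independent functions of the $(x_i,y_i)$ until one shows that the training-dependent contributions vanish in probability after $\sqrt n$ scaling — this is precisely where the $n_1^{1/4}$-consistency of $\hat\beta^{train}(\lambda)$, uniformly over the finite grid $\Lambda$, is used, carried over from the proof of Theorem~\ref{thm:test:error:fixed:p}. Everything else — the Slutsky reductions, the finite-variance check, and assembling the cross-covariances into $\Sigma$ — is routine.
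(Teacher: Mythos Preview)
Your proposal is correct and follows essentially the same approach as the paper's own proof: both arguments reduce each block of the joint vector to a centered sum of i.i.d.\ functions of $(x_i,y_i)$ (plus the independent Gaussian randomization terms for $Err_R$) up to $o_P(1)$, via Slutsky-type replacements of $\tfrac1n X_E^\top X_E$ and $\tfrac1n X_{-E}^\top X_E$ by their population limits, and then conclude joint normality. The paper's proof is terser---it simply says ``the joint normality follows'' once both blocks are in i.i.d.-sum form---whereas you spell out the Cram\'er--Wold step and the fold-structure bookkeeping explicitly; but the underlying idea is identical.
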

The corollary above is proved in Section \ref{sec:proofs} in the appendix.	

\subsection{Inference with randomized cross-validation}

We have shown the joint Gaussianity of the data vector and the randomized CV curve. In this section, we describe the selection event coming from both model selection (Lasso) and randomized cross-validation in detail. Furthermore, we provide pivots that are valid post-selection and thus lead to valid selective inference.

In addition to the constraint coming from the Lasso, the constraint coming from cross-validation is 
\begin{equation} \label{eq:cv:affine:randomized}
	Err_R\in\mathcal{E}_{r^*}=\left\{Err_R'\in\mathbb{R}^L:B_{r^*}\cdot Err_R'\leq 0\right\},
\end{equation}
given $r^*$. Denoting
\begin{equation*}
	\widetilde{A} = \begin{pmatrix}	A & 0 \\ 0 &  B_{r^*}, \end{pmatrix} \;\; \tilde{a}_n=\begin{pmatrix} a_n \\ 0 \end{pmatrix},
\end{equation*}
 the selection event in terms of the joint vector $\widetilde{D}=(D,Err_R)\in\mathbb{R}^{p+L}$ becomes 
\begin{equation*}
	\widetilde{A}\cdot\widetilde{D} \leq \tilde{a}_n. 
\end{equation*}

Recall that in order to do inference for a parameter of interest $\theta$ using a target statistic $T=T\left((X,y),E\right)$ we need to rewrite the selection event in terms of $T$ only. After randomization, as long as vectors $T$, $D$ and $Err_R$ are jointly Gaussian pre-selection, i.e.~under $\left((X,y),R\right)\sim\mathbb{F}_n\times\mathbb{F}_R$ and fixed $E$ 
\begin{equation} \label{eq:cv:joint:asymptotically:normal:vector}
	\begin{pmatrix} T\left((X,y),E\right) \\ D\left((X,y),E\right) \\ Err_R  
	\end{pmatrix} \overset{d}{\rightarrow}
	\mathcal{N}\left(\begin{pmatrix} \theta \\ \mu_D \\ \mu_{Err_R}  \end{pmatrix},
	\begin{pmatrix} 
	\Sigma_T   & \Sigma_{T,D} & \Sigma_{T,Err_R} \\
	\Sigma_{D,T} & \Sigma_D & \Sigma_{D,Err_R} \\ 
	\Sigma_{Err_R, T} & \Sigma_{Err_R, D} & \Sigma_{Err_R} 
	\end{pmatrix} \right)
\end{equation}
as $n\rightarrow\infty$, we can decompose, $Err_R$ and $\widetilde{D}$ in terms of $T$. In order to do valid inference, we need the following two results. The following proposition rewrites the selection region in terms of $T$. Theorem \ref{thm:valid:pivot} justifies that conditioning on the components orthogonal to $T$ and the selection region in terms of $T$ creates an asymptotically valid pivot.

\begin{proposition}[Selection event for Lasso with randomized cross-validation] \label{thm:selevent:Lasso:randomizedCV}
Defining $N_{Err_R}=Err_R-\Sigma_{Err_R,T}\Sigma_T^{-1}T$ and $N_D=D-\Sigma_{D,T}\Sigma_T^{-1}T$, the selection event of Lasso with randomized cross-validation can be described as $T\in\widetilde{S}_T$ for
\begin{equation*}
	\widetilde{\mathcal{S}}_T = \left\{ T'\in\mathbb{R}^{\textnormal{dim}(\theta)}:\begin{pmatrix} A\cdot\Sigma_{D,T}\Sigma_T^{-1}\\ B_{r^*}\cdot\Sigma_{Err_R,T}\Sigma_T^{-1} \end{pmatrix} T'\leq \begin{pmatrix} a_n-AN_D\\ -B_{r^*}\cdot N_{Err_R}\end{pmatrix}\right\}.
\end{equation*}
\end{proposition}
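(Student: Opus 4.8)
The plan is to obtain $\widetilde{\mathcal{S}}_T$ by a direct linear reparametrization of the affine event $\widetilde{A}\widetilde{D}\le\tilde{a}_n$, substituting the Gaussian regression decomposition of $\widetilde{D}=(D,Err_R)$ onto $T$. First I would invoke Corollary~\ref{cor:joint:normality}, which gives that $(T,D,Err_R)$ is asymptotically jointly Gaussian pre-selection with invertible marginal covariance $\Sigma_T$. This makes the conditional-mean decompositions $D=\Sigma_{D,T}\Sigma_T^{-1}T+N_D$ and $Err_R=\Sigma_{Err_R,T}\Sigma_T^{-1}T+N_{Err_R}$ well defined, with the residuals $N_D=D-\Sigma_{D,T}\Sigma_T^{-1}T$ and $N_{Err_R}=Err_R-\Sigma_{Err_R,T}\Sigma_T^{-1}T$ uncorrelated with $T$ and hence, in the Gaussian limit, independent of $T$; these are exactly the quantities we condition on.

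Next I would substitute these decompositions block by block into $\widetilde{A}\widetilde{D}\le\tilde{a}_n$, using the block structure $\widetilde{A}=\bigl(\begin{smallmatrix}A&0\\0&B_{r^*}\end{smallmatrix}\bigr)$, $\tilde{a}_n=\bigl(\begin{smallmatrix}a_n\\0\end{smallmatrix}\bigr)$. The top block is the Lasso constraint $AD\le a_n$, which becomes $A\bigl(\Sigma_{D,T}\Sigma_T^{-1}T+N_D\bigr)\le a_n$, i.e. $A\Sigma_{D,T}\Sigma_T^{-1}T\le a_n-AN_D$, recovering \eqref{eq:lasso:constraints:target}. The bottom block is the cross-validation constraint \eqref{eq:cv:affine:randomized}, namely $B_{r^*}Err_R\le 0$, which becomes $B_{r^*}\bigl(\Sigma_{Err_R,T}\Sigma_T^{-1}T+N_{Err_R}\bigr)\le 0$, i.e. $B_{r^*}\Sigma_{Err_R,T}\Sigma_T^{-1}T\le -B_{r^*}N_{Err_R}$. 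Stacking the two blocks and letting $T$ range over $\mathbb{R}^{\textnormal{dim}(\theta)}$ produces precisely the description of $\widetilde{\mathcal{S}}_T$ in the statement.

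Finally I would note that, with $(N_D,N_{Err_R})$ held fixed, the map $T\mapsto\widetilde{D}$ is an affine bijection, so conditioning on $\{\widetilde{D}\in\mathcal{S}_{\widetilde{D}}\}$ jointly with $(N_D,N_{Err_R})$ is equivalent to conditioning on $\{T\in\widetilde{\mathcal{S}}_T\}$ jointly with $(N_D,N_{Err_R})$; no information is gained or lost in passing from the $\widetilde{D}$-parametrization to the $T$-parametrization. There is essentially no analytic obstacle in this proposition: the content is bookkeeping of the block structure of $\widetilde{A}$ and $\tilde{a}_n$. The only points deserving care in the write-up are that the decomposition and the (asymptotic) independence of $N_D,N_{Err_R}$ from $T$ are statements under the limiting law $\Phi$ inherited from Corollary~\ref{cor:joint:normality}, and that invertibility of $\Sigma_T$ — also part of that corollary — is what makes the reparametrization meaningful; the genuinely substantive work (joint Gaussianity) is already done upstream, and the companion Theorem~\ref{thm:valid:pivot} is what turns this geometric description into a valid pivot.
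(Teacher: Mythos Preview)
Your proposal is correct and follows essentially the same route as the paper: substitute the regression decompositions $D=\Sigma_{D,T}\Sigma_T^{-1}T+N_D$ and $Err_R=\Sigma_{Err_R,T}\Sigma_T^{-1}T+N_{Err_R}$ into the two affine blocks $AD\le a_n$ and $B_{r^*}Err_R\le 0$, then stack. The paper's own proof is even terser---it treats the proposition as pure bookkeeping and does not explicitly invoke Corollary~\ref{cor:joint:normality} or discuss the asymptotic independence of the residuals, deferring those points to Theorem~\ref{thm:valid:pivot}; your additional remarks are accurate but not strictly required here.
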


\begin{proof}
 We rewrite the inequality in \eqref{eq:cv:affine:randomized} as 
\begin{equation*}
	B_{r^*}\cdot\left(\Sigma_{Err_R,T}\Sigma_T^{-1}T+N_{Err_R}\right) \leq 0,
\end{equation*}
or equivalently 
\begin{equation} \label{eq:cv:affine:decomposed}
	B_{r^*}\cdot\Sigma_{Err_R,T}\Sigma_T^{-1}T\leq -B_{r^*}N_{Err_R}.
\end{equation}
Combining events \eqref{eq:lasso:affine:constraints} and \eqref{eq:cv:affine:decomposed}, both written in terms of the target statistic $T$, corresponding to the Lasso and cross-validation respectively, we get the form for $\widetilde{S}_T$.
\end{proof}


\begin{theorem}[Pivot valid post-selection for Lasso with randomized cross-validation] \label{thm:valid:pivot}
Assuming \eqref{eq:cv:joint:asymptotically:normal:vector} holds, the selective pivot valid after both cross-validation and model selection is 
\begin{equation*}
	\mathcal{P}\left((T,\widetilde{D});\widetilde{A},\tilde{a}\right)
	= \mathbb{P}_{(Z_T, Z_{\widetilde{D}})\sim\Phi}\left\{\left\|Z_T-\theta\right\|_2\leq \left\|T-\theta\right\|_2\:\Big|\: Z_T\in\widetilde{S}_T, Z_{\widetilde{D}}-\Sigma_{\widetilde{D},T}\Sigma_T^{-1}Z_T=N_{\widetilde{D}} \right\},
\end{equation*}
where $N_{\widetilde{D}}=\begin{pmatrix} N_D \\ N_{Err_R}\end{pmatrix}$ and $\Phi$ corresponds to the Gaussian distribution on the RHS in \eqref{eq:cv:joint:asymptotically:normal:vector}. Furthermore conditional on the selection event,
\[
\mathcal{P}\left((T,\widetilde{D});\widetilde{A},\tilde{a}\right)\sim\text{Unif}\  [0,1].
\]
\end{theorem}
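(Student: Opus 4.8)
The plan is to obtain this theorem as a direct specialization of the general result, Theorem~\ref{thm:nonrandomized:pivot}, applied with the composite data vector $\widetilde{D}=(D,Err_R)\in\mathbb{R}^{p+L}$. That theorem has two hypotheses, \eqref{eq:affine:selection} and \eqref{eq:D:T:gaussian}, so the bulk of the argument is to check that both hold here; the distributional conclusion then follows with essentially no further work, and the displayed pivot is simply the recipe \eqref{eq:nonrandomized:pivot} of the general framework instantiated for this $\widetilde{D}$.

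First I would verify the affine selection hypothesis \eqref{eq:affine:selection}. The Lasso KKT conditions \eqref{eq:lasso:affine:constraints} give $A\cdot D\leq a_n$, and conditioning on the index $r^*$ of the randomized CV minimizer gives the affine constraint \eqref{eq:cv:affine:randomized}, namely $B_{r^*}\cdot Err_R\leq 0$. Stacking these with $\widetilde{A}$ and $\tilde a_n$ as defined above, the joint selection event is exactly $\widetilde{A}\widetilde{D}\leq\tilde a_n$, and after the centering and rescaling that turn $\widetilde{D}$ into the CLT-normalized vector this becomes an affine constraint whose right-hand side converges: the $a_n$ block converges once properly scaled by the SLLN (as noted below \eqref{eq:lasso:affine:constraints}) and the zero block is constant. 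Second I would invoke \eqref{eq:cv:joint:asymptotically:normal:vector}, which is the standing hypothesis of the theorem and which is established under mild moment conditions in Corollaries~\ref{cor:randomized:cv:gaussian} and \ref{cor:joint:normality}, to conclude that $(T,\widetilde{D})$ satisfies the pre-selection CLT \eqref{eq:D:T:gaussian} with limiting covariance read off from the block matrix in \eqref{eq:cv:joint:asymptotically:normal:vector}. Invertibility of $\Sigma_T$, guaranteed by the moment assumptions, legitimizes the decomposition $\widetilde{D}=\Sigma_{\widetilde{D},T}\Sigma_T^{-1}T+N_{\widetilde{D}}$ with $N_{\widetilde{D}}=(N_D,N_{Err_R})$ asymptotically independent of $T$.

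With \eqref{eq:affine:selection} and \eqref{eq:D:T:gaussian} in hand, Theorem~\ref{thm:nonrandomized:pivot} immediately gives $\mathcal{P}((T,\widetilde{D});\widetilde{A},\tilde a_n)\overset{d}{\rightarrow}\textnormal{Unif}(0,1)$ under $\mathbb{F}_n^*$, i.e.\ conditional on the selection event; this is the sense in which the final display of the theorem is to be read, exact uniformity holding only in the idealized case $\mathbb{F}_n=\Phi$. It then remains to check that the pivot written in this theorem is the same object as the general pivot \eqref{eq:nonrandomized:pivot}, and this is exactly the content of Proposition~\ref{thm:selevent:Lasso:randomizedCV}: substituting $Z_{\widetilde{D}}=\Sigma_{\widetilde{D},T}\Sigma_T^{-1}Z_T+N_{\widetilde{D}}$ shows that the event $\{Z_T\in\widetilde{S}_T,\ Z_{\widetilde{D}}-\Sigma_{\widetilde{D},T}\Sigma_T^{-1}Z_T=N_{\widetilde{D}}\}$ coincides with $\{Z_{\widetilde{D}}\in\mathcal{S}_{\widetilde{D}},\ Z_{\widetilde{D}}-\Sigma_{\widetilde{D},T}\Sigma_T^{-1}Z_T=N_{\widetilde{D}}\}$, so the conditioning events, and hence the pivots, agree. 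Conditioning on $N_{\widetilde{D}}=(N_D,N_{Err_R})$ is what removes the dependence on the nuisance mean $(\mu_D,\mu_{Err_R})$, leaving a quantity depending only on $T$, $\theta$ and the covariance blocks.

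The one genuine obstacle is the joint-normality input \eqref{eq:cv:joint:asymptotically:normal:vector}: unlike the plain CV curve, whose coordinatewise differences are asymptotically non-Gaussian, the \emph{randomized} curve $Err_R$ must be jointly Gaussian with $(T,D)$, which is precisely why the randomization \eqref{eq:cv:randomizing:residuals} was introduced and why Theorem~\ref{thm:test:error:fixed:p} and Corollaries~\ref{cor:randomized:cv:gaussian}--\ref{cor:joint:normality} are needed. Once that input is granted, the present theorem is essentially a bookkeeping corollary of the general framework. A secondary technical point deserving a line is the convergence $\tilde a_n\to\tilde a$ after rescaling together with the asymptotic (rather than exact) independence of $N_{\widetilde{D}}$ and $T$; both are handled exactly as in the proof of Theorem~\ref{thm:nonrandomized:pivot} and introduce no new difficulty.
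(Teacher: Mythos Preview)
Your proposal is correct and follows essentially the same approach as the paper, which simply invokes Theorem~\ref{thm:nonrandomized:pivot} once \eqref{eq:cv:joint:asymptotically:normal:vector} is assumed and the selection event has been written in terms of $T$ via Proposition~\ref{thm:selevent:Lasso:randomizedCV}. Your write-up is considerably more detailed than the paper's one-line proof, but the logical route---check \eqref{eq:affine:selection} and \eqref{eq:D:T:gaussian} for the stacked vector $\widetilde{D}=(D,Err_R)$, then apply the general pivot theorem---is identical.
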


\begin{proof}
Having the selection event written in terms of $T$ and assuming \eqref{eq:cv:joint:asymptotically:normal:vector} holds, we have by Theorem \ref{thm:nonrandomized:pivot} that the pivot above asymptotically follows uniform distribution on $(0,1)$ under the conditional distribution.
\end{proof}

\begin{remarks}
\begin{itemize}[leftmargin=*]

\item[--] Computing the pivots involves estimating the covariance matrices. We use non-parametric covariance estimates via pairs bootstrap throughout the paper except in Section \ref{sec:power} where we use parametric estimates.

\item[--] In practice, there are other ways that we may randomize the cross-validation errors to achieve the joint CLT for $\widetilde{D}$. Another possibility is to randomize within residuals and take 
\begin{equation*}
	Err_R(\lambda,k)=\frac{1}{n_k}\left\|y^k-X^k\hat{\beta}^{-k}(\lambda)+R^{k,\lambda}\right\|_2^2-\frac{1}{n_k} \left\|R^{k,\lambda}\right\|^2_2,
\end{equation*}
for $R^{k,\lambda}\sim\mathcal{N}(0,\tau^2I_{n_k})$. For simplicity, we stick to the additive randomization in \eqref{eq:cv:randomizing:residuals} in this paper.

\item[--] Our inference framework applies to a general loss function used to compute randomized CV error curve: we do not need the loss in \eqref{eq:cv:randomizing:residuals} to be squared error loss or to be the same as in model selection, e.g.~$\ell_2$ loss for Lasso, as long as we have a joint CLT for $(T,D,Err_R)$.
\end{itemize}
\end{remarks}

In Figure \ref{fig:cv_corrected_nonrandomized_lasso_pivots}, we present the selective $p$-values after adjusting for cross-validation using the same data generating mechanism as in Figure \ref{fig:lee_et_al}.
\begin{figure}[h!]
  \centering
    \includegraphics[width=0.60\textwidth]{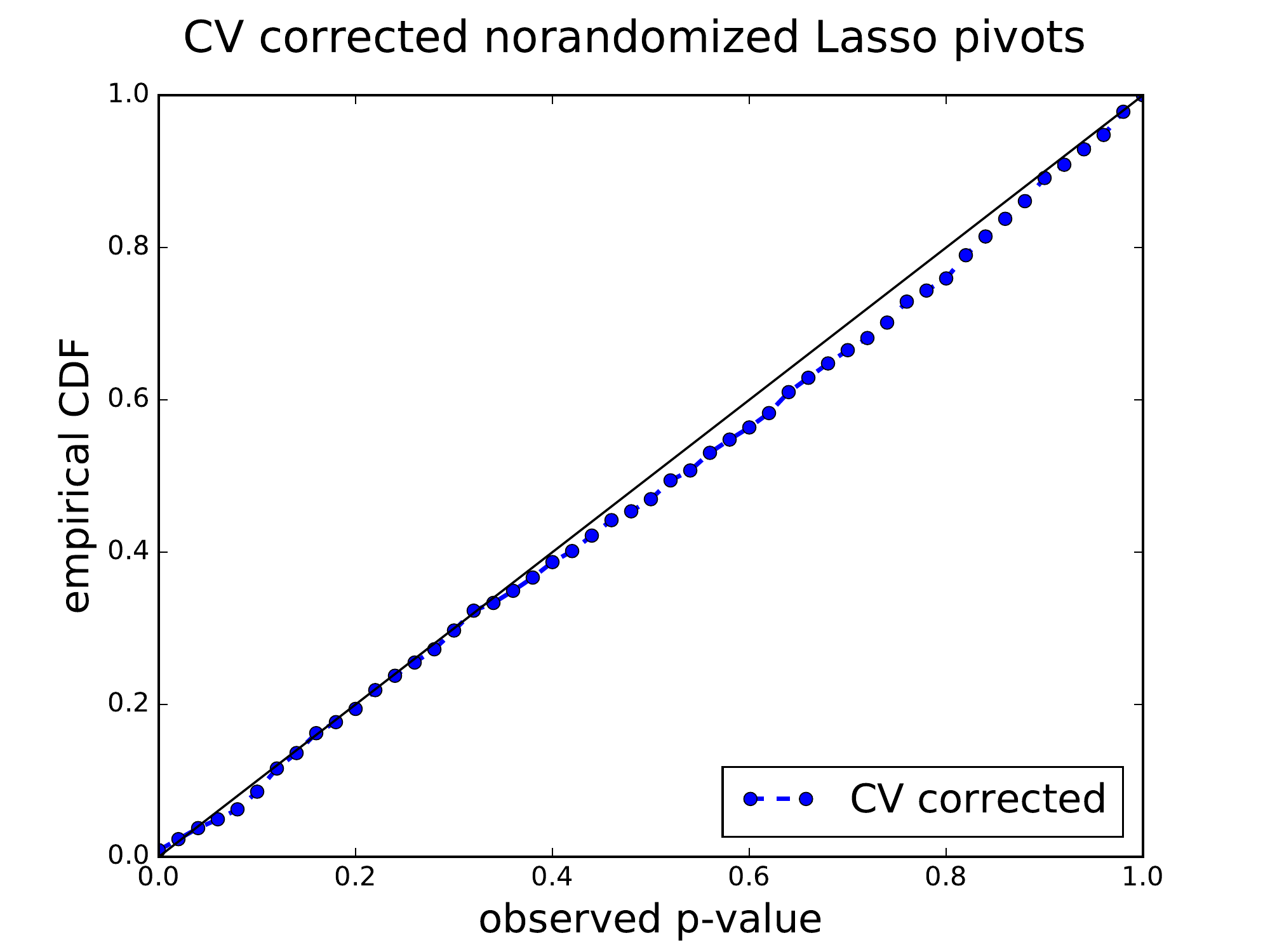}
    \caption{Selective $p$-values adjusted for cross-validation, using randomized CV curve. The average coverage is 90$\%$.} 
    \label{fig:cv_corrected_nonrandomized_lasso_pivots}
\end{figure}


\section{Inference after CV and randomized selection procedures} \label{sec:randomized:lasso:cv}

In this section, we apply an extension of our general framework developed in Section \ref{sec:gen:framework} to the problems of doing inference after randomized selection procedures. It differs from Section \ref{sec:gen:framework} since in the inference part, we marginalize over the added randomization in CV. As shown in \citealt{tian2016magic, selective_sampler}, inference after randomized model selection procedures has larger power than after their non-randomized counterparts. We further demonstrate this in Section \ref{sec:power} in the appendix.

We focus on the randomized Lasso with randomized cross-validation. We assume $\lambda_R^{cv}$ is computed based on randomized cross-validation, similar to Section \ref{sec:randomized:cv:curve}, with the details given below in Section \ref{sec:randomized:lasso:randomized:cv}. After choosing $\lambda_R^{cv}$, we solve a randomized Lasso objective as follows 
\begin{equation} \label{eq:lasso:objective:randomized}
	\hat{\beta}(X,y,\omega) = \textnormal{arg}\underset{\beta\in\mathbb{R}^p}{\min}\:\frac{1}{2}\left\|y-X\beta\right\|_2^2+\lambda_R^{cv}\|\beta\|_1+\frac{\nu}{2}\|\beta\|_2^2-\omega^\top \beta, 
\end{equation}
where $(X,y)\times\omega\sim \mathbb{F}_n\times \mathbb{F}_{\omega}$. $\omega$ is a randomization sample from the pre-specified distribution $\mathbb{F}_{\omega}$ with density $g_{\omega}$. $\nu$ is a small constant, ensuring the solution of the objective above exists. Similar to the non-randomized Lasso, the randomized objective above induces sparsity so we denote $\widehat{E}(X,y,\omega)=\left\{j\in\{1,\ldots,p\}:\hat{\beta}(X,y,\omega)_j\neq 0\right\}$, consisting of non-zero coefficients of the solution $\hat{\beta}(X,y,\lambda)$. After observing $\widehat{E}(X,y,\omega)=E$, we decide on the parameters of interest for inference.
In what follows, we explain how to provide inference for the parameters chosen based on observing the set of selected predictors $E$ after adjusting for both randomized Lasso and randomized cross-validation.

\subsection{Adjusting for randomized Lasso alone with fixed $\lambda$} 

 
After running the randomized Lasso, we provide inference based on looking at the set $E$ of non-zero coefficients of its solution $\hat{\beta}(X,y,\omega)$. In addition to adjusting for $E$, we condition on the signs $s_E$ of the active portion of the randomized Lasso solution as in non-randomized setting.
In order to have valid inference, we need to adjust for looking at these outcomes by conditioning on the observation that our data $(X,y)$ and randomization $\omega$ landed in the selection region given by
\begin{equation*}
	\left\{\left(X',y',\omega'\right)\in\mathbb{R}^{n\times p}\times\mathbb{R}^n\times\mathbb{R}^p: \textnormal{sign}(\hat{\beta}(X',y',\omega')_E)=s_E, \hat{\beta}(X',y',\omega')_{-E}=0\right\}.
\end{equation*}
In other words, we want to base our inference using the distribution of the data conditional on $(X,y,\omega)\sim\mathbb{F}_n\times\mathbb{F}_{\omega}$ landing in the selection region above. Getting this post-selection distribution of the data by directly sampling data and randomization from the set above is hard due to complicated joint constraints.

Following the trick of change of measure in \cite{tian2016magic, selective_sampler}, we do not sample directly $(X,y,\omega)$ from their conditional distribution. Instead, to get post-selection distribution of the data, we sample data and the so called, \textit{optimization variables} from a simpler selection event, depicted by optimization variables only. Since the randomized Lasso solution is a function of the vector $D$ as defined in \eqref{eq:D:vector} and randomization $\omega$, the following proposition expresses their conditional density.

\begin{proposition}[\cite{selective_sampler}] 
	Assume that pre-selection asymptotic density of $(D,\omega)$ is $\phi_{(\mu_D,\Sigma_D)}(D)\cdot g_{\omega}(\omega)$, where $\phi_{(\mu,\Sigma_D)}$ denotes the density of $\mathcal{N}(\mu_D,\Sigma_D)$. The conditional asymptotic density of $(D,\omega)$ given the Lasso selected model $E$ with the signs $s_E$ of the active coefficients can be expressed via a change of variables 
	$$
	\omega=MD+B\beta_E+\begin{pmatrix} \lambda s_E \\ u_{-E} \end{pmatrix},
	$$ 
	for $(\beta_E, u_{-E})\in\mathbb{R}^{|E|}\times\mathbb{R}^{p-|E|}$, $\textnormal{sign}(\beta_E)=s_E$, $\|u_{-E}\|_{\infty}\leq\lambda$, and $M$ and $B$ are the following matrices 
	\begin{equation*}
   	M = \begin{pmatrix} X_E^\top X_E & 0 \\ X_{-E}^\top X_E & I_{p-|E|} \end{pmatrix}, \;\;
   	B = \begin{pmatrix} X_E^\top X_E+\nu I_{|E|} \\ X_{-E}^\top X_E \end{pmatrix}.
	\end{equation*}
	The selective density of $(D,\beta_E, u_{-E})\in\mathbb{R}^p\times\mathbb{R}^{|E|}\times\mathbb{R}^{p-|E|}$ is then proportional to
	\begin{equation} \label{eq:selective:density:D}
		\phi_{(\mu_D,\Sigma_D)}(D)\cdot g_{\omega}\left(MD+B\beta_E+\begin{pmatrix} \lambda s_E \\ u_{-E} \end{pmatrix}\right)
	\end{equation}
	with the constraints $\textnormal{sign}(\beta_E)=s_E$ and $\|u_{-E}\|_{\infty}\leq\lambda$.
\end{proposition}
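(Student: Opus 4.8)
The plan is to extract the randomized Lasso selection event from the Karush--Kuhn--Tucker (KKT) stationarity condition of \eqref{eq:lasso:objective:randomized}, reparametrize that event through the active coefficients and the inactive subgradient (the ``optimization variables''), solve the stationarity equation for $\omega$ as an affine function of $D$ and these variables, and then transport the assumed pre-selection product density $\phi_{(\mu_D,\Sigma_D)}(D)\,g_\omega(\omega)$ through this invertible affine change of coordinates.

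First I would write down the stationarity condition. Because $\nu>0$ the objective in \eqref{eq:lasso:objective:randomized} is strictly convex, so $\hat\beta(X,y,\omega)$ is unique and characterized by the existence of a subgradient $z\in\partial\|\hat\beta\|_1$ with $-X^\top(y-X\hat\beta)+\lambda_R^{cv}z+\nu\hat\beta-\omega=0$. On the event $\{\widehat{E}(X,y,\omega)=E,\ \textnormal{sign}(\hat\beta_E)=s_E\}$ one has $\hat\beta_{-E}=0$, $z_E=s_E$, and $\|z_{-E}\|_\infty\le 1$. Introducing $\beta_E:=\hat\beta_E$ and $u_{-E}:=\lambda_R^{cv}z_{-E}$, I would argue that the selection event corresponds exactly to the parameter region $\{\textnormal{sign}(\beta_E)=s_E,\ \|u_{-E}\|_\infty\le\lambda_R^{cv}\}$: the sign constraints on $\beta_E$ say precisely that $E$ is active with the prescribed signs, and the dual-feasibility inequality on $u_{-E}$ is precisely what, together with stationarity and uniqueness, forces the off-$E$ coefficients to vanish.

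Next I would solve the block form of the stationarity equation for $\omega$. Substituting $\hat\beta=(\beta_E;0)$ and splitting into the $E$- and $(-E)$-rows expresses $\omega$ in terms of $X_E^\top y$, $X_{-E}^\top y$, $\beta_E$, $s_E$, $u_{-E}$; then using $X_E^\top y = X_E^\top X_E\bar\beta_E$ and $X_{-E}^\top y = X_{-E}^\top(y-X_E\bar\beta_E)+X_{-E}^\top X_E\bar\beta_E$ together with the definition $D=(\bar\beta_E;\,X_{-E}^\top(y-X_E\bar\beta_E))$ from \eqref{eq:D:vector}, the two blocks combine into $\omega = MD+B\beta_E+(\lambda_R^{cv}s_E;\,u_{-E})$, and one reads off the matrices $M$ and $B$ by collecting coefficients. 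This step is pure linear algebra.

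Finally I would perform the change of variables $(D,\beta_E,u_{-E})\mapsto(D,\omega)$. For fixed $D$ the map is affine in $(\beta_E,u_{-E})$, block-lower-triangular, with Jacobian determinant $\det(X_E^\top X_E+\nu I_{|E|})\neq 0$; hence it is a diffeomorphism with a nonzero constant Jacobian that is absorbed into the proportionality constant. Since pre-selection $D$ has asymptotic density $\phi_{(\mu_D,\Sigma_D)}$ independently of $\omega\sim g_\omega$, conditioning on the selection event and changing variables yields that the selective density of $(D,\beta_E,u_{-E})$ is proportional to $\phi_{(\mu_D,\Sigma_D)}(D)\,g_\omega\!\big(MD+B\beta_E+(\lambda_R^{cv}s_E;u_{-E})\big)$ on the region $\textnormal{sign}(\beta_E)=s_E$, $\|u_{-E}\|_\infty\le\lambda_R^{cv}$, which is \eqref{eq:selective:density:D}. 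The main obstacle I anticipate is the careful verification in the second step that the parameter region is exactly the selection event and not merely a superset --- i.e.\ that stationarity plus the inequality constraints reproduces $\widehat{E}=E$ with signs $s_E$ and nothing more, modulo the measure-zero complementary-slackness boundary --- together with a clean handling of what ``asymptotic density'' conditioning means; both are dealt with as in \cite{tian2016magic, selective_sampler}.
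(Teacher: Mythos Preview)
Your proposal is correct and follows exactly the same approach as the paper's proof, which simply states that the KKT conditions of the randomized Lasso yield $\omega = MD + B\beta_E + (\lambda s_E;\,u_{-E})$ with the constraints $\textnormal{sign}(\beta_E)=s_E$ and $\|u_{-E}\|_\infty\le\lambda$. You have supplied considerably more detail --- the block-by-block algebra linking $X^\top y$ to $D$, the identification of the selection event with the parameter region, and the Jacobian computation --- than the paper, which defers these to \cite{selective_sampler}; but the underlying argument is identical.
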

In other words, after sampling $(D,\beta_E, u_{-E})$ from the constrained density in \eqref{eq:selective:density:D}, 
$$
\left(D, MD+B\beta_E+\begin{pmatrix} \lambda  s_E \\ u_{-E}\end{pmatrix}\right)
$$ 
comes from the asymptotic density of $(D,\omega)$ conditional on $(E,s_E)$.

\begin{proof}
The proof follows easily from the KKT conditions of randomized Lasso are
\begin{equation*}
	\omega = MD+B\beta_E+\begin{pmatrix}
		\lambda s_E \\ u_{-E}
	\end{pmatrix}	
\end{equation*}
with the constraints $\textnormal{sign}(\beta_E)=s_E$ and $\|u_{-E}\|_{\infty}\leq\lambda$, $\beta_E= \hat{\beta}(X,y,\omega)_E$ corresponds to the active part of the solution; $u_{-E}$ corresponds to the inactive part of the sub-gradient penalty, $\partial(\lambda\|\beta\|_1)_{-E}$, evaluated at the solution $\hat{\beta}(X,y,\omega)$. 
\end{proof}


\begin{remark}
 Instead of sampling data vector $D$ and $\omega$, we sample $D$ together with so called optimization variables $(\beta_E, u_{-E})$ from \eqref{eq:selective:density:D}. Note that the optimization variables are restricted to a simple set, a product of orthans and cubes, and there are no constrains on the data vector $D$ in the selective density. We elaborate on the sampler used to sample from this density in Section \ref{subsec:sampling}.
\end{remark}

After observing set $E$, we choose the parameter of interest $\theta$ and the corresponding target statistic $T$ that is asymptotically normal with mean $\theta$ pre-selection. As described, to do proper adjustment, we need to base inference on the conditional distribution of $T$. Since the selective density above is in terms of $D$, we re-write it in terms of $T$. Assuming $(T,D)$ are jointly normal pre-selection we can decompose $D=\Sigma_{D,T}\Sigma_T^{-1}T+N_D$. By conditioning on $N_D$, we write the selective density in terms of $T$ and optimization variables $(\beta_E, u_{-E})$ as
\begin{equation} \label{eq:randomized:lasso:density}
	\phi_{(\theta,\Sigma_T)}(T)\cdot g_{\omega}\left(M\Sigma_{D,T}\Sigma_T^{-1}T+MN_D+B\beta_E+\begin{pmatrix} \lambda s_E \\ u_{-E}\end{pmatrix}\right),
\end{equation}
with the same constraints on $(\beta_E, u_{-E})$ as above.

\subsection{Adjusting for both randomized Lasso and randomized cross-validation} \label{sec:randomized:lasso:randomized:cv}

In addition to adjusting for the randomized Lasso selecting the set $E$ of predictors, we need to account for the fact that $\lambda_R^{cv}=\lambda_{r^*}$ has been chosen in a data dependent way, assuming we run randomized Lasso and randomized cross-validation on our data before doing inference. 

To do randomized cross-validation in this setting, we compute the curve composed of models' quality values, i.e. $Err$ (as in the cross-validated non-randomized Lasso setting in Section \ref{sec:nonrandomized:lasso:cv}).  Our framework applies to other ways of computing $Err$ as well, as long as the final randomized cross-validation curve satisfies some assumptions.
Given the vector $Err$, we compute the randomized cross-validation curve as
\begin{equation*}
	Err_R = Err+R_1+R_2=Err^{(1)}+R_2,
\end{equation*}
where $R_1\times R_2\sim\mathbb{F}_{R_1}\times\mathbb{\mathbb{F}}_{R_2}$ for $\mathbb{F}_{R_1}=\mathcal{N}(0,\sigma_{R_1}^2 I_L)$ with known $\sigma_{R_1}$ and $\mathbb{F}_{R_2}$ with density $g_{R_2}$ chosen in advance; $R_1$ and $R_2$ are independent of everything else. Note that we write the randomization as the sum of two randomization terms $R_1$ and $R_2$, where $R_1$ is normally distributed and the distribution of $R_2$ is pre-specified (taken to be log-concave for computational reasons) but not necessarily a normal distribution. We describe soon why we decompose the randomization in such a way. To account for randomized cross-validation, we need to condition on the event $Err_R\in\mathcal{E}_{r^*}$, that the index of the minimizer of $Err_R$ is $r^*$.
 
Now we have the selection event for the randomized Lasso written in terms of $D$ and $\omega$ and the selection event of randomized cross-validation written in terms of $Err^{(1)}$ and $R_2$ (we use the decomposition of $Err_R$ into a sum of $Err^{(1)}$ and $R_2$ purposefully). Given the parameter of interest $\theta=\theta(\mathbb{F}_n, E)$ and the corresponding target statistic $T=T((X,y),E)$, we rewrite the selective density in terms of $T$ and the randomizations $\omega$ and $R_2$ as follows. The proof of the following proposition consists of writing the randomization reconstruction for $\omega$ and $R_2$ so we omit it here.

\begin{proposition} 
Assuming $T$, $D$ and $Err^{(1)}$ are jointly asymptotically Gaussian pre-selection, i.e.~under $\mathbb{F}_n\times\mathbb{F}_{R_1}$ and fixed $E$
\begin{equation} \label{eq:randomized:cv:data:vector}
	\begin{pmatrix} T \\ D \\ Err^{(1)} \end{pmatrix} \overset{d}{\rightarrow} \mathcal{N}\left(\begin{pmatrix} \theta \\ \mu_D \\ \mu_{Err^{(1)}} \end{pmatrix}, 
	\begin{pmatrix} 
	\Sigma_T & \Sigma_{T,D} & \Sigma_{T,Err^{(1)}} \\
	\Sigma_{D,T} & \Sigma_D & \Sigma_{D, Err^{(1)}} \\ \Sigma_{Err^{(1)},T} & \Sigma_{Err^{(1)},D} & \Sigma_{Err^{(1)}} \end{pmatrix}\right)
\end{equation}
as $n\rightarrow\infty$. Denote $Err^{(1)}=\Sigma_{Err^{(1)}, T}\Sigma_T^{-1}T+N_{Err^{(1)}}$. 
The asymptotic post-selection density of $(T,\beta_E, u_{-E}, Err_R)$, where the conditioning is on $(E,s_E,r^*)$ and $(N_D,N_{Err^{(1)}})$, is proportional to
\begin{equation} \label{eq:density:cv:full}
\begin{aligned} 
	\phi_{(\theta,\Sigma_T)}(T) & \cdot g_{\omega}\left(M\Sigma_{D,T}\Sigma_T^{-1}T+MN_D+B\beta_E+\begin{pmatrix} \lambda s_E \\ u_{-E}\end{pmatrix}\right) \\
& \cdot\: g_{R_2}(Err_R - \Sigma_{Err^{(1)}, T}\Sigma_T^{-1}T-N_{Err^{(1)}}) 
\end{aligned}
\end{equation}
restricted to $(\beta_E, u_{-E}, Err_R)\in \mathbb{R}_{s_E}^{|E|}\times[-\lambda, \lambda]^{p-|E|}\times\mathcal{E}_{r^*}$. 
\end{proposition}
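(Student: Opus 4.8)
The plan is to start from the pre-selection joint density of the tuple $(T, D, Err^{(1)}, \omega, R_2)$ and push both selection events through a change of variables, in exact analogy with the derivation of \eqref{eq:selective:density:D} and \eqref{eq:randomized:lasso:density} but now carrying the extra randomized cross-validation coordinates. By the joint CLT \eqref{eq:randomized:cv:data:vector}, the triple $(T,D,Err^{(1)})$ is asymptotically Gaussian with density $\phi$ equal to the Gaussian density on the right-hand side of \eqref{eq:randomized:cv:data:vector}; and since $\omega\sim\mathbb{F}_\omega$ and $R_2\sim\mathbb{F}_{R_2}$ are drawn independently of the data and of each other, the asymptotic pre-selection density of the whole tuple factors as
\begin{equation*}
\phi\bigl(T, D, Err^{(1)}\bigr)\cdot g_\omega(\omega)\cdot g_{R_2}(R_2).
\end{equation*}

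Next I would record the two selection events in reconstruction form. The event that randomized Lasso \eqref{eq:lasso:objective:randomized} selects $(E,s_E)$ is, by its KKT conditions as in the preceding proposition, equivalent to
\begin{equation*}
\omega = M D + B\beta_E + \begin{pmatrix}\lambda s_E \\ u_{-E}\end{pmatrix}, \qquad \textnormal{sign}(\beta_E)=s_E,\quad \|u_{-E}\|_\infty\leq\lambda,
\end{equation*}
and the map $\omega\mapsto(\beta_E,u_{-E})$ has constant nonzero Jacobian. The event that randomized cross-validation picks $r^*$ is, using $Err_R = Err^{(1)}+R_2$, equivalent to $R_2 = Err_R - Err^{(1)}$ together with $Err_R\in\mathcal{E}_{r^*}$, and the map $R_2\mapsto Err_R$ has Jacobian $1$. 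Since the first reconstruction involves only $\omega$ and the second only $R_2$, the two changes of variables commute and can be applied simultaneously, turning the factored density into
\begin{equation*}
\phi\bigl(T, D, Err^{(1)}\bigr)\cdot g_\omega\!\left(M D + B\beta_E + \begin{pmatrix}\lambda s_E \\ u_{-E}\end{pmatrix}\right)\cdot g_{R_2}\bigl(Err_R - Err^{(1)}\bigr)
\end{equation*}
supported on $(\beta_E,u_{-E},Err_R)\in\mathbb{R}_{s_E}^{|E|}\times[-\lambda,\lambda]^{p-|E|}\times\mathcal{E}_{r^*}$.

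Then I would eliminate $D$ and $Err^{(1)}$ in favour of $T$. By the joint Gaussianity in \eqref{eq:randomized:cv:data:vector}, the residuals $N_D = D - \Sigma_{D,T}\Sigma_T^{-1}T$ and $N_{Err^{(1)}} = Err^{(1)} - \Sigma_{Err^{(1)},T}\Sigma_T^{-1}T$ are asymptotically independent of $T$; conditioning on their observed values pins down the nuisance means $\mu_D$ and $\mu_{Err^{(1)}}$ while leaving the conditional law of $T$ asymptotically equal to $\mathcal{N}(\theta,\Sigma_T)$, with density $\phi_{(\theta,\Sigma_T)}(T)$. Substituting $D = \Sigma_{D,T}\Sigma_T^{-1}T + N_D$ and $Err^{(1)} = \Sigma_{Err^{(1)},T}\Sigma_T^{-1}T + N_{Err^{(1)}}$ into the last display and absorbing all factors independent of $(T,\beta_E,u_{-E},Err_R)$ into the normalizing constant yields exactly \eqref{eq:density:cv:full} with the stated support.

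The main obstacle is not the algebra but making the asymptotic claim rigorous: one must show that the finite-sample conditional density converges to the claimed limiting object, i.e.\ that the pre-selection CLT \eqref{eq:randomized:cv:data:vector} and the independence of $\omega$, $R_2$ together license passing to the limit inside the conditioning on the product selection event $\{(E,s_E)\text{ selected}\}\cap\{r^*\text{ minimizes }Err_R\}$ and on $(N_D,N_{Err^{(1)}})$. This is the same limiting argument already invoked for \eqref{eq:randomized:lasso:density} and in Theorem \ref{thm:valid:pivot}, now extended to the product region; it is precisely why the proposition is stated asymptotically rather than for fixed $n$. Everything else is bookkeeping for two commuting changes of variables.
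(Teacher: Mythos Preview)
Your proposal is correct and follows exactly the approach the paper indicates: the paper states that ``the proof of the following proposition consists of writing the randomization reconstruction for $\omega$ and $R_2$ so we omit it here,'' and that is precisely what you do---reconstruct $\omega$ via the KKT map and $R_2$ via $R_2=Err_R-Err^{(1)}$, then condition on $(N_D,N_{Err^{(1)}})$ to reduce to $T$. If anything, you have supplied more detail than the paper itself.
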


\begin{remark}
\begin{itemize}
\item[--] Notice that, in the sampling density \eqref{eq:density:cv:full}, $\beta_E$ and $u_{-E}$ correspond to the randomized Lasso constraint and $Err_R$ corresponds to the CV constraint. To do valid inference on $\theta$, it suffices to have the samples of $T$ from this density.

\item[--] Since the randomizations $\omega\sim\mathbb{F}_{\omega}$ and $R_2\sim\mathbb{F}_{R_2}$ are mutually independent and independent of everything else, $g_{\omega}$ and $g_{R_2}$ separate as written in \eqref{eq:density:cv:full}. Thus, we consider randomized Lasso and randomized cross-validation as two queries/views on the data, where one view corresponds to $g_{\omega}(\cdot)$ and the other corresponds to $g_{R_2}(\cdot)$ \citep{bootstrap_mv}. Considering both views, the optimization variables, defined as moving particles in the sampler other than the target, are $(\beta_E, u_{-E}, Err_R)$ in this case.
\end{itemize}
\end{remark}

\subsection{Computing the pivot via sampling}  \label{subsec:sampling}

We use Markov chain Monte Carlo (MCMC) methods to sample from the density in (\eqref{eq:density:cv:full}). Moving a particle via MCMC in high-dimensions is computationally infeasible, thus we will marginalize over the sub-gradient $u_{-E}$ explicitly in the density (\eqref{eq:density:cv:full}). This requires computing the volume of a cube under $\mathbb{F}_{\omega}$. Since $\omega\sim\mathbb{F}_{\omega}$ is chosen to consist of i.i.d.~components, this volume can be written as a product of individual components, and it is easy to compute analytically. The details are given in \cite{selective_sampler}. As a result, we sample only $(T,\beta_E, Err_R)$, which lie in dimension $\textnormal{dim}(\theta)+|E|+L$, with $L=|\Lambda|$ as the grid size for $\lambda$ in cross-validation. Usually, the target is $T=\bar{\beta}_E$ if the parameter of interest is $\theta=\beta_E^*=\left(\mathbb{E}_{\mathbb{F}_n}\left[X_E^\top X_E\right]\right)^{-1}\mathbb{E}_{\mathbb{F}_n}\left[X_E^\top y\right]$, thus $\textnormal{dim}(\theta)=|E|$. Since the size $|E|$ of the selected model is in general small, the sampling is  feasible using MCMC. We can further reduce the dimension of the sampler by conditioning on any of the optimization variables. Although conditioning reduces the power of our test, sometimes the difference is negligible.  

\begin{remark}
We can choose to either move $Err_R$ in the sampler or condition on it. In the case we do not condition on it, we sample $(T=\bar{\beta}_E,\beta_E,Err_R)$ from the density in \eqref{eq:density:cv:full} with the constraints on $\beta_E$ and $Err_R$. Conditioning on $Err_R$ means we fix it in the sampler at its observed value. As a result, we sample $(T=\bar{\beta}_E,\beta_E)$ from the density in \eqref{eq:density:cv:full} with the constraints on $\beta_E$ only. The latter sampling scheme requires only the projection of $\beta_E$ at each step. Furthermore, once we choose to condition on $Err_R$, we are ``allowed'' to look at all of its values. Thus, in this case we can choose $\lambda$ differently and not necessarily the minimizer of $Err_R$, e.g.~using one sigma rule up or down from the minimizer $\lambda_R^{cv}$ \citep{friedman2001elements}.
\end{remark}

Ideally, to construct confidence intervals, we will have to conduct tests at all different values of $\theta$, and decide whether we want to include them as part of the confidence interval. As this is computationally heavy, we adopt importance sampling to construct confidence intervals efficiently. Specifically, we do sampling only once under a reference parameter, and tilt the original samples to get them distributed under different $\theta$ values \citep{bootstrap_mv}.

As for the MCMC sampler, we use projected Langevin for which \citealt{bubeck2015sampling} gives theoretical guarantees. It allows us to sample from a log-concave density with constraints. At each step, the optimization variables are projected to their constraint set; since in our case these constraints are simple polyhedrons, sampling is computationally fast. Projecting $\beta_E$, $u_{-E}$ onto $\mathbb{R}_{s_E}^{|E|}$ and $[-\lambda,\lambda]^{p-|E|}$ respectively is simple and the details of the projection of $Err_R$ onto $\mathcal{E}_{r^*}$ are given in Section \ref{sec:sampling} in the appendix.


\subsection{Randomized selective pivot}  \label{sec:randomized:pivot}

We now define the randomized pivot and prove it is valid after selection. Going back to the general setting of Section \ref{sec:gen:framework}, recall that $\widetilde{D}=\widetilde{D}(\mathbb{F}_n,M)$ is a general data vector, e.g.~$(D, Err^{(1)})$ in the example above, and $\tilde{\omega}$ contains all the randomization used in the procedure and $\mathbb{F}_{\tilde{\omega}}$ is the joint distribution of $\tilde{\omega}$. In the randomized Lasso example above with randomized cross-validation, $\mathbb{F}_{\tilde{\omega}}$ becomes $\mathbb{F}_{\omega}\times\mathbb{F}_{R_1}\times\mathbb{F}_{R_2}$. The randomization $\tilde{\omega}$ is independent of the data vector $\widetilde{D}$. We assume the selection event is affine in terms of $(\widetilde{D},\tilde{\omega})$, i.e.~can be represented as
\begin{equation} \label{eq:affine:randomized}
	\widetilde{A}\begin{pmatrix} \widetilde{D} \\ \tilde{\omega} \end{pmatrix} \leq \tilde{a}_n	
\end{equation}
for a sequence $\tilde{a}_n \rightarrow \tilde{a}$ as $n\rightarrow\infty$. In the example above, this constraint is written in terms of the optimization variables $O=O(\widetilde{D},\tilde{\omega})$ constrained to lie in a set $\mathcal{O}$ after selection. 

Given the parameter of interest $\theta$ and the target statistic $T$, we assume \eqref{eq:D:T:gaussian} holds, allowing us to condition on $N_{\widetilde{D}}=\widetilde{D}-\Sigma_{\widetilde{D},T}\Sigma_T^{-1}T$, the statistic corresponding to the nuisance parameters.
We define \textit{the randomized selective pivot} $\mathcal{P}^R$ as 
\begin{equation*}
\begin{aligned}
	& \mathcal{P}^R\left((T,\widetilde{D});\widetilde{A},\tilde{a}_n\right) \\
	& = \mathbb{P}_{\Phi\times\mathbb{F}_{\tilde{\omega}}}\left\{\left\|Z_T-\theta\right\|_2\leq \left\|T-\theta\right\|_2\:\Big|\:\widetilde{A}\begin{pmatrix} Z_{\widetilde{D}} \\ \tilde{\omega}\end{pmatrix} \leq \tilde{a}_n, Z_{\widetilde{D}}-\Sigma_{\widetilde{D},T}\Sigma_T^{-1}Z_T=N_{\widetilde{D}}\right\} \\
	&= \mathbb{P}_{\Phi\times\mathbb{F}_{\tilde{\omega}}}\left\{\left\|Z_T-\theta\right\|_2\leq \left\|T-\theta\right\|_2\:\Big|\:\widetilde{A}\begin{pmatrix} \Sigma_{\widetilde{D},T}\Sigma_T^{-1}Z_T+N_{\widetilde{D}} \\ \tilde{\omega}\end{pmatrix} \leq \tilde{a}_n, Z_{\widetilde{D}}-\Sigma_{\widetilde{D},T}\Sigma_T^{-1}Z_T=N_{\widetilde{D}}\right\} \\
	&=\mathbb{P}_{\Phi\times\mathbb{F}_{\tilde{\omega}}}\left\{\left\|Z_T-\theta\right\|_2\leq \left\|T-\theta\right\|_2\:\Big|\:O(\widetilde{D},\tilde{\omega})\in\mathcal{O}, Z_{\widetilde{D}}-\Sigma_{\widetilde{D},T}\Sigma_T^{-1}Z_T=N_{\widetilde{D}}\right\},
\end{aligned}
\end{equation*}
where the probability is under $\left((Z_T,Z_{\widetilde{D}}), \tilde{\omega}\right)\sim\Phi\times\mathbb{F}_{\tilde{\omega}}$. Note that in the definition of the pivot above we marginalize over randomization $\widetilde{\omega}\sim\mathbb{F}_{\tilde{\omega}}$.
Recall $\Phi$ denotes the asymptotic Gaussian distribution of the data $(T,\widetilde{D})$ pre-selection.

We present a theorem stating the randomized pivot above is asymptotically $\textnormal{Unif}(0,1)$. Consequently, we need the dimensions of $\widetilde{D}$, $\tilde{\omega}$ and $T$ to be fixed, which in the randomized Lasso example translates to fixed $p$. Since the proof is analogous to the proof of Theorem \ref{thm:nonrandomized:pivot} we omit the proof of the following theorem.

\begin{theorem}[Valid randomized selective pivot] \label{thm:randomized:pivot}
	Assuming \eqref{eq:affine:randomized} and \eqref{eq:D:T:gaussian} hold, we have under $(T,\widetilde{D})\sim\mathbb{F}_n^*$
	\begin{equation*}
		\mathcal{P}^R\left((T,\widetilde{D});\widetilde{A},\tilde{a}_n\right)\overset{d}{\rightarrow}\textnormal{Unif}(0,1) 
	\end{equation*}
	as $n\rightarrow\infty$, or equivalently, under $((T,\widetilde{D}),\tilde{\omega})\sim\mathbb{F}_n\times\mathbb{F}_{\tilde{\omega}}$
	\begin{equation*}
		\mathcal{P}^R\left((T,\widetilde{D});\widetilde{A}, \tilde{a}_n\right)\Big|\:O(\widetilde{D}, \tilde{\omega})\in\mathcal{O} \overset{d}{\rightarrow}\textnormal{Unif}(0,1)
	\end{equation*}
	as $n\rightarrow\infty$.
\end{theorem}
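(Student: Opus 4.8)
The plan is to follow the proof of Theorem~\ref{thm:nonrandomized:pivot} essentially verbatim, the only genuinely new feature being the randomization $\tilde\omega$, which enters with a law $\mathbb F_{\tilde\omega}$ that does not depend on $n$ and is independent of the data. The first observation to record is that, although $\mathcal P^R$ is defined through an integral against $\mathbb F_{\tilde\omega}$, it is a deterministic function $\Psi_n$ of the observed pair $(T,N_{\widetilde D})$ alone, where $N_{\widetilde D}=\widetilde D-\Sigma_{\widetilde D,T}\Sigma_T^{-1}T$; the realized $\tilde\omega$ is marginalized out. Substituting $\widetilde D=\Sigma_{\widetilde D,T}\Sigma_T^{-1}T+N_{\widetilde D}$ into \eqref{eq:affine:randomized} turns the selection constraint into an affine inequality in $(T,N_{\widetilde D},\tilde\omega)$, and integrating out $\tilde\omega$ replaces the event $\{O(\widetilde D,\tilde\omega)\in\mathcal O\}$ by a weight $h_n(z;n_0)\in[0,1]$ on $\R^{\dim(T)}$, namely the probability that $\tilde\omega$ lands in a polyhedron translated by an affine function of $(z,n_0)$. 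Hence, conditionally on $N_{\widetilde D}=n_0$, the pivot is $\Psi_n(t,n_0)=F_{n_0,n}(\|t-\theta\|_2)$, where $F_{n_0,n}$ is the distribution function of $\|Z_T-\theta\|_2$ when $Z_T$ carries the Lebesgue density proportional to $\phi_{(\theta,\Sigma_T)}(z)\,h_n(z;n_0)$.

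Second, I would settle the exact-Gaussian case, i.e.~$(T,\widetilde D)\sim\Phi$ and $\tilde a_n=\tilde a$. Under $\Phi$ the residual $N_{\widetilde D}$ is independent of $T$, and $\tilde\omega$ is independent of both, so conditionally on $\{N_{\widetilde D}=n_0,\,O\in\mathcal O\}$ the law of $T$ is precisely the soft-constrained Gaussian with density $\propto\phi_{(\theta,\Sigma_T)}(\cdot)\,h(\cdot;n_0)$ whose distribution function defines $\Psi$. This density is absolutely continuous on $\R^{\dim(T)}$, so $\|T-\theta\|_2$ has no atoms and its distribution function is continuous; the probability integral transform then gives $\Psi(T,N_{\widetilde D})\mid\{N_{\widetilde D}=n_0,\,O\in\mathcal O\}\sim\mathrm{Unif}(0,1)$, and averaging over the post-selection marginal of $n_0$ keeps it uniform. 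This is the last display of the theorem in the idealized model.

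Third, for the genuine asymptotic claim I would combine weak convergence with a continuous-mapping argument that allows the maps to vary with $n$. By \eqref{eq:D:T:gaussian} and continuity of the fixed linear map $(T,\widetilde D)\mapsto(T,N_{\widetilde D})$, the pair $(T,N_{\widetilde D})$ under $\mathbb F_n$ converges weakly to $(Z_T,Z_N)$ under $\Phi$ with $Z_T\perp Z_N$; appending the independent, $n$-free $\tilde\omega$ preserves joint weak convergence of $(T,N_{\widetilde D},\tilde\omega)$. Since $\mathbb F_{\tilde\omega}$ has a Lebesgue density, the affine selection region has a $\Phi\times\mathbb F_{\tilde\omega}$-null boundary; sandwiching $\tilde a_n$ between $\tilde a\pm\epsilon$ to absorb $\tilde a_n\to\tilde a$ and using that the limiting selection probability is strictly positive, the conditional laws $(T,N_{\widetilde D})\mid\{O\in\mathcal O\}$ under $\mathbb F_n\times\mathbb F_{\tilde\omega}$ converge weakly to $(Z_T,Z_N)\mid\{O\in\mathcal O\}$ under $\Phi\times\mathbb F_{\tilde\omega}$. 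Meanwhile $h_n(z;n_0)$ is Lipschitz in $(z,n_0)$ and continuous in $\tilde a_n$ (again because $\tilde\omega$ has a density), the normalizing integral is bounded away from $0$ on compacta, and $F_{n_0,n}$ is continuous in its argument, so $\Psi_n\to\Psi$ uniformly on compacta with $\Psi$ continuous. The extended continuous-mapping theorem then yields $\Psi_n(T,N_{\widetilde D})\mid\{O\in\mathcal O\}\overset{d}{\to}\Psi(Z_T,Z_N)\mid\{O\in\mathcal O\}$, which is $\mathrm{Unif}(0,1)$ by the second step, and this is exactly $\mathcal P^R\overset{d}{\to}\mathrm{Unif}(0,1)$ under $\mathbb F_n^*$.

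The hard part will be the interchange of conditioning and weak convergence in the third step in the presence of the moving threshold $\tilde a_n$: one must confirm that the selection region's boundary is null under the limiting law (to invoke Portmanteau), that the limiting selection probability is strictly positive, and that replacing $\tilde a_n$ by $\tilde a$ is harmless, all while simultaneously controlling the $n$-dependence of $\Psi_n$ so the two limiting operations can be taken together. The only place the argument genuinely departs from Theorem~\ref{thm:nonrandomized:pivot} is the handling of the marginalized weight $h_n$, but this reduces to the elementary fact that the probability of an absolutely continuous (here log-concave) vector falling in a translated polyhedron is a function of the translation that is both Lipschitz and continuous in the defining offset $\tilde a$.
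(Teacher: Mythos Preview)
Your proposal is correct and follows precisely the approach the paper indicates: the paper omits the proof entirely, stating only that it is ``analogous to the proof of Theorem~\ref{thm:nonrandomized:pivot},'' and your argument is exactly that analogue---weak convergence of the post-selection law followed by the continuous-mapping theorem applied to the pivot---with the additional care of tracking how the marginalized randomization $\tilde\omega$ enters as a smooth weight $h_n$. If anything, you have supplied more detail than the paper itself, particularly in verifying that the boundary of the selection region is null under the limiting law and that $\Psi_n\to\Psi$ uniformly on compacta; the paper's own proof of Theorem~\ref{thm:nonrandomized:pivot} handles the corresponding steps in the non-randomized case rather tersely.
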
 

\begin{remarks}
\begin{itemize}[leftmargin=*]
\item[--] \citealt{tian2015selective, bootstrap_mv} also proved that $\mathcal{P}^R$ is asymptotically $\textnormal{Unif}(0,1)$ in low-dimensional and non-parametric setting with a different set of assumptions on the selection event.	
\item[--] To make the randomized selective pivot applicable in practice we need the post-selection consistency of the estimates of the covariance matrices. We refer the reader to the results of \cite[Lemma 3]{tian2015selective} and \cite[Lemma 18]{bootstrap_mv} for the results similar to Lemma \ref{lemma:consistency} in the randomized setting, where we marginalize over the added randomization. These results make the covariance estimates both pre-selection and post-selection consistent.
\end{itemize}
\end{remarks}


\subsection{Simulation examples}

We empirically demonstrate the performance of using selective sampler to carry out valid inference, for randomized Lasso with randomized cross-validation. We report the results for $\ell_2$ and logistic loss together with Gaussian randomization in particular. But again, any convex loss and any log-concave randomization falls into our framework. 

\noindent\textbf{Simulation setup}: The entries of $X$ are independent standard normal random variables with the columns of $X$ normalized to have empirical variance 1. In the case of $\ell_2$ loss, the response $y$ is generated from $y\sim\mathcal{N}(0, I_n)$, i.e.~a null signal with true sparsity $s=0$, independently of $X$. In the case of logistic loss, $y_i\overset{i.i.d}\sim\textnormal{Bernoulli}(1/2)$, $i=1,\ldots, n$. We take $n=600, p=100$. After we select the model $E$ by running randomized Lasso with $\lambda$ as the minimizer of $Err_R$, we compute the coverage by checking how many of the constructed intervals cover zero. Since the true sparsity is zero, checking the coverage is easy since $\beta_E^*=0$ in this setting. In general case, when the true parameter is with non-empty support, we would need to first check whether the selected model is a superset of the true model before checking the coverage.

The selective pivots have been constructed by sampling $(T=\bar{\beta}_E,\beta_E, Err_R)$ from the density in \eqref{eq:density:cv:full} with $\ell_2$ loss in Figure \ref{fig:random:cv:gaussian:loss} and with logistic loss in Figure \ref{fig:randomi:cv:logistic:loss}. After sampling we discard the samples of $(\beta_E,Err_R)$ and keep only the samples of $T$ to do inference. These experiments are repeated 100 times. 
The figures show the uniformity of selective $p$-values (in blue). We also show (in red) the empirical distribution of the naive $p$-values constructed based on the asymptotic normality of $T$ pre-selection. All the covariances are estimated using pairs bootstrap.

For more simulations, see Section \ref{sec:power} in the appendix. There we apply the traditional methods controlling multiple-testing errors, e.g.~Benjamini-Hochberg, to the selective $p$-values. 
We show good empirical control of false discovery rate (FDR) although we do not have theoretical guarantees of FDR control. To illustrate high statistical power of the selective $p$-values, we compare our method with knockoffs of \citealt{original_knockoffs} (in simulations favorable to knockoffs), which is known to have excellent FDR control and high power.

\begin{figure}[h!]
\centering
    \includegraphics[width=0.60\textwidth]{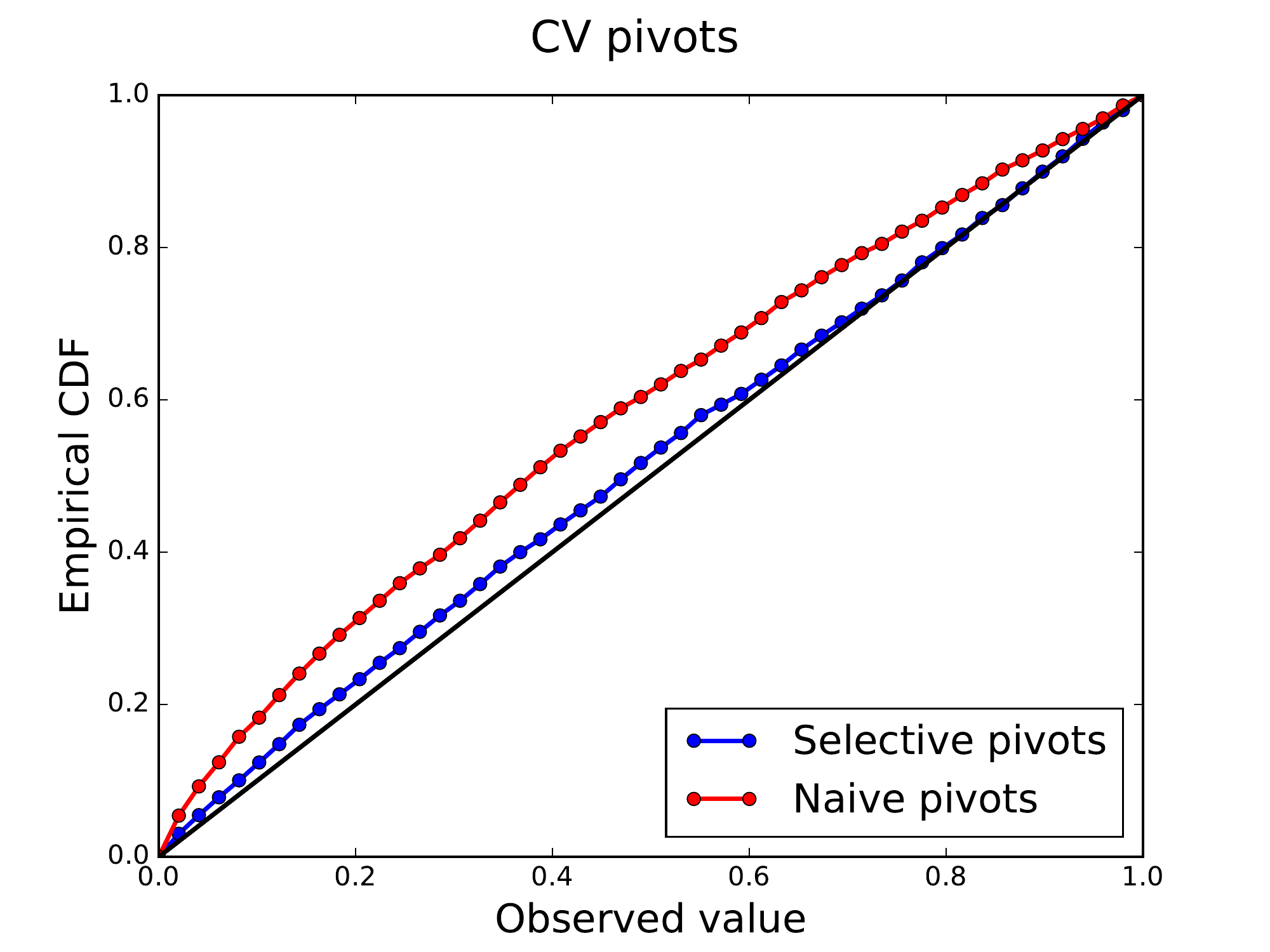}
    \caption{The coverage and the average length of selective intervals are 88$\%$ and 4.1; and for the naive ones are 82$\%$ and 3.25, respectively.}
    \label{fig:random:cv:gaussian:loss}
\end{figure}
  
\begin{figure}[h!]
  \centering  
    \includegraphics[width=0.60\textwidth]{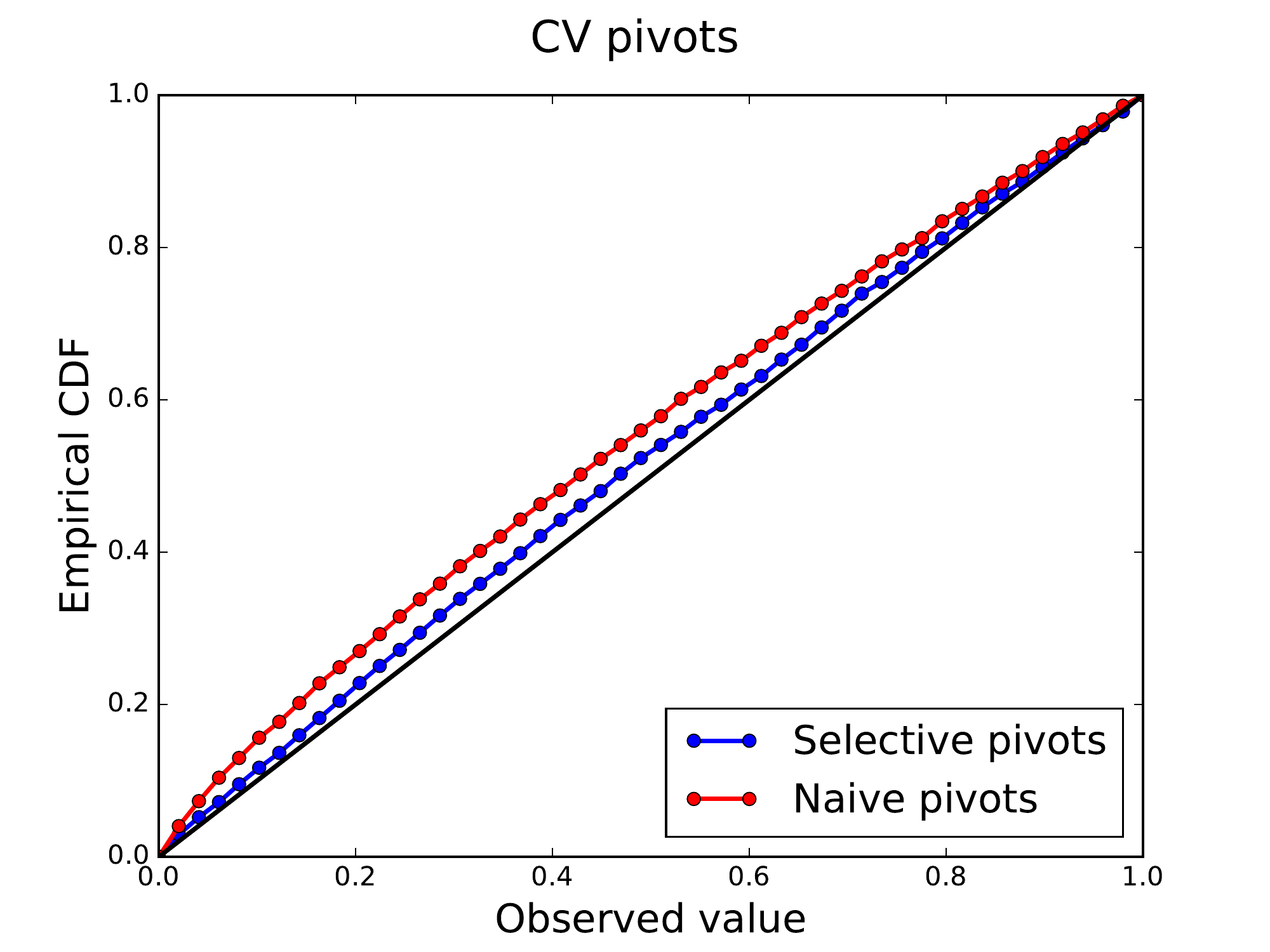}
    \caption{The selective pivots have been constructed by sampling $(T,\beta_E)$ from the density in \eqref{eq:density:cv:full} . The coverage and the average length of selective intervals are 88$\%$ and 8.8,; and for the naive ones  are 85$\%$ and 7.6, respectively.}
    \label{fig:randomi:cv:logistic:loss}
\end{figure}


\section{Inference after choosing a model based on AIC criteria} \label{sec:AIC}

To further demonstrate the applicability of our approach, we present a way of doing inference for the selected coefficients after choosing a model based on AIC criteria. 

\subsection{Inference after choosing a model from a given set of possible models}

Similar to cross-validation approach, we randomize the vector composed of AIC criteria values for each of the given models, in order to make it asymptotically jointly Gaussian. Furthermore, by requiring it to be jointly asymptotically Gaussian with the target statistic, we can  decompose randomized AIC vector with respect to the target statistics. For simplicity we take the set of models to be fixed in advance, i.e.~before looking at the data and the only selection adjustment is done by looking at one or more minimizers of the randomized AIC criteria vector.

Given the data $(X,y)\in\mathbb{R}^{n\times p}\times\mathbb{R}^n$ and a set of models $\mathfrak{E}=\{E_1,\ldots,E_L\}$, $E_l\subset\{1,\ldots,p\}$ for each $l=1,\ldots, L$, assuming Gaussian likelihood, we evaluate the AIC criterion for each of them as
\begin{equation*}
	Err_l = \frac{1}{n}\left\|y-X_{E_l}\bar{\beta}_{E_l}\right\|_2^2+a(|E_l|) \;\;\textnormal{ or }\;\;
	Err_l = \frac{1}{n}\left\|y-X_{E_l}\bar{\beta}_{E_l}\right\|_2^2\cdot a(|E_l|),
\end{equation*}
where $\bar{\beta}_{E_l}$ is the OLS estimator with the response $y$ and design matrix $X_{E_l}$ (keeping only predictors from $E_l$) and $a(|E_l|)$ is a function penalizing the size of the model $E_l$. Note that for Gaussian likelihood, AIC criterion can be in one of the two forms above, depending on whether the standard error of the residuals is known or not. Since $a(|E_l|)$, $l=1,\ldots, L$, is a sequence of constants, we omit these terms when proving asymptotic normality of the randomized AIC criteria vector (Section \ref{sec:AIC:proof}). Let us denote the vector of all AIC criteria as 
\begin{equation*}
	Err = \left(Err_1,\ldots, Err_L\right),
\end{equation*}
and its randomized version as
\begin{equation} \label{eq:randomized:AIC}
	Err_R = Err+R_1+R_2 = Err^{(1)}+R_2,
\end{equation}
where $R_1\times R_2\sim\mathcal{N}(0,\sigma_{R_1}^2)\times\mathbb{F}_{R_2}$, chosen as in Section \ref{sec:randomized:lasso:cv}. By looking at the indices $\mathcal{L}^*=\left\{l_1^*,\ldots, l_K^*\right\}$ of $K$ smallest values of $Err_R$, with $K$ specified in advance, we choose $K$ models from $\mathfrak{E}$ compromising the set $\mathfrak{E}^*=\{E_{l_1^*},\ldots, E_{l_K^*}\}\subset\mathfrak{E}$. Based on $\mathfrak{E}^*$, we choose the parameter of interest $\theta$ and the target test statistic $T$, asymptotically with mean $\theta$ and jointly asymptotically normal with $Err^{(1)}$.

In order to do valid inference post-selection for $\theta$, we want to condition on the fact that the $K$ smallest values of vector $Err_R$ are achieved at $\mathcal{L}^*$. This corresponds to an affine constraint on $Err_R$, hence we write it as 
\begin{equation*}
	B_{\mathcal{L}^*} \cdot Err_R \leq 0,
\end{equation*}
where $B_{\mathcal{L}^*}$ is a matrix that only depends on the set $\mathcal{L}^*$.

Given the parameter of interest $\theta$ chosen after looking at $\mathcal{L}^*$, assume we have the target statistic $T$ that is asymptotically normal pre-selection, treating $\mathcal{L}^*$ as fixed and not chosen based on data. In order to do valid inference for $\theta$ using $T$, we condition the distribution of $T$ on observing $\mathcal{L}^*$. The following proposition characterizes the post-selection distribution of $T$ used for inference.

\begin{proposition}
Assume $T$ and $Err^{(1)}$ are jointly asymptotically normal pre-selection, i.e.
\begin{equation*}
	\begin{pmatrix} T\\ Err^{(1)} \end{pmatrix}
	\overset{d}{\rightarrow} \mathcal{N} \left(\begin{pmatrix} \theta \\ \mu_{Err^{(1)}} \end{pmatrix}, \begin{pmatrix} \Sigma_T & \Sigma_{T,Err^{(1)}} \\ \Sigma_{Err^{(1)}, T} & \Sigma_{Err^{(1)}} \end{pmatrix} \right)
\end{equation*}
as $n\rightarrow\infty$. Denote $N_{Err^{(1)}} = Err^{(1)}-\Sigma_{Err^{(1)},T}\Sigma_T^{-1}T$. 
The asymptotic selective density on $(T,Err_R)$ conditional on $\mathcal{L}^*$ is proportional to
\begin{equation} \label{eq:AIC:selective:density}
	\phi_{(\theta,\Sigma_T)}(T)\cdot g_{R_2}\left(Err_R-\Sigma_{Err^{(1)},T}\Sigma_T^{-1}T-N_{Err^{(1)}}\right)
\end{equation}
with the restriction $Err_R\in\mathcal{E}_{\mathcal{L}^*}=\left\{Err_R'\in\mathbb{R}^L:B_{\mathcal{L}^*} \cdot Err_R'\leq 0\right\}$.
\end{proposition}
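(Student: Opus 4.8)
The plan is to follow the same change-of-measure argument used in the randomized Lasso proposition and in \cite{selective_sampler}, now specialized to the single ``query'' coming from randomized AIC. The starting point is the pre-selection joint asymptotic density of $(T, Err^{(1)}, R_2)$, which by the stated joint normality of $(T, Err^{(1)})$ and independence of $R_2$ factors as $\phi_{(\theta,\Sigma_T)}(T)\cdot\phi_{(\mu_{Err^{(1)}},\Sigma_{Err^{(1)}})}(Err^{(1)})\cdot g_{R_2}(R_2)$ (up to the lower-order corrections absorbed in the asymptotic statement). First I would perform the orthogonal decomposition $Err^{(1)} = \Sigma_{Err^{(1)},T}\Sigma_T^{-1}T + N_{Err^{(1)}}$, so that $N_{Err^{(1)}}$ is asymptotically independent of $T$; conditioning on its observed value removes the nuisance mean $\mu_{Err^{(1)}}$ from the picture, leaving the density on $(T, N_{Err^{(1)}}\text{ fixed}, R_2)$ proportional to $\phi_{(\theta,\Sigma_T)}(T)\cdot g_{R_2}(R_2)$.

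Next I would change variables from the randomization $R_2$ to the observable $Err_R$. Since $Err_R = Err^{(1)} + R_2 = \Sigma_{Err^{(1)},T}\Sigma_T^{-1}T + N_{Err^{(1)}} + R_2$, solving for $R_2$ gives the ``randomization reconstruction''
\[
	R_2 = Err_R - \Sigma_{Err^{(1)},T}\Sigma_T^{-1}T - N_{Err^{(1)}},
\]
and the Jacobian of the map $R_2 \mapsto Err_R$ (with $T$, $N_{Err^{(1)}}$ held fixed) is the identity. Substituting, the asymptotic density on $(T, Err_R)$ is proportional to $\phi_{(\theta,\Sigma_T)}(T)\cdot g_{R_2}\!\left(Err_R - \Sigma_{Err^{(1)},T}\Sigma_T^{-1}T - N_{Err^{(1)}}\right)$, which is exactly \eqref{eq:AIC:selective:density}. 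Finally, the selection event ``the $K$ smallest coordinates of $Err_R$ are indexed by $\mathcal{L}^*$'' was shown to be the affine constraint $B_{\mathcal{L}^*}\cdot Err_R \leq 0$, i.e.\ $Err_R \in \mathcal{E}_{\mathcal{L}^*}$; restricting the joint density above to this set and renormalizing gives the claimed post-selection density, since $T$ carries no additional constraint once we have conditioned on $N_{Err^{(1)}}$.

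The main obstacle is not the algebra but making the asymptotic statement rigorous: the pre-selection joint density of $(T, Err^{(1)})$ is only asymptotically Gaussian, so one must argue that conditioning on the affine event $\mathcal{E}_{\mathcal{L}^*}$ (an event of non-vanishing probability, by the invertibility of the relevant covariance, analogous to $\Sigma_{Err_R}$ in Corollary \ref{cor:randomized:cv:gaussian}) commutes with the weak convergence — this is precisely the content of the convergence arguments behind Theorem \ref{thm:nonrandomized:pivot} and Theorem \ref{thm:randomized:pivot}, which handle conditioning on affine selection events together with conditioning on $N_{\widetilde{D}}$. I would therefore present the derivation of \eqref{eq:AIC:selective:density} as a formal computation at the level of the limiting Gaussian $\Phi$ and then invoke the already-established machinery (the randomization reconstruction of \cite{selective_sampler} together with the conditional-CLT argument of Theorems \ref{thm:nonrandomized:pivot} and \ref{thm:randomized:pivot}) to justify that this limiting density is the correct asymptotic post-selection density, exactly as in the randomized Lasso case. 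The only AIC-specific check is that dropping the deterministic penalty terms $a(|E_l|)$ — which merely shift coordinates of $Err$ by constants — affects neither the Gaussian limit nor the affine form of $B_{\mathcal{L}^*}$, so it can be done without loss of generality.
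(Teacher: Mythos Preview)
Your proposal is correct and follows precisely the approach the paper intends: the paper does not spell out a proof for this proposition, but for the analogous proposition in Section~\ref{sec:randomized:lasso:randomized:cv} it states that ``the proof\ldots consists of writing the randomization reconstruction for $\omega$ and $R_2$ so we omit it here,'' and your argument is exactly that randomization-reconstruction/change-of-variables computation specialized to the single AIC view. The additional care you take in invoking Theorems~\ref{thm:nonrandomized:pivot} and~\ref{thm:randomized:pivot} to justify the asymptotic statement goes slightly beyond what the paper makes explicit, but is in the same spirit.
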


Recall that we assume $\mathfrak{E}$ and $K$ are fixed in advance, i.e.~before looking at the data. It is, however, not hard to adjust our inference when we choose these parameters in data-dependent ways. One example is presented in the next section.


\subsection{Forward-stepwise with data-dependent number of steps} \label{sec:FS:main}

In most of the selective inference literature the number of steps $L$ in forward-stepwise (FS) algorithm has been kept fixed. The selective inference is done to adjust only for the selected predictors and not for the number of steps, except in \cite{sequential_post_selection}, where they developed a polyhedral description for choosing $L$ in a data-dependent way for non-randomized FS. After adding randomness in the objective of FS (randomized FS algorithm), selective inference adjustments for fixed $L$ are presented in \cite{selective_sampler} and its bootstrap version in \cite{bootstrap_mv}. We explain how the general method we introduced can be applied to do inference in cases where the number of steps $L$ is chosen in a data-dependent way.

For $l=1,2,\ldots,$ we solve a randomized or non-randomized FS algorithm on the data $(X,y)$ to select a set of predictors, denoted as $E_l$. The FS objective deciding which variables to include at a given step, might have a different loss function from the objective which decides when to stop adding variables. In order to decide whether to include the set $E_l$ chosen at step $l$ or stop at all variables chosen before the step $l$, we compute the resulting FS-criterion, $Err_l\in\mathbb{R}$, of the selected model chosen up to step $l$ including $l$. $Err_l$ is usually the AIC criterion of the model $\cup_{l'=1}^lE_{l'}$, the union of the selected predictors at steps $l'=1,\ldots, l$.

Denote the randomized version of $Err_l$, $1\leq l\leq \textnormal{min}\{n,p\}$ as $Err_{R,l}$, computed as in \eqref{eq:randomized:AIC}. We choose $L$ to be the first index $l$ for which $Err_{R,l}$, does not change much compared to $Err_{R,l-1}$, i.e.~it satisfies $\left\{Err_{R,l}\leq \eta\cdot Err_{R,l-1}\right\}$ for a pre-specified threshold $\eta$. Precisely,
\begin{equation*}
	L=\min\left\{2\leq l\leq\min\{n,p\} \;\textnormal{ s.t.}\; Err_{R,l}\leq \eta \cdot Err_{R,l-1} \right\}.
\end{equation*}
Usually $\eta$ is chosen to be close to 1 and smaller than 1, implying the randomized FS criterion, $Err_{R,l}$, does not decrease much at step $l$ comparing to the randomized FS criterion at step $l-1$, $Err_{R,l-1}$. Other variants of choosing the stopping time, e.g.~running the FS algorithm for all $\textnormal{min}\{p,n\}$ steps and then choose the final model based on the minimum of the randomized FS-criterion evaluated at each of the models, is also doable via the same framework.

Having described the model selection procedure, we turn to doing inference having looked at the outcomes.
The adjustment is done similarly to the cross-validation example. If we run the non-randomized FS at each step to add a predictor, we describe the selection event by appending two sets of constraints. The first set represents the affine constraints coming from $Err_R=\left(Err_{R,1},\ldots, Err_{R,L}\right)$, accounting for choosing $L$ in a data-dependent way. The second set of constraints accounts for choosing set $E_l$ at each step $l=1,\ldots, L$; \cite{sequential_post_selection} describes this selection event that treats $L$  as a constant in detail. In order to have valid inference, we have to account for both of the events.

On the other hand, if we run randomized FS algorithm at each of the $L$ steps, we use the selective sampler to get the target samples for its post-selection distribution. In this scenario, we take the randomization added to $Err=\left(Err_1,\ldots, Err_L\right)\in\mathbb{R}^L$ vector to be independent of the randomizations added in the FS objective at each step (when choosing the variables). Having these randomizations to be independent allows for a simple selective density since the randomizations densities separate. In other words, we have $L+1$ views/queries on the data: $L$ of which come from running the total of $L$ FS algorithms to pick the variables, and one view coming from the constraints on the vector $Err_R$, accounting for choosing the stopping time. Section \ref{sec:FS:details} in the appendix describes this scenario in detail including the full description for the selective density for the target statistic after performing FS with the data-dependent number of steps.


\subsection{Proving randomized AIC criteria vector is asymptotically jointly normal} \label{sec:AIC:proof}

We prove $Err_R$ defined in \eqref{eq:randomized:AIC} is an asymptotically jointly normal vector under the assumptions to be stated. To simplify the notation we assume the set $\mathfrak{E}$ consists of two models $E_1$ and $E_2$.
The data $(X,y)\in\mathbb{R}^{n\times p}\times\mathbb{R}^n$ consists of $n$ i.i.d.~samples $(x_i,y_i)\sim\mathbb{F}$.
Given a selected set $E_i$, $i=1,2$, denote with $\bar{\beta}_{E_i}=\left(X^\top_{E_i}X_{E_i}\right)^{-1}X_{E_i}^\top y$, the OLS estimator $y\sim X_{E_i}$. Then we have
\begin{equation}
	\left\|y-X_{E_i}\bar{\beta}_{E_i}\right\|_2^2 =\left\|\left(I-P_{X_{E_i}}\right)y\right\|_2^2 = \left\|\left(I-P_{X_{E_i}}\right)\epsilon_{E_i}\right\|_2^2=\left\|\epsilon_{E_i}\right\|_2^2-\left\|P_{X_{E_i}}\epsilon_{E_i}\right\|_2^2,
\end{equation}
where $P_{X_{E_i}}=X_{E_i}\left(X^{\top}_{E_i} X_{E_i}\right)^{-1}X^{\top}_{E_i}$, $\epsilon_{E_i} = y-X_{E_i}\beta_{E_i}^*$ and $\beta_{E_i}^*=\left(\mathbb{E}\left[X_{E_i}^\top X_{E_i}\right]\right)^{-1}\mathbb{E}\left[X_{E_i}^\top y\right]$ are the population OLS parameters corresponding to $E_i$. Notice that the coordinates of $\epsilon_{E_i}=(\epsilon_{E_i,1},\ldots, \epsilon_{E_i,n})$ are i.i.d. Let us denote the following variances as $\textnormal{Var}\left(\epsilon_{E_i,1}^2\right)=\sigma_{E_i}^2$, $i=1,2$, and their cross covariance as $\textnormal{Cov}\left(\epsilon_{E_1,1}^2, \epsilon_{E_2,1}^2\right)=\sigma_{E_1,E_2}^2$. We assume all of them are finite.

Let us first provide an intuitive explanation why $\left\|y-X_{E_1}\bar{\beta}_{E_1}\right\|_2^2$ and $\left\|y-X_{E_2}\bar{\beta}_{E_2}\right\|_2^2$ might not be asymptotically jointly Gaussian without added randomization.
The difference of these two quantities is 
\begin{equation*}
\begin{aligned}
	&\left\|y-X_{E_1}\bar{\beta}_{E_1}\right\|_2^2-\left\|y-X_{E_2}\bar{\beta}_{E_2}\right\|_2^2 = \left\|\left(I-P_{X_{E_1}}\right)y\right\|_2^2-\left\|\left(I-P_{X_{E_2}}\right)y\right\|_2^2 \\
	&=\left\|\epsilon_{E_1}\right\|_2^2-\left\|P_{X_{E_1}}\epsilon_{E_1}\right\|_2^2-\left\|\epsilon_{E_2}\right\|_2^2+\left\|P_{X_{E_2}}\epsilon_{E_2}\right\|_2^2.
\end{aligned}
\end{equation*}
In general, the quantity above is not Gaussian. A simple illustration is as follows. Take the model for $\mathbb{F}^n$ to be a linear model $y=X\beta+\epsilon$ with $\epsilon|X\sim\mathcal{N}(0,I_n)$. Assuming $\textnormal{supp}(\beta)$ to be a subset of both $E_1$ and $E_2$, we have $\epsilon_{E_1}=\epsilon_{E_2}=\epsilon$.
Then the above quantity becomes $\epsilon^{\top}\left(P_{X_{E_2}}-P_{X_{E_1}}\right)\epsilon$. We know that $\left(P_{X_{E_2}}-P_{X_{E_1}}\right)$ is symmetric and $\text{Rank}\left(P_{X_{E_2}}-P_{X_{E_1}}\right)\leq \text{Rank}\left(P_{X_{E_1}}\right)+\text{Rank}\left(P_{X_{E_2}}\right)\leq |E_1|+|E_2|$. Assuming $|E_1|+|E_2|$ does not grow with $n$, the difference between testing errors is distributed as $\chi^2$ with a fixed degree of freedom instead of Gaussian. Therefore, we add randomization to both quantities to make the joint asymptotic normality possible. 

Before stating the main theorem of this section, let us state the assumptions needed.

\begin{itemize}
\item \textbf{Consistency assumption:} We assume $\sqrt{n}\left\|\bar{\beta}_{E_i}-\beta_{E_i}^*\right\|_2^2\overset{P}{\rightarrow}0 $ as $n\rightarrow\infty$.
\item \textbf{Moment assumption:} We assume $\sigma_{E_1}$, $\sigma_{E_2}$, $\sigma_{E_1,E_2}$, $\mathbb{E}\left[\left\|x_{1, E_i}\right\|_2^2\right]$, $i=1,2$, are all finite. $x_{1,E_i}\in\mathbb{R}^{|E_i|}$ denotes a sub-vector of $x_1\in\mathbb{R}^p$ corresponding to the coordinates in $E_i$, $i=1,2$.
\end{itemize}

\begin{theorem}[Randomized AIC curve is asymptotically jointly normal] \label{thm:AIC}
Assume the above consistency and moment assumptions hold, we have
\begin{equation}
	\frac{1}{\sqrt{n}}\left(\begin{pmatrix} \left\|y-X_{E_1}\bar{\beta}_{E_1}\right\|_2^2 \\ \left\|y-X_{E_2}\bar{\beta}_{E_2}\right\|_2^2\end{pmatrix}-\begin{pmatrix} \mathbb{E}\left[\left\|\epsilon_{E_1}\right\|_2^2\right]\\ \mathbb{E}\left[\left\|\epsilon_{E_2}\right\|_2^2\right]\end{pmatrix}\right)+\begin{pmatrix}
		R_1 \\ R_2
	\end{pmatrix} 
	\overset{d}{\rightarrow} \mathcal{N}\left(\begin{pmatrix} 0 \\ 0\end{pmatrix},\begin{pmatrix} \sigma_{E_1}^2+\tau_{R_1}^2 & \sigma_{E_1,E_2}^2 \\ \sigma_{E_1,E_2}^2 & \sigma_{E_2}^2+\tau_{R_2}^2\end{pmatrix}\right)
\end{equation}
as $n\rightarrow\infty$, where $R_1\sim\mathcal{N}\left(0,\tau_{R_1}^2\right)$ and $R_2\sim\mathcal{N}\left(0,\tau_{R_2}^2\right)$ are independent Gaussian random variables and independent of the data $(X,y)$; $\tau_{R_1}$ and $\tau_{R_2}$ are constants.
\end{theorem}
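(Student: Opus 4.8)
The plan is to reduce the statement to a plain application of the multivariate central limit theorem for i.i.d.\ vectors, after discarding an asymptotically negligible remainder. The starting point is the exact algebraic identity recorded just above the theorem, $\left\|y-X_{E_i}\bar\beta_{E_i}\right\|_2^2 = \left\|\epsilon_{E_i}\right\|_2^2 - \left\|P_{X_{E_i}}\epsilon_{E_i}\right\|_2^2$ for $i=1,2$, which lets us write
\begin{equation*}
\frac{1}{\sqrt n}\left(\left\|y-X_{E_i}\bar\beta_{E_i}\right\|_2^2 - \mathbb{E}\left[\left\|\epsilon_{E_i}\right\|_2^2\right]\right) = \frac{1}{\sqrt n}\sum_{j=1}^n\left(\epsilon_{E_i,j}^2 - \mathbb{E}\left[\epsilon_{E_i,1}^2\right]\right) - \frac{1}{\sqrt n}\left\|P_{X_{E_i}}\epsilon_{E_i}\right\|_2^2,
\end{equation*}
using $\mathbb{E}[\|\epsilon_{E_i}\|_2^2]=n\,\mathbb{E}[\epsilon_{E_i,1}^2]$.

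First I would treat the leading term. Since $(x_j,y_j)$ are i.i.d.\ under $\mathbb{F}$, the pairs $(\epsilon_{E_1,j}^2,\epsilon_{E_2,j}^2)$, $j=1,\dots,n$, form an i.i.d.\ sequence of $\mathbb{R}^2$-valued vectors with mean $(\mathbb{E}[\epsilon_{E_1,1}^2],\mathbb{E}[\epsilon_{E_2,1}^2])$ and covariance $\Sigma_0=\begin{pmatrix}\sigma_{E_1}^2 & \sigma_{E_1,E_2}^2\\ \sigma_{E_1,E_2}^2 & \sigma_{E_2}^2\end{pmatrix}$, all entries finite by the moment assumption; the Lindeberg--L\'evy CLT then gives
\begin{equation*}
\frac{1}{\sqrt n}\begin{pmatrix}\sum_j\left(\epsilon_{E_1,j}^2-\mathbb{E}[\epsilon_{E_1,1}^2]\right)\\ \sum_j\left(\epsilon_{E_2,j}^2-\mathbb{E}[\epsilon_{E_2,1}^2]\right)\end{pmatrix}\overset{d}{\rightarrow}\mathcal{N}(0,\Sigma_0).
\end{equation*}
Next I would show the remainder is negligible. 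Using $X_{E_i}\bar\beta_{E_i}=P_{X_{E_i}}y=X_{E_i}\beta_{E_i}^*+P_{X_{E_i}}\epsilon_{E_i}$ gives $P_{X_{E_i}}\epsilon_{E_i}=X_{E_i}(\bar\beta_{E_i}-\beta_{E_i}^*)$, hence
\begin{equation*}
\frac{1}{\sqrt n}\left\|P_{X_{E_i}}\epsilon_{E_i}\right\|_2^2 = \sqrt n\,(\bar\beta_{E_i}-\beta_{E_i}^*)^\top\Big(\tfrac1n X_{E_i}^\top X_{E_i}\Big)(\bar\beta_{E_i}-\beta_{E_i}^*) \le \Big\|\tfrac1n X_{E_i}^\top X_{E_i}\Big\|_{\mathrm{op}}\cdot\sqrt n\,\big\|\bar\beta_{E_i}-\beta_{E_i}^*\big\|_2^2,
\end{equation*}
where the first factor converges in probability to $\|\mathbb{E}[x_{1,E_i}x_{1,E_i}^\top]\|_{\mathrm{op}}<\infty$ (law of large numbers, using $\mathbb{E}\|x_{1,E_i}\|_2^2<\infty$) and the second factor tends to $0$ in probability by the consistency assumption; so both coordinates, and therefore the $\mathbb{R}^2$-vector of them, are $o_P(1)$.

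Finally I would assemble the pieces. The randomization vector $(R_1,R_2)\sim\mathcal{N}(0,\mathrm{diag}(\tau_{R_1}^2,\tau_{R_2}^2))$ has a fixed law and is independent of the data, so it converges jointly with the leading term to a pair of independent Gaussians; adding the $o_P(1)$ remainder leaves the limit unchanged by Slutsky's lemma, and the continuous mapping theorem applied to the sum yields convergence to the convolution $\mathcal{N}\big(0,\Sigma_0+\mathrm{diag}(\tau_{R_1}^2,\tau_{R_2}^2)\big)$, which is exactly the claimed covariance. The passage from $L=2$ to any fixed $L$ is identical, since $\mathrm{Rank}(P_{X_{E_l}})=|E_l|$ stays bounded. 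I do not expect a genuine obstacle: the closest thing to a subtle point is verifying that the $(\epsilon_{E_1,j}^2,\epsilon_{E_2,j}^2)$ are genuinely i.i.d.\ with cross-covariance exactly $\sigma_{E_1,E_2}^2$, and confirming the stated moment and consistency assumptions are strong enough to kill the projection remainder — both handled by the steps above.
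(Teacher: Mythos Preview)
Your proposal is correct and follows essentially the same route as the paper's proof: the paper uses the identical decomposition $\|y-X_{E_i}\bar\beta_{E_i}\|_2^2=\|\epsilon_{E_i}\|_2^2-\|X_{E_i}(\bar\beta_{E_i}-\beta_{E_i}^*)\|_2^2$, applies the CLT to the first piece, and bounds the second by $\big(\tfrac1n\sum_j\|x_{j,E_i}\|_2^2\big)\cdot\sqrt n\,\|\bar\beta_{E_i}-\beta_{E_i}^*\|_2^2=O_P(1)\cdot o_P(1)$ before concluding. The only cosmetic difference is that you control the remainder via the operator norm of $\tfrac1n X_{E_i}^\top X_{E_i}$ rather than its trace, and you make the final Slutsky/independence step for $(R_1,R_2)$ more explicit than the paper does.
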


The proof of the theorem is given in Section \ref{sec:proofs} in the appendix.

\begin{remark}
	Note that when the population residuals $\epsilon_{E_1}$ and $\epsilon_{E_2}$, corresponding to the selected sets of predictors $E_1$ and $E_2$ respectively, are not that same, the difference $\left\|y-X_{E_1}\bar{\beta}_{E_1}\right\|_2^2$ $-\left\|y-X_{E_2}\bar{\beta}_{E_2}\right\|_2^2$ is not independent of the data. Hence in general the selection coming from looking at the minimum among the randomized version of the vector $\bigl(\left\|y-X_{E_1}\bar{\beta}_{E_1}\right\|_2^2,$ $ \left\|y-X_{E_2}\bar{\beta}_{E_2}\right\|_2^2\bigr)$ cannot be ignored. 
\end{remark}


\section{Marginal LOCO parameter} \label{sec:loco}

So far, we have discussed adding randomization to the criteria vector, whether it was cross-validation curve or AIC criteria curve, to make it asymptotically jointly Gaussian. Recall that for valid post-selection inference we need the target statistic to be asymptotically Gaussian as well pre-selection. This requirement was true in all of the examples so far since we have taken the target statistic to be the OLS estimator. In this section, however, the target parameter of interest is no longer the population regression coefficient $\beta_E^*$ but the marginal LOCO parameter to be defined in this section. Consequently, the target statistic also changes and in this case we need to add randomization to it to make it asymptotically jointly Gaussian since it might not satisfy that requirement otherwise.

For a given fixed set of variables $E\subset\{1,\ldots, p\}$, recall that the population regression parameter $\beta_E^*$ is defined as
$\beta_E^* = \left(\mathbb{E}_{(x,y)\sim\mathbb{F}}\left[x_E x_E^\top\right]\right)^{-1}\mathbb{E}_{(x,y)\sim\mathbb{F}}\left[yx_E\right]$, where the expectations are under a single data pair $(x, y)\sim\mathbb{F}$, $x\in\mathbb{R}^p$ and $y\in\mathbb{R}$. We define the \textit{marginal LOCO} parameter for the $j$-th predictor as
\begin{equation*}
	\gamma_j(\mathbb{F},E)=\mathbb{E}_{(x, y)\sim\mathbb{F}}\left[\left(y-x_{E\setminus j}^\top\beta_{E\setminus j}^*\right)^2-\left(y-x_E^\top\beta_E^*\right)^2\right], \;\; j\in E,
\end{equation*}
where the expectation is over a data pair $(x,y)\sim\mathbb{F}$. $x_{E\setminus j}$ is a vector computed by leaving the $j$-th covariate out from $x_E$ and $\beta_{E\setminus j}^*= \left(\mathbb{E}_{(x,y)\sim\mathbb{F}}\left[x_{E\setminus j} x_{E\setminus j}^\top\right]\right)^{-1}\mathbb{E}_{(x,y)\sim\mathbb{F}}\left[yx_{E\setminus j}\right]$ is the population regression parameter from only using the covariates in $E\setminus j$. The marginal LOCO parameter defined above measures the importance of a single predictor among the selected ones. By computing the difference in the second moment of the true residuals $y-x^{\top}_E\beta_E^*$ including the $j$-th predictor and the true residuals $y-x_{E\setminus j}^\top\beta_{E\setminus j}^*$, $\gamma_j(\mathbb{F}, E)$ measures the influence of a particular predictor in reducing $\ell_2$ loss.

\begin{remark}
The construction of the marginal LOCO parameter is inspired by the Leave out covariate (LOCO) quantity of \cite{rinaldo2016bootstrapping}.
Let $(X_i^{(1)},y^{(1)}_i)\overset{i.i.d.}{\sim}\mathbb{F}$, $i=1,\ldots, n_1$, and $(X_i^{(2)}, y^{(2)}_i)\overset{i.i.d.}{\sim}\mathbb{F}$, $i=1,\ldots, n_2$, be two independent samples denoted as $(X^{(1)},y^{(1)})\in\mathbb{R}^{n_1\times p}\times\mathbb{R}^{n_1}$ and $(X^{(2)}, y^{(2)})\in\mathbb{R}^{n_2\times p}\times\mathbb{R}^{n_2}$ of sizes $n_1$ and $n_2$, respectively.
For any predictor $j=1,\ldots,p$, the \textit{conditional LOCO} of \cite{rinaldo2016bootstrapping} is defined as
\begin{equation} \label{eq:conditional:LOCO}
	\gamma_j\left(\mathbb{F},\hat{\beta}^{(1)}, \hat{\beta}^{(1)}_{-j}\right) = \mathbb{E}_{(x,y)\sim\mathbb{F}}\left[\left|y-x^\top\hat{\beta}^{(1)}\right|-\left|y-x^\top\hat{\beta}^{(1)}_{-j}\right|\:\Big|\:\hat{\beta}^{(1)}, \hat{\beta}^{(1)}_{-j}\right],
\end{equation}
where the estimator $\hat{\beta}^{(1)}=\hat{\beta}^{(1)}(X^{(1)}, y^{(1)})$ is computed using $(X^{(1)}, y^{(1)})$ and the estimator $\hat{\beta}^{(1)}_{-j}=\hat{\beta}^{(1)}_{-j}(X^{(1)}_{-j}, y^{(1)})$ is computed on $(X^{(1)}_{-j}, y^{(1)})$ with $X_{-j}^{(1)}$ denoting the data $X^{(1)}$ without the $j$-th covariate. For simplicity, $\hat{\beta}^{(1)}_{-j}$ has an appended zero at the $j$-th coordinate. The expectation in  \eqref{eq:conditional:LOCO} is over one data point $(x,y)\sim\mathbb{F}$. 
The conditional LOCO  measures the influence of a particular predictor, $X_j$, on a prediction error of an estimator, $\hat{\beta}^{(1)}$. However, this measure is conditional on observing a particular training data, hence does not take into account the variance of the estimator and the conditional LOCO  of \cite{rinaldo2016bootstrapping} remains a random variable conditional on selection. 
\end{remark}

We build two test statistics that can be used for inference on $\gamma_j(\mathbb{F},E)$ given that the set $E$ is selected based on the data using a model selection algorithm. Given the training data $(X^{(1)},y^{(1)})\sim\mathbb{F}^{n_1}$ and the test data $(X^{(2)},y^{(2)})\sim\mathbb{F}^{n_2}$, we use a model selection algorithm applied to the training data only to get an active set $E$. The first test statistics, denoted as $\hat{\gamma}^{split}_j=\hat{\gamma}_j^{split}\left((X^{(2)},y^{(2)}),E\right)$, is constructed using the test set only. To do inference for the marginal LOCO parameter, we use the distribution of the $\hat{\gamma}_j^{split}$ conditional on the training data.
The second one, denoted as $\hat{\gamma}_j^{carved}=\hat{\gamma}_j^{carved}\left((X,y),E\right)$, is constructed using the whole dataset $(X,y)\in\mathbb{R}^{(n_1+n_2)\times p}\times\mathbb{R}^{n_1+n_2}$ we get by combining the training and test sets. To do inference for the marginal LOCO parameter, we use the distribution of $\hat{\gamma}_j^{carved}$ conditional on selecting $E$ in the model selection process for which we only used the training data. We call this approach data carving \citep{fithian2014optimal}.
By conditioning on a smaller part of the training data in the data carving approach compared to data splitting, we have more power for inference.

\begin{itemize}

\item \textbf{Inference for the marginal LOCO via data splitting.}
 The test statistic for $\gamma_j(\mathbb{F},E)$, $j\in E$, is computed using only the second (test) dataset and the added randomization
\begin{equation} \label{eq:loco:data:splitting:estimator}
	\hat{\gamma}_j^{split}\left((X^{(2)}, y^{(2)}),E\right) = \frac{1}{n_2}\sum_{i=1}^{n_2}\left[\left(y^{(2)}_i-{x_{i, E\setminus j}^{(2)}}^\top \bar{\beta}_{E\setminus j}^{(2)}\right)^2-\left(y^{(2)}_i-{x_{i,E}^{(2)}}^{\top}\bar{\beta}_E^{(2)}\right)^2\right]+R_j,
\end{equation}
where $R_j\sim\mathcal{N}(0,\sigma_R^2)$ is independent of the data. ${x^{(2)}_{i,E}}^{\top}$ and ${x_{i,E\setminus j}^{(2)}}^{\top}$ represent the $i$-th row of $X^{(2)}$ restricted to $E$ and $E\setminus j$, respectively.
 $\bar{\beta}^{(2)}_E$ and $\bar{\beta}^{(2)}_{E\setminus j}$ are the OLS estimators computed based on $y^{(2)}\sim X_E^{(2)}$ and $y^{(2)}\sim X_{E\setminus j}^{(2)}$, respectively. Note that the estimators $\bar{\beta}_{E\setminus j}^{(2)}$ and $\bar{\beta}_E^{(2)}$ above are computed on the test data; they can be computed on the training data or even the whole data since we only need these estimators to be consistent for $\beta_{E\setminus j}^*$ and $\beta_E^*$, respectively. 
The inference for $\gamma_j(\mathbb{F},E)$, $j\in E$ is done using the CLT
\begin{equation*}
	\sqrt{n_2}\left(\hat{\gamma}_j^{split}\left((X^{(2)}, y^{(2)}),E\right) -\gamma_j(\mathbb{F},E)\right)
	\overset{d}{\rightarrow}\mathcal{N}\left(0, \sigma_j^{split}(\mathbb{F},E)^2\right)
\end{equation*}
as $n_2\rightarrow\infty$, where the variance $\sigma^{split}_j(\mathbb{F},E)^2$ is estimated using pairs bootstrap.

\item \textbf{Inference for the marginal LOCO via data carving.}
Using the whole dataset, we define the target statistic to be
\begin{equation} \label{eq:loco:carved:estimator}
	\hat{\gamma}_j^{carved}\left((X,y), E\right)=	\frac{1}{n}\sum_{i=1}^n\left(\left(y_i-x^{\top}_{i, E\setminus j}\bar{\beta}_{E\setminus j}\right)^2-\left(y_i-x^{\top}_{i, E,i}\bar{\beta}_E\right)^2\right)+R_j,
\end{equation}
where $R_j\sim\mathcal{N}(0,\sigma_R^2)$. $x_{i,E}^{\top}$ and $x_{i,E\setminus j}^{\top}$ represent the $i$-th row of $X$ restricted to $E$ and $E\setminus j$ respectively. $\bar{\beta}_E$ and $\bar{\beta}_{E\setminus j}$ are OLS estimators computed based on $y\sim X_E$ and $y\sim X_{E\setminus j}$, respectively. Pre-selection, meaning we treat $E$ as fixed, there is a CLT
\begin{equation*}
	\sqrt{n_1+n_2}\left(\hat{\gamma}^{carved}_j\left((X,y),E\right)-\gamma_j(\mathbb{F},E)\right)
	\overset{d}{\rightarrow} \mathcal{N}\left(0,\sigma_j^{carved}(\mathbb{F},E)^2\right) 
\end{equation*}
as $n_1+n_2\rightarrow\infty$ for some variance $\sigma_j^{carved}(\mathbb{F},E)^2$.
The selective $p$-values and intervals are constructed using the estimator in \eqref{eq:loco:carved:estimator} as the target statistics and its post-selective distribution under the null for inference. 

For example, if we use Lasso with fixed penalty value on the training data $(X^{(1)}, y^{(1)})$ to select $E$, we need to condition on this selection event when computing the distribution of the target statistic post-selection. Since we need to re-write the selection event in terms of the target statistic $\hat{\gamma}_j$, $j\in E$, we decompose the data vector $D=\begin{pmatrix}\bar{\beta}^{(1)}_E \\ {{X_{-E}}^{(1)}}^\top\left(y^{(1)}-X_E^{(1)}\bar{\beta}^{(1)}_E\right) \end{pmatrix}$ in terms of this target statistic. Note that all the quantities in the data vector are computed only based on the training data since the selection event depends on the training data only. 
Assuming the joint asymptotic normality of $\hat{\gamma}_j^{carved}$ and $D$, we can re-write $D=N_{\hat{\gamma}_j^{carved}}+\Sigma_{D,\hat{\gamma}_j^{carved}} \cdot \sigma_j^{carved}(\mathbb{F},E)^{-2}\cdot \hat{\gamma}_j^{carved}$, and do inference as in the general framework proposed. We do not write an explicit proof for this joint normality but it follows easily given the previous proofs under mild moment conditions.

\end{itemize}


\begin{remark}
Note that adding the randomization $R_j$ in  \eqref{eq:loco:data:splitting:estimator} is crucial in having $\hat{\gamma}^{split}$ to be asymptotically normal. Similarly, adding the randomization $R_j$ in \eqref{eq:loco:carved:estimator} is crucial to satisfy the requirements of the asymptotic normality of the target statistic and the joint asymptotic normality of the data vector with the target statistic.
The randomization introduced here does not modify the model selection procedure, i.e.~the selected set $E$ does not depend on $R_j$. Hence in this application, the randomization $R_j$ is not within the model selection procedure, but in the inference step.
\end{remark}

\subsection{Simulation example} We take the design matrix $X$ to be of size $n=200$ and $p=50$ with entries i.i.d.~standard Gaussian and normalized to have empirical variance 1. The response $y\sim\mathcal{N}(0, I_n)$, i.e.~a null signal. We use $80\%$ of the data to select the model using plain Lasso with fixed value of $\lambda$. We construct $p$-values and confidence intervals for the  marginal LOCO parameter $\gamma_j(\mathbb{F},E)$, for all selected coefficients $j\in E$, based on both data splitting and data carving. The intervals based on data splitting are constructed using the asymptotic normality of $\hat{\gamma}_j^{split}$, $j\in E$. The carved ones are constructed using the selective sampler.  We sample $(T,\beta_E)$ from density \eqref{eq:randomized:lasso:density}, where $g_{\omega}$ is the normal density coming from the random split; for details see \cite{bootstrap_mv}.

Figure \ref{fig:loco_without_naive} presents the $p$-values for testing whether the marginal LOCO parameter is zero for all $j\in E$. We see that both split and carved $p$-values are valid; however the carved intervals are much shorter than the split intervals as expected since the carved interval leaves more information for inference.

\begin{figure}[h!]
  \centering
    \includegraphics[width=0.60\textwidth]{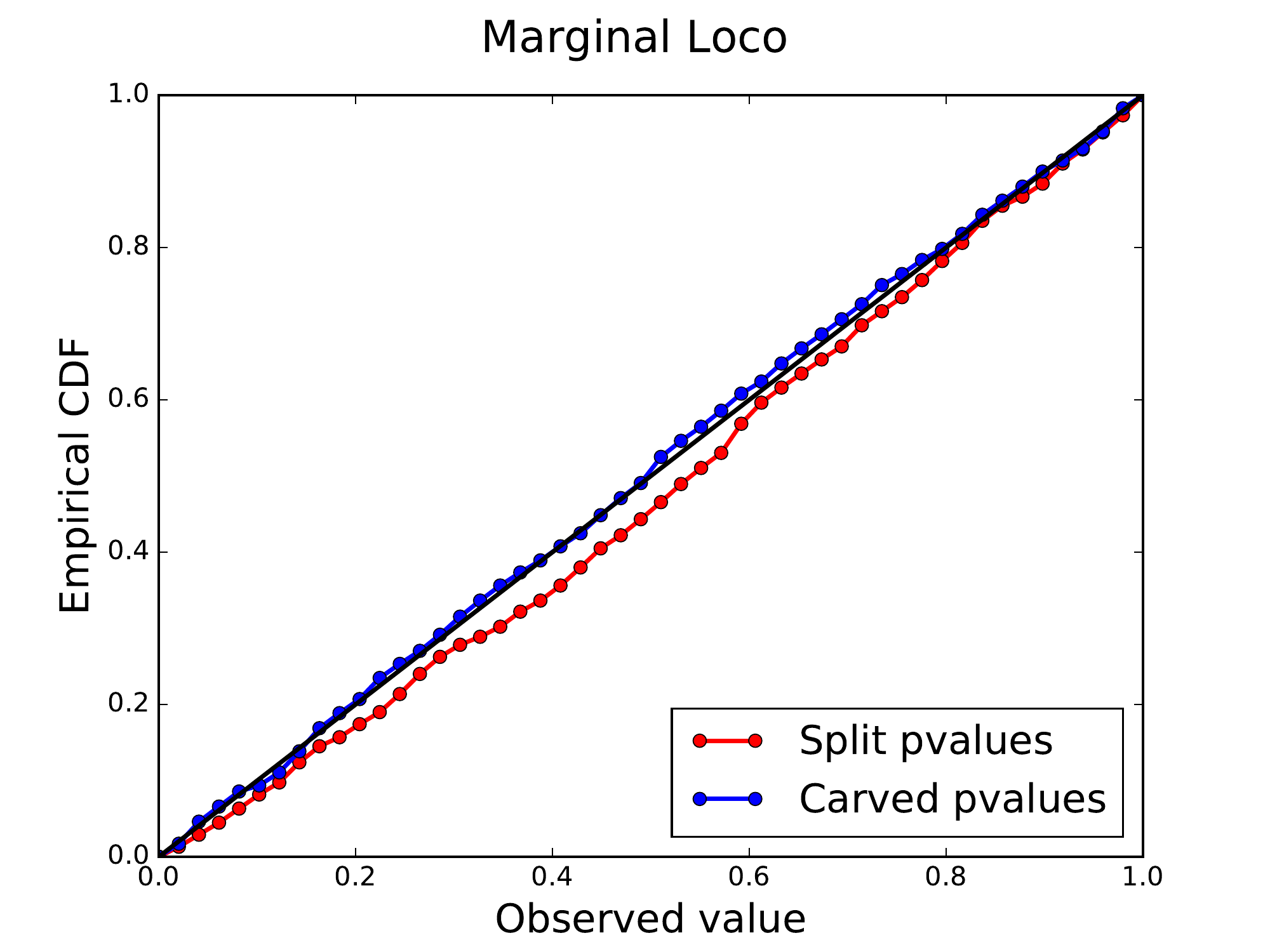}
    \caption{The figure presents the split and carved $p$-values constructed for the marginal LOCO parameter. The average coverage and the average length of intervals via data splitting are 92$\%$ and 0.34, respectively, and for the carved ones  91$\%$ and 0.18, respectively.}
    \label{fig:loco_without_naive}
\end{figure}

\section{Conclusion} 

We have presented a general way of doing selective inference by adjusting for choosing a model based on prediction errors. The examples of our general framework include adjusting inference for choosing penalty level via cross-validation in either randomized or non-randomized Lasso, doing inference after choosing a model based on the minimizer of AIC criteria and after using FS to choose the model with data-dependent stopping time.  It is worth noting that our methods can be applied to any convex loss functions in both the optimization objective and computing the criteria vector, e.g.~$\ell_2$ in both the objective of Lasso and cross-validated error curve.

In the general framework presented in this paper, we need the target statistics and the criteria vector to be asymptotically jointly Gaussian. In some of the examples mentioned, we add randomization to the vector of models' quality values to make it asymptotically jointly Gaussian. This is further used in decomposition of the selection event. In the final example we talked about marginal LOCO parameter and how adding randomization directly to the target might be needed to achieve the joint asymptotically normality.

Although our method is not designed to control the false discovery rate (FDR), we produce the selective $p$-values that can potentially be used in a multiple hypotheses testing framework. However, since our selective $p$-values are in general not independent we do not have theoretical guarantees for these methods, which we leave for future work. 

\section*{Acknowledgment} 
The authors would like to thank Eugene Katsevich for his help in editing this manuscript.

\bibliography{cite}

\begin{thebibliography}{26}
\providecommand{\natexlab}[1]{#1}
\providecommand{\url}[1]{\texttt{#1}}
\expandafter\ifx\csname urlstyle\endcsname\relax
  \providecommand{\doi}[1]{doi: #1}\else
  \providecommand{\doi}{doi: \begingroup \urlstyle{rm}\Url}\fi

\bibitem[Rinaldo et~al.(2016)Rinaldo, Wasserman, G'Sell, Lei, and
  Tibshirani]{rinaldo2016bootstrapping}
Alessandro Rinaldo, Larry Wasserman, Max G'Sell, Jing Lei, and Ryan Tibshirani.
\newblock Bootstrapping and sample splitting for high-dimensional,
  assumption-free inference.
\newblock \emph{arXiv preprint arXiv:1611.05401}, 2016.

\bibitem[Berk et~al.(2013)Berk, Brown, Buja, Zhang, Zhao,
  et~al.]{berk2013valid}
Richard Berk, Lawrence Brown, Andreas Buja, Kai Zhang, Linda Zhao, et~al.
\newblock Valid post-selection inference.
\newblock \emph{The Annals of Statistics}, 41\penalty0 (2):\penalty0 802--837,
  2013.

\bibitem[Taylor and Tibshirani(2015)]{taylor2015statistical}
Jonathan Taylor and Robert~J Tibshirani.
\newblock Statistical learning and selective inference.
\newblock \emph{Proceedings of the National Academy of Sciences}, 112\penalty0
  (25):\penalty0 7629--7634, 2015.

\bibitem[Lee et~al.(2016)Lee, Sun, Sun, and Taylor]{lee2013exact}
Jason~D Lee, Dennis~L Sun, Yuekai Sun, and Jonathan~E Taylor.
\newblock Exact post-selection inference, with application to the lasso.
\newblock \emph{The Annals of Statistics}, 44\penalty0 (3):\penalty0 907--927,
  2016.

\bibitem[Lee and Taylor(2014)]{lee_screening}
Jason~D Lee and Jonathan~E Taylor.
\newblock Exact post model selection inference for marginal screening.
\newblock In \emph{Advances in Neural Information Processing Systems}, pages
  136--144, 2014.

\bibitem[Tibshirani et~al.(2016)Tibshirani, Taylor, Lockhart, and
  Tibshirani]{sequential_post_selection}
Ryan~J. Tibshirani, Jonathan Taylor, Richard Lockhart, and Robert Tibshirani.
\newblock Exact post-selection inference for sequential regression procedures.
\newblock \emph{Journal of the American Statistical Association}, 111\penalty0
  (514):\penalty0 600--620, 2016.

\bibitem[Loftus(2015)]{loftus_cv}
Joshua~R Loftus.
\newblock Selective inference after cross-validation.
\newblock \emph{arXiv preprint arXiv:1511.08866}, 2015.

\bibitem[Tian et~al.(2016{\natexlab{a}})Tian, Bi, and Taylor]{tian2016magic}
Xiaoying Tian, Nan Bi, and Jonathan Taylor.
\newblock Magic: a general, powerful and tractable method for selective
  inference.
\newblock \emph{arXiv preprint arXiv:1607.02630}, 2016{\natexlab{a}}.

\bibitem[Tian et~al.(2016{\natexlab{b}})Tian, Snigdha, Markovic, Bi, and
  Taylor]{selective_sampler}
Xiaoying Tian, Panigrahi Snigdha, Jelena Markovic, Nan Bi, and Jonathan Taylor.
\newblock Selective sampling after solving a convex problem.
\newblock \emph{arXiv preprint arXiv:1609.05609}, 2016{\natexlab{b}}.

\bibitem[Markovic and Taylor(2016)]{bootstrap_mv}
Jelena Markovic and Jonathan Taylor.
\newblock Bootstrap inference after using multiple queries for model selection.
\newblock \emph{arXiv preprint arXiv:1612.07811}, 2016.

\bibitem[Leeb and P{\"o}tscher(2006{\natexlab{a}})]{leeb2006can}
Hannes Leeb and Benedikt~M P{\"o}tscher.
\newblock Can one estimate the conditional distribution of post-model-selection
  estimators?
\newblock \emph{The Annals of Statistics}, 34\penalty0 (5):\penalty0
  2554--2591, 2006{\natexlab{a}}.

\bibitem[Leeb and P{\"o}tscher(2006{\natexlab{b}})]{leeb2006performance}
Hannes Leeb and Benedikt~M P{\"o}tscher.
\newblock Performance limits for estimators of the risk or distribution of
  shrinkage-type estimators, and some general lower risk-bound results.
\newblock \emph{Econometric Theory}, 22\penalty0 (01):\penalty0 69--97,
  2006{\natexlab{b}}.

\bibitem[Freedman et~al.(1981)]{freedman1981bootstrapping}
David~A Freedman et~al.
\newblock Bootstrapping regression models.
\newblock \emph{The Annals of Statistics}, 9\penalty0 (6):\penalty0 1218--1228,
  1981.

\bibitem[Buja et~al.(2014)Buja, Berk, Brown, George, Pitkin, Traskin, Zhang,
  Zhao, and White]{buja2014conspiracy}
A~Buja, R~Berk, L~Brown, E~George, E~Pitkin, M~Traskin, K~Zhang, L~Zhao, and
  Dedicated To~Halbert White.
\newblock A conspiracy of random predictors and model violations against
  classical inference in regression.
\newblock \emph{arXiv preprint}, 2014.

\bibitem[Homrighausen and McDonald(2013)]{cv_risk_consistency}
Darren Homrighausen and Daniel~J McDonald.
\newblock The lasso, persistence, and cross-validation.
\newblock In \emph{ICML (3)}, pages 1031--1039, 2013.

\bibitem[Dudoit and van~der Laan(2005)]{dudoit2005asymptotics}
Sandrine Dudoit and Mark~J van~der Laan.
\newblock Asymptotics of cross-validated risk estimation in estimator selection
  and performance assessment.
\newblock \emph{Statistical Methodology}, 2\penalty0 (2):\penalty0 131--154,
  2005.

\bibitem[Friedman et~al.(2001)Friedman, Hastie, and
  Tibshirani]{friedman2001elements}
Jerome Friedman, Trevor Hastie, and Robert Tibshirani.
\newblock \emph{The elements of statistical learning}, volume~1.
\newblock Springer series in statistics Springer, Berlin, 2001.

\bibitem[Bubeck et~al.(2015)Bubeck, Eldan, and Lehec]{bubeck2015sampling}
S{\'e}bastien Bubeck, Ronen Eldan, and Joseph Lehec.
\newblock Sampling from a log-concave distribution with {P}rojected {L}angevin
  {M}onte {C}arlo.
\newblock \emph{arXiv preprint arXiv:1507.02564}, 2015.

\bibitem[Tian and Taylor(2015)]{tian2015selective}
Xiaoying Tian and Jonathan~E Taylor.
\newblock Selective inference with a randomized response.
\newblock \emph{arXiv preprint arXiv:1507.06739}, 2015.

\bibitem[Barber and Cand{\`e}s(2015)]{original_knockoffs}
Rina~Foygel Barber and Emmanuel~J Cand{\`e}s.
\newblock Controlling the false discovery rate via knockoffs.
\newblock \emph{The Annals of Statistics}, 43\penalty0 (5):\penalty0
  2055--2085, 2015.

\bibitem[Fithian et~al.(2014)Fithian, Sun, and Taylor]{fithian2014optimal}
William Fithian, Dennis Sun, and Jonathan Taylor.
\newblock Optimal inference after model selection.
\newblock \emph{arXiv preprint arXiv:1410.2597}, 2014.

\bibitem[Lehmann and Romano(2006)]{TSH}
Erich~L Lehmann and Joseph~P Romano.
\newblock \emph{Testing statistical hypotheses}.
\newblock Springer Science \& Business Media, 2006.

\bibitem[Candes et~al.(2016)Candes, Fan, Janson, and Lv]{candes2016panning}
Emmanuel Candes, Yingying Fan, Lucas Janson, and Jinchi Lv.
\newblock Panning for gold: Model-free knockoffs for high-dimensional
  controlled variable selection.
\newblock \emph{arXiv preprint arXiv:1610.02351}, 2016.

\bibitem[Benjamini and Hochberg(1995)]{benjamini1995controlling}
Yoav Benjamini and Yosef Hochberg.
\newblock Controlling the false discovery rate: a practical and powerful
  approach to multiple testing.
\newblock \emph{Journal of the royal statistical society. Series B
  (Methodological)}, pages 289--300, 1995.

\bibitem[Barber and Candes(2016)]{high_dim_knockoffs}
Rina~Foygel Barber and Emmanuel~J Candes.
\newblock A knockoff filter for high-dimensional selective inference.
\newblock \emph{arXiv preprint arXiv:1602.03574}, 2016.

\bibitem[Friedman et~al.(2015)Friedman, Hastie, and R.]{friedmantibshirani}
J~Friedman, T~Hastie, and Tibshirani R.
\newblock glmnet: Lasso and elastic-net regularized generalized linear models,
  2015.

\end{thebibliography}
\bibliographystyle{imsart-nameyear}
\clearpage	

\appendix
\section{Proofs} \label{sec:proofs}
$$\textnormal{ }$$
\begin{proof-of-theorem}{\ref{thm:nonrandomized:pivot}}
	Recall that $\Phi$ denotes the asymptotic normal distribution of $(T,D)$, the RHS of \eqref{eq:D:T:gaussian}, and $\Phi^*$ the distribution of $(Z_T,Z_{\widetilde{D}})$ conditional on $\widetilde{A}Z_{\widetilde{D}}\leq\tilde{a}$.
	Using the CLT assumption, we have for all $t\in\mathbb{R}^{\textnormal{dim}(T)}$ and $\tilde{d}\in\mathbb{R}^{\textnormal{dim}(\widetilde{D})}$
	\begin{align*}
		&\mathbb{F}_n^*\left\{T\leq t, \widetilde{D}\leq \tilde{d}\right\} =\mathbb{F}_n\left\{T\leq t,\widetilde{D}\leq\tilde{d}\:\Big|\:\widetilde{A}\widetilde{D}\leq\tilde{a}_n\right\}
		= \frac{\mathbb{F}_n\left\{T\leq t,\widetilde{D}\leq \tilde{d},\widetilde{A}\widetilde{D}\leq \tilde{a}_n\right\}}{\mathbb{F}_n\left\{\widetilde{A}\widetilde{D}\leq \tilde{a}_n\right\}} \\
		&\overset{n\rightarrow\infty}{\rightarrow}\frac{\Phi\left\{Z_T\leq t,Z_{\widetilde{D}}\leq \tilde{d}, \widetilde{A} Z_{\widetilde{D}}\leq \tilde{a}\right\}}{\Phi\left\{AZ_{\widetilde{D}}\leq \tilde{a}\right\}}=\Phi^*\left\{Z_T\leq t, Z_{\widetilde{D}}\leq \tilde{d}\right\},
	\end{align*}
	where we used \cite[Theorem 11.2.9]{TSH} and \cite[Corollary 11.2.3]{TSH} to show the convergence step.
	 This implies 
	\begin{equation*}
		\mathbb{F}_n^*\overset{d}{\rightarrow}\Phi^* 
	\end{equation*}
	as $n\rightarrow\infty$, or equivalently under $(T,\widetilde{D})\sim\mathbb{F}_n$ and $(Z_T,Z_{\widetilde{D}})\sim\Phi$
	\begin{equation*}
		\begin{pmatrix} T\\ D \end{pmatrix}	 \Big|\: \widetilde{A}\widetilde{D}\leq \tilde{a}_n \overset{d}{\rightarrow} \begin{pmatrix} Z_T \\ Z_{\widetilde{D}} \end{pmatrix} \Big |\: \widetilde{A}Z_{\widetilde{D}}\leq\tilde{a}
	\end{equation*}
	as $n\rightarrow\infty$.
	
	Turning to the pivot, $\mathcal{P}\left((t,\tilde{d});\widetilde{A},\tilde{a}\right)$ is continuous in $(t,\tilde{d},\tilde{a})$ for $\widetilde{A}\cdot \tilde{d}\leq \tilde{a}$. Thus, under $\Phi^*$, $\mathcal{P}\left((Z_T,Z_{\widetilde{D}});\widetilde{A},\tilde{a}\right)$ is continuous with probability 1 in $(Z_T,Z_{\widetilde{D}},\tilde{a})$. By Continuous Mapping Theorem, using the continuity of the pivot with the weak convergence result above we have under $(T,\widetilde{D})\sim\mathbb{F}_n^*$ and $(Z_T,Z_{\widetilde{D}})\sim\Phi^*$
	\begin{equation*}
		\mathcal{P}\left((T,\widetilde{D}); \widetilde{A},\tilde{a}_n\right)\overset{d}{\rightarrow}\mathcal{P}\left((Z_T,Z_{\widetilde{D}});\widetilde{A},\tilde{a}\right)\sim\textnormal{Unif}(0,1)
	\end{equation*}
	as $n\rightarrow\infty$.
	
\end{proof-of-theorem}

\begin{proof-of-lemma}{\ref{lemma:consistency}}
Given $\epsilon>0$, we have
\begin{equation*}
\begin{aligned}
	\mathbb{F}_n^*\left\{\left\|\hat{\xi}-\xi\right\|_2\geq\epsilon\right\}
	&=\mathbb{F}_n\left\{\left\|\hat{\xi}-\xi\right\|_2\geq\epsilon \:\Big|\: \widetilde{D}\in\mathcal{S}_{\widetilde{D}}\right\}
	=\frac{\mathbb{F}_n\left\{\left\|\hat{\xi}-\xi\right\|_2\geq \epsilon,\widetilde{D}\in\mathcal{S}_{\widetilde{D}}\right\}}{\mathbb{F}_n\left\{\widetilde{D}\in\mathcal{S}_{\widetilde{D}}\right\}},
\end{aligned}
\end{equation*}	
implying
\begin{equation} \label{eq:consistency:inequality}
	\mathbb{F}_n^*\left\{\left\|\hat{\xi}-\xi\right\|_2\geq\epsilon\right\}
	\leq \frac{\mathbb{F}_n\left\{\left\|\hat{\xi}-\xi\right\|_2\geq \epsilon\right\}}{\mathbb{F}_n\left\{\widetilde{D}\in\mathcal{S}_{\widetilde{D}}\right\}}.
\end{equation}
Since by the convergence assumption we have that the denominator above converges to a positive constant, i.e.
\begin{equation*}
	\underset{n\rightarrow\infty}{\lim}\mathbb{F}_n\left\{\widetilde{D}\in\mathcal{S}_{\widetilde{D}}\right\} = \Phi\left\{\widetilde{A}Z_{\widetilde{D}}\leq \tilde{a}\right\}>0.
\end{equation*}
By the consistency of $\hat{\xi}$ pre-selection, we have that the numerator in \eqref{eq:consistency:inequality} converges to zero in the limit as $n$ tends to infinity. Combining these two observations, we have 
\begin{equation*}
	\underset{n\rightarrow\infty}{\lim}\mathbb{F}_n^*\left\{\left\|\hat{\xi}-\xi\right\|_2\geq\epsilon\right\}=0.
\end{equation*}
\end{proof-of-lemma}


\begin{proof-of-theorem}{\ref{thm:test:error:fixed:p}}
	We write the decomposition
\begin{equation*}
\begin{aligned}
	&\left\|y^{test}-X^{test}\hat{\beta}^{train}(\lambda)\right\|_2^2 = \left\|y^{test}-X^{test}\beta_0(\lambda)+X^{test}\beta_0(\lambda)-X^{test}\hat{\beta}^{train}(\lambda)\right\|_2^2\\
	&=\left\|y^{test}-X^{test}\beta_0(\lambda)\right\|_2^2
	+2{\epsilon^{test}(\lambda)}^\top\left(X^{test}\beta_0(\lambda)-X^{test}\hat{\beta}^{train}(\lambda)\right) \\
	&\qquad+\left\|X^{test}\beta_0(\lambda)-X^{test}\hat{\beta}^{train}(\lambda)\right\|_2^2,
\end{aligned}
\end{equation*}
where $\epsilon^{test}(\lambda)=y^{test}-X^{test}\beta_0(\lambda)$, and analyze each of the three terms separately.

\begin{enumerate}

\item  Using the assumption that $\textnormal{Var}\left((y_1-x_1^\top\beta_0(\lambda))^2\right)<\infty$, by the CLT we have that
\begin{equation*}
	\frac{1}{\sqrt{n_2}}\left(\left\|y^{test}-X^{test}\beta_0(\lambda)\right\|_2^2-\mathbb{E}\left[\left\|y^{test}-X^{test}\beta_0(\lambda)\right\|_2^2\right]\right) \overset{d}{\rightarrow}\mathcal{N}(0,\sigma^2(\lambda))
\end{equation*}
as $n_2\rightarrow\infty$.

\item Using the Cauchy-Schwartz inequality, we have
\begin{equation} \label{eq:inequality:2}
\begin{aligned}
	&\frac{1}{\sqrt{n_2}}\left|\left({X^{test}}^\top \epsilon^{test}(\lambda)\right)^{\top}\left(\hat{\beta}^{train}(\lambda)-\beta_0(\lambda)\right)\right|\\
	&\qquad \leq \frac{1}{\sqrt{n_2}}\left\|{X^{test}}^{\top}\epsilon^{test}(\lambda)\right\|_2\left\|\hat{\beta}^{train}(\lambda)-\beta_0(\lambda)\right\|_2.
\end{aligned}
\end{equation}
Using that $\mathbb{E}\left[\left\|x_1\left(y_1-x_1^\top\beta_0(\lambda)\right)\right\|_2^2\right]<\infty$, by LLN we have  $\frac{1}{n_2}\left\|{X^{test}}^\top\epsilon^{test}(\lambda)\right\|_2^2=O_P(1)$ as $n_2\rightarrow\infty$. By the consistency of $\hat{\beta}^{train}(\lambda)$, we have $\left\|\hat{\beta}^{train}(\lambda)-\beta_0(\lambda)\right\|_2\rightarrow 0$ as $n_1\rightarrow\infty$. We conclude the RHS in \eqref{eq:inequality:2} is $o_P(1)$ as $n_1,n_2\rightarrow\infty$.

\item We have
\begin{equation} \label{inequality:3}
	\frac{1}{\sqrt{n_2}}\left\|X^{test}\beta_0(\lambda)-X^{test} \hat{\beta}^{train}(\lambda)\right\|_2^2 
	 \leq \frac{1}{\sqrt{n_2}}\sum_{i=1}^{n_2}\left\|x^{test}_i\right\|_2^2\left\|\beta_0(\lambda)-\hat{\beta}^{train}(\lambda)\right\|_2^2,
\end{equation}
where $x_i^{test}\in\mathbb{R}^p$, $i=1,\ldots, n_2$, are the rows of the test design matrix $X^{test}$. By the consistency assumption, we have $\sqrt{n_1}\left\|\hat{\beta}^{train}(\lambda)-\beta_0(\lambda)\right\|_2^2\rightarrow 0$ as $n_1\rightarrow\infty$. Using the assumption that $\mathbb{E}\left[\left\|x_1\right\|_2^2\right]<\infty$, we have that $\frac{1}{n_2}\sum_{i=1}^{n_2}\left\|x_i^{test}\right\|_2^2=O_P(1)$ as $n_2\rightarrow\infty$ by the LLN. Hence we conclude that the RHS in \eqref{inequality:3} is $o_P(1)$ as $n_1, n_2\rightarrow\infty$.

\end{enumerate}

\end{proof-of-theorem}

\begin{proof-of-corollary}{\ref{cor:randomized:cv:gaussian}}
	From the proof of Theorem \ref{thm:test:error:fixed:p} we have that for each $k=1,\ldots, K$ and each $\lambda\in\Lambda$ the following holds
	\begin{equation*}
	\begin{aligned}
		\frac{1}{\sqrt{n_k}}\left(\left\|y^k-X^k\hat{\beta}^{-k}(\lambda)\right\|_2^2 -n_k\mu(\lambda)\right)
		&=\frac{1}{\sqrt{n_k}}\left(\left\|\epsilon^k(\lambda)\right\|_2^2-n_k\mu(\lambda)\right)+o_P(1)\\
		&=\frac{1}{\sqrt{n_k}}\left(\sum_{i=1}^{n_k}\epsilon^k_i(\lambda)^2-n_k\mu(\lambda)\right)+o_P(1),
	\end{aligned}	
	\end{equation*}
	where $\epsilon^k(\lambda)=(\epsilon_1^k(\lambda),\ldots, \epsilon_{n_k}^k(\lambda))=y^k-X^k\beta_0(\lambda)\in\mathbb{R}^{n_k}$. Hence, the random variables $\epsilon^k_i(\lambda)$ are i.i.d.~across $i=1,\ldots, n_k$ and across $k=1,\ldots, K$ for each $\lambda\in\Lambda$. Summing the above across folds $k=1,\ldots, K$, we have for each $\lambda\in\Lambda$
	\begin{equation*}
	\begin{aligned}
		\sum_{k=1}^K\frac{1}{n_k}\left(\left\|y^k-X^k\hat{\beta}^{-k}(\lambda)\right\|_2^2-n_k\mu(\lambda)\right)
		&=\sum_{k=1}^K\left(\frac{1}{n_k}\left\|\epsilon^k(\lambda)\right\|_2^2-\mu(\lambda)+o_P\left(\frac{1}{\sqrt{n_k}}\right)\right).
	\end{aligned}
	\end{equation*}
	This implies
	\begin{equation*}
	\begin{aligned}
	Err_R(\lambda) &= \sum_{k=1}^{K}\left(\frac{1}{n_k}\left\|y^k-X^k\hat{\beta}^{-k}(\lambda)\right\|_2^2+\frac{1}{\sqrt{n_k}}R^{k,\lambda}\right) \\
	&=\sum_{k=1}^K\frac{1}{n_k}\left\|\epsilon^k(\lambda)\right\|_2^2+o_p\left(\frac{1}{\sqrt{n}}\right)+\sum_{k=1}^K\frac{1}{\sqrt{n_k}}R^{k,\lambda}.
	\end{aligned}
	\end{equation*}
	By the CLT, $Err_R=\left(Err_R(\lambda_1), \ldots,Err_R(\lambda_L)\right)$, properly scaled and centered, is asymptotically a jointly normal vector.
	
\end{proof-of-corollary}


\begin{proof-of-corollary}{\ref{cor:joint:normality}}
	Denote $\mathbb{E}\left[\left(\frac{1}{n}X_E^\top X_E\right)^{-1}\right]=M_E$ and $\mathbb{E}\left[\frac{X_{-E}^\top X_E}{n}\right]=M_{-E}$. Let $x_{i,E}$, $i=1,\ldots,n$, denote the rows of $X_E$ and $x_{i,-E}$, $i=1,\ldots, n$, the rows of $X_{-E}$.
	We have
	\begin{equation*}
	\begin{aligned}
	&\sqrt{n}\left(\bar{\beta}_E-\beta_E^*\right) = \left(\frac{1}{n}X_E^\top X_E\right)^{-1}\frac{1}{\sqrt{n}}X_E^\top\left(y-X_E\beta_E^*\right) \\
	&=\left(\left(\frac{1}{n}X_E^\top X_E\right)^{-1}-M_E\right)\frac{1}{\sqrt{n}}X_E^\top\left(y-X_E\beta_E^*\right)  +M_E\frac{1}{\sqrt{n}}X_E^\top\left(y-X_E\beta_E^*\right) \\
	&= o_P(1)O_P(1)+
	M_E\frac{1}{\sqrt{n}}\sum_{i=1}^nx_{i,E}^\top\left(y_i-x_{i,E}^\top\beta_E^*\right).
	\end{aligned}
	\end{equation*}
	
	Since
	\begin{equation*}
	\begin{aligned}
	&\frac{1}{\sqrt{n}}\left(X_{-E}^\top\left(y-X_E\bar{\beta}_E\right)-X_{-E}^\top\left(y-X_E\beta_E^*\right)\right) 
	= -\frac{X_{-E}^\top X_E}{n}\sqrt{n}\left(\bar{\beta}_E-\beta_E^*\right) \\
	&=-\left(\frac{X_{-E}^\top X_E}{n}-M_{-E}\right)\sqrt{n}\left(\bar{\beta}_E-\beta_E^*\right)-M_{-E}\sqrt{n}\left(\bar{\beta}_E-\beta_E^*\right)\\
	&=o_P(1)O_P(1)-M_{-E}\left(o_P(1)+M_E\frac{1}{\sqrt{n}}\sum_{i=1}^nx_{i,E}^\top\left(y_i-x_{i,E}^\top\beta_E^*\right)\right),
	\end{aligned}
	\end{equation*}
	we have
	\begin{equation*}
	\begin{aligned}
	&\frac{1}{\sqrt{n}}\left(X_{-E}^\top\left(y-X_E\bar{\beta}_E\right)-\mathbb{E}\left[X_{-E}^\top\left(y-X_E\beta_E^*\right)\right]\right)\\
	&=o_P(1)+\frac{1}{\sqrt{n}}\left(\sum_{i=1}^nx_{i,-E}^\top\left(y_i-x_{i,E}^\top\beta_E^*\right)-\mathbb{E}\left[X_{-E}^\top\left(y-X_E\beta_E^*\right)\right]\right) \\
	&\qquad -M_{-E}M_E\frac{1}{\sqrt{n}}\sum_{i=1}^nx_{i,E}^\top\left(y_i-x_{i,E}^\top\beta_E^*\right)
	\end{aligned}
	\end{equation*}
	Hence 
	\begin{equation*}
	\sqrt{n}\left(\begin{pmatrix}
		\bar{\beta}_E \\ \frac{1}{n}X_{-E}^\top(y-X_E\bar{\beta}_E)
	\end{pmatrix} - \begin{pmatrix} \beta_E^*\\ \frac{1}{n}\mathbb{E}\left[X_{-E}^\top\left(y-X_E\beta_E^*\right)\right]\end{pmatrix}\right)
	\end{equation*} 
	is asymptotically the sum of i.i.d.~random variables with the leftover term $o_P(1)$. From the proof of Corollary \ref{cor:randomized:cv:gaussian}, we know that $Err_R$ is also represented in such a form. Thus the joint normality follows.

\end{proof-of-corollary}


\begin{proof-of-theorem}{\ref{thm:AIC}}
	Decomposing the $\|y-X_{E_i}\bar{\beta}_{E_i}\|$, $i=1,2$, into a difference of two terms
	\begin{equation*}
	\begin{aligned}
	\left\|y-X_{E_i}\bar{\beta}_{E_i}\right\|_2^2
	&=\left\|y-X_{E_i}\beta_{E_i}^*\right\|_2^2
	+2(y-X_{E_i}\beta_{E_i}^*)^\top X_{E_i}(\beta_{E_i}^*-\bar{\beta}_{E_i})
	+\left\|X_{E_i}\beta_{E_i}^*-X_{E_i}\bar{\beta}_{E_i}\right\|_2^2 \\
	&=\left\|y-X_{E_i}\beta_{E_i}^*\right\|_2^2-\left\|X_{E_i}\beta_{E_i}^*-X_{E_i}\bar{\beta}_{E_i}\right\|_2^2,
	\end{aligned}
	\end{equation*}
	we analyze each of them separately.
	\begin{enumerate}
	\item There is a CLT
		\begin{equation*}
			\frac{1}{\sqrt{n}}\left(\left\|y-X_{E_i}\beta_{E_i}^*\right\|_2^2-\mathbb{E}\left[\|\epsilon_{E_i}\|_2^2\right]\right) \overset{d}{\rightarrow}\mathcal{N}\left(0,\sigma_{E_i}^2\right)
		\end{equation*}
		as $n\rightarrow\infty$.
	\item We have
	\begin{equation} \label{eq:AIC:ineq:3}
		\frac{1}{\sqrt{n}}\left\|X_{E_i}(\bar{\beta}_{E_i}-\beta_{E_i}^*)\right\|_2^2 \leq \frac{1}{\sqrt{n}}\sum_{i=1}^n\|x_{i, E_i}\|_2^2\left\|\bar{\beta}_{E_i}-\beta_{E_i}^*\right\|_2^2,
	\end{equation}
	where $x_{i,E_i}\in\mathbb{R}^{|E_i|}$ are the sub-rows of $X$ corresponding to the columns in $E_i$.
	By the LLN, we have $\frac{1}{n}\sum_{i=1}^n\|x_{i,E_i}\|_2^2=O_P(1)$. By the consistency assumption $\sqrt{n}\left\|\bar{\beta}_{E_i}-\beta_{E_i}^*\right\|_2^2\rightarrow 0 $ as $n\rightarrow\infty$. We conclude the RHS of \ref{eq:AIC:ineq:3} is $o_P(1)$ as $n\rightarrow\infty$.
	\end{enumerate}
	This proves for $i=1,2$
	\begin{equation*}
	\frac{1}{\sqrt{n}}\left(\|y-X_{E_i}\bar{\beta}_{E_i}\|_2^2-\mathbb{E}\left[\|\epsilon_{E_i}\|_2^2\right]\right)	 = \frac{1}{\sqrt{n}}\left(\sum_{i=1}^n\epsilon_{i,E_i}^2-\mathbb{E}\left[\|\epsilon_{E_i}\|_2^2\right]\right)+o_P(1),
	\end{equation*}
	where $\epsilon_{E_i}=(\epsilon_{1,E_i},\ldots, \epsilon_{n,E_i})\in\mathbb{R}^n$ consists of i.i.d.~coordinates, implying the conclusion.
\end{proof-of-theorem}


\section{Projection step in sampling} \label{sec:sampling}

We use projected Langevin Monte Carlo with guarantees in \cite{bubeck2015sampling} to sample from a log-concave density. This MC has been used in other randomized selective inference problems in \cite{selective_sampler, bootstrap_mv}. We omit the details of the sampler that can be found in these works but focus on the differences involving a new optimization variable $Err_R$. At every step of the sampler $Err_R\in\mathbb{R}^L$ is projected onto the cone specifying that the smallest $K$ values of $Err_R$ are achieved at the fixed coordinates $\{l_1, \ldots, l_K^*\}$. In this section, we describe the computation on this projection.

For $K=1$, we have the following problem: for a given $z\in\mathbb{R}^L$, solve 
\begin{equation*}
	\underset{x\in\mathbb{R}^L}{\textnormal{minimize }} \|x-z\|_2^2 \;\; \textnormal{ such that }x_j\geq x_{l_1^*}\;\forall j:1\leq j\leq L,
\end{equation*}
where $l_1^*$ is a given index, $l_1^*\in\{1,\ldots, L\}$. This is equivalent to projecting $z$ on the space consisting of all vectors in $\mathbb{R}^L$ whose minimum is achieved at coordiante $l_1^*$. Assume w.l.o.g.~$l_1^*=1$. To solve this problem,  we first fix $x_1$ and see that the other optimizing values for other coordinates, $j=2,\ldots, L$, are
\begin{equation*}
	x_j^*(x_1) = \textnormal{max}\{x_1,z_j\}.
\end{equation*}
Hence, the problem becomes to minimize the convex function
\begin{equation*}
	h(x_1)=(x_1-z_1)^2+\sum_{l=2}^L(x_1-z_l)^2\mathbb{I}_{\{x_1\geq z_l\}}.
\end{equation*}
Derivative of $h$ w.r.t.~$x_1$ is
\begin{equation*}
	h'(x_1) = 2(x_1-z_1)+2\sum_{l=2}^L(x_1-z_l)\mathbb{I}_{\{x_1\geq z_l\}}.
\end{equation*}
We evaluate $h'(x_1)$ at all $z_l$, $l=1,\ldots, L$, to find an interval over which $h'$ crosses zero. Having found such an interval, the root $x_1^*$ will be the sample average of $z_1$ and all non-zero summands in $h$.


For a general $K\leq L$, we have the following problem: for a given $z\in\mathbb{R}^L$, solve
\begin{equation*}
	\underset{x\in\mathbb{R}^L}{\textnormal{minimize }} \|x-z\|_2^2 \;\;\textnormal{ such that }\;\; x_j\leq x_l \;\forall j\in \Lambda^*, \forall l\in-\Lambda^*.
\end{equation*}
Keeping $v$, $\textnormal{max}\{x_j:j\in L^*\}\leq v\leq\textnormal{min}\{x_j: j\in -L^*\}$, fixed, the optimizing values for $j=1,\ldots, L$ are
\begin{equation*}
	x_j^*(v) = \left\{ \begin{matrix} \textnormal{min}\{v,z_j\} & j\in\Lambda^*  
	\\ \textnormal{max}\{v,z_j\} & j\not\in\Lambda^* \end{matrix}\right..
\end{equation*}
The problem becomes to minimize
\begin{equation*}
	h(v) = \sum_{j\in\Lambda^*}(v-z_j)^2\mathbb{I}_{\{v\leq z_j\}} +\sum_{j\in -\Lambda^*}(v-z_j)^2\mathbb{I}_{\{v\geq z_j\}}.
\end{equation*}
We find $v$ as above.


\section{Randomized forward-stepwise with data dependent number of steps - details} \label{sec:FS:details}

This section provides the details of the randomized forward-stepwise procedure with the data dependent number of steps chosen via the randomized FS criteria as described in Section \ref{sec:FS:main}.
We combine the selection events coming from the randomized forward-stepwise (FS) algorithm with choosing a data dependent $L$, the number of steps in FS.

\subsection{FS with fixed $L$}

The description of randomized forward-stepwise with fixed $L$ is given in \cite{selective_sampler, bootstrap_mv} and we revise it here for completeness. 
The data generating mechanism on $(X,y)\sim\mathbb{F}_n^n$ is as in the previous examples. In the $L$ steps of forward stepwise, the selection event is characterized by a sequence of indices $\bb j=(j_1,\ldots, j_L)$ with their corresponding signs $\bb s=(s_1,\ldots, s_L)$ that enter the model in that particular order, forming an active set at step $L$. Denote the active set at step $l$ as $E_l=\{j_1,j_2,\ldots, j_l\}$ for all $l=1,\ldots, L$.
At the $l$-th step the randomized forward stepwise solves the following program
\begin{equation} \label{eq:fs:objective:original}
	\hat{\eta}_l=\textnormal{arg}\underset{\eta\in \mathcal{B}_l}{\max}\:\eta^\top \left(X_{-E_{l-1}}^\top P_{E_{l-1}}^{\perp}y+ \omega_l\right), \;\; (X,y)\times \omega_l\sim\mathbb{F}_n^n\times\mathbb{F}_{\omega_l},
\end{equation}
where $\mathcal{B}_l=\{\eta \in\mathbb{R}^{p-l+1}: \|\eta\|_1\leq 1\},$ and $P_{E_{l-1}}^{\perp} y$ is the residual left after projecting $y$ onto $X_{E_{l-1}}$. $\mathbb{F}_{\omega_l}$ is pre-specified distribution of the randomization $\omega_l$ with known density $g_l$.

The selection event of interest is given by conditioning on the sign and the index on the non-zero coordinate of the solution $\hat{\eta}_l$ for each $l=1,\ldots, L$. We want to sample from the density of the data and the randomization conditional on this selection event. The randomization reconstruction map for the $l$-th step, from the sub-gradient equation is given by
\begin{equation*}
	\omega_l(y,z_l)=-X_{-E_{l-1}}^\top P_{E_{l-1}}^{\perp}y+z_l,
\end{equation*}
where, sub-differential $z_l\in \mathbb{R}^{p-l+1}$ from the $l$-th step is restricted to the normal cone
$z_l\in \partial I_{\mathcal{B}_l}(\hat{\eta}_l)$ (see \cite{selective_sampler}).
The selective density of $((X,y), z_1,\ldots, z_L)$ is then proportional to
\begin{equation} \label{eq:fs:density}
	\left(\prod_{i=1}^n f_n (x_i,y_i)\right) \cdot \prod_{l=1}^L g_l\left(z_l-X_{-E_{l-1}}^\top P_{E_{l-1}}^{\perp} y\right),
\end{equation}
supported on $\mathbb{R}^{n\times p}\times\mathbb{R}^n \times \prod_{l=1}^L \partial I_{\mathcal{B}_l}(\hat{\eta}_l)$, where $f_n$ denotes the density of $\mathbb{F}_n$.

After doing $L$ steps of forward stepwise, an analyst looks at the sequence $\{E_l\}_{l=1}^L$ and chooses model $E$ in whichever way she wants. The goal is to inference for the population OLS parameters $\beta_E^*$. 
As in the LASSO example, we simplify the sampling above since we are interested in testing a particular parameter. First note that
$X_{-E_{l-1}}^\top P_{E_{l-1}}^{\perp} y$ can be expressed as $Q_l\cdot  X^\top y$, where
$Q_l= \left[\begin{matrix} X_{-E_{l-1}}^\top X_{E_{l-1}}\left(X_{E_{l-1}}^\top X_{E_{l-1}}\right)^{-1} & -I_{p-(l-1)}	\end{matrix} \right].$
Using the asymptotic normality of $D=\begin{pmatrix} \bar{\beta}_E \\ X_{-E}^\top(y- X_E\bar{\beta}_E) \end{pmatrix}$, the sampling density of $(D, z_1,\ldots, z_L)$ is proportional to
\begin{equation*}
	\phi_{(\mu_{\bm D}, \Sigma_D)}(D)\cdot \prod_{l=1}^L g_l \left(z_l+M_lD\right),
\end{equation*}
and supported on $\mathbb{R}^p\times\prod_{l=1}^L \partial I_{\mathcal{B}_l}(\hat{\eta}_l)$, where $M_l = - Q_l\begin{pmatrix} X_E^\top X_E & 0\\ X_{-E}^\top X_E & I_{p-|E|}\end{pmatrix}$.

Given the parameter of interest $\theta=\theta(\mathbb{F}_n, E)$ and the corresponding test statistic $T$, we assume that $(T,D)$ is jointly asymptotically normal vector.
Using the decomposition $D=\Sigma_{D,T}\Sigma_T^{-1}T+N_D$ as in the previous examples, the sampling density of $T$ along with optimization variables is proportional to 
\begin{equation} \label{eq:fs:density:plugin:clt}
	\phi_{(\theta,\Sigma_T)}(T) \cdot \prod_{l=1}^L g_l \left(z_l+ M_l N_D+M_l\Sigma_{D,T}\Sigma_T^{-1}T\right)
\end{equation}
with the restriction $(z_1,\ldots, z_L)\in\prod_{l=1}^L \partial I_{\mathcal{B}_l}(\hat{\eta}_l)$. 

\subsection{FS with data dependent $L$}

We take into account that the number of steps $L$ is chosen in a data dependent way. Given the randomized FS errors $Err_{R,l}$, $1\leq l\leq \min\{n,p\}$, $L$ is chosen such that
\begin{equation*}
	L=\min\left\{2\leq l\leq\min\{n,p\} \;\textnormal{ s.t.}\; Err_{R,l}\leq \eta \cdot Err_{R,l-1} \right\},
\end{equation*}
where $\eta$ is a constant. Conditioning on $L$, the constraint coming from choosing $L$ is equivalent to requiring that $Err_R=\left(Err_{R,1},\ldots, Err_{R,L}\right)\in\mathbb{R}^L$ satisfies
\begin{equation} \label{eq:FS:curve:constraint}
	B_L	\cdot Err_R\leq 0,
\end{equation}
for a fixed matrix $B_L$ depending on $L$. Recall that $Err_R=Err+R_1+R_2=Err_R^{(1)}+R_2$ with $R_2\sim\mathbb{F}_{R_2}$ with density $g_{R_2}$. 
As this selection event is written in terms of $Err_R$, we write it in terms of the randomization $R_2$ and the target statistic $T$ by decomposing $Err_R^{(1)}=N_{Err_R^{(1)}}+\Sigma_{Err_R^{(1)},T}\Sigma_T^{-1}T$ as
\begin{equation*}
	R_2=Err_R-N_{Err_R^{(1)}}-\Sigma_{Err_R^{(1)},T}\Sigma_T^{-1}T
\end{equation*}
with the restriction on $Err_R$ as in \eqref{eq:FS:curve:constraint}.

Combining the selection event coming from looking at the selected predictors via FS together with choosing $L$ as above, we write the selective density on $(T, z_1,\ldots, z_L, Err_R)$ as proportional to
\begin{equation*}
\begin{aligned}
	\phi_{(\theta,\Sigma_T)}(T) &\cdot \prod_{l=1}^L g_l \left(z_l+ M_l N_D+M_l\Sigma_{D,T}\Sigma_T^{-1}T\right)\\
	&\cdot g_{R_2}\left(Err_R-N_{Err_R^{(1)}}-\Sigma_{Err_R^{(1)},T}\Sigma_T^{-1}T\right)
\end{aligned}	
\end{equation*}
with the sub-gradients $(z_1,\ldots, z_L)$ restricted as above to a product of normal cones and $Err_R$ restricted as in \eqref{eq:FS:curve:constraint}.


\section{FDP control and power comparison} \label{sec:power}

In real-world scientific applications, together with the discovery of variables that are truly associated with the response, another important question statisticians need to answer is, what is the expected fraction of false discoveries among all discoveries? This is called false discovery rate (FDR) and the sample version is called false discovery proportion (FDP); many works have been developed to control these quantities. Although our method is not designed to control FDP, we produce selective $p$-values for the survived variables, and thus with the help of Type I error or FDP control under multiple testing framework, we can empirically check FDR control. To be more specific, the randomized Lasso with $\lambda_R^{cv}$ produces an active set $E$ with selective $p$-values. Now, we are conducting tests on $|E|$ hypotheses simultaneously. With methods controlling type I error or FDP, we further reduce the set of selected predictors $E$ to a set of predictors $E'$. Note that the set $E'$ denotes all the predictors that we selected via a model selection procedure and were rejected after performing a type I or FDR controlling method. To evaluate the performance, we will compare our results  on FDP with {\it knockoffs} \citep{original_knockoffs}, a popular method developed recently. Model-free knockoff of \citealt{candes2016panning} have been proposed recently, not requiring any distributional assumptions on $y|X$ and works also in $p>n$ regime unlike the original knockoff. However, the model-free knockoff method requires the knowledge of the distribution of the covariates which may be unrealistic in many applications. 
Comparison with \citealt{candes2016panning} are left for future work.
We note that our method is not designed with the same goal as knockoffs which is
explicitly designed to control FDR in such regression problems. Our method produce confidence intervals as well as variable specific $p$-values that knockoffs do not.
For comparison, we report FDP, Type I error and power. To clarify, 
\begin{equation*}
	\textnormal{FDP} = \frac{|\textnormal{false rejections}|}{|E'|}, \; \textnormal{Type I error} = \frac{|\textnormal{false rejections}|}{|E|}, \; \textnormal{Power} = \frac{|\textnormal{true rejections}|}{s},
\end{equation*}
where $s$ is the true sparsity.

In the current simulation, we illustrate two simple algorithms that attempt to control the overall Type I error or FDP (although we are aware of many other existing ways). 

\begin{itemize}[leftmargin=*]

\item {\it thresholding} at 0.05: we reject any hypothesis (variables in $E$) having selective $p$-values below 0.05, and the variables survived through thresholding rule compose the new set $E'$. In this way, the total number of false rejections is controlled under $0.05\cdot|E|$. In practice, when $|E|$ is small, type I error will be small even without adjusting for multiple testing.

\item {\it Benjamini-Hochberg (BH) with target FDR = 0.2} \citep{benjamini1995controlling}: we use the vanilla BH algorithm. Note that, in this case FDR will not be strictly controlled due to dependency between the selective $p$-values. Nevertheless, it still performs well empirically.

\end{itemize}

We perform both Lasso and randomized Lasso algorithm to get active set $E$, and we see the power increases when we go from non-randomized to randomized selective $p$-values; the latter procedure is comparable to knockoffs.

\textbf{Data generating mechanism.} We generate design matrix $X\in \mathbb{R}^{n\times p}$ from AR(1) model with auto-correlation $\rho$, i.e.~the rows of $X$ are taken to be independent from $\mathcal{N}_p(0, \Theta)$ with $\Theta_{jk}=\rho^{|j-k|}$, $j,k=1,\ldots,p$, (in case $\rho=0$, $\Theta=I_p$). We take $\rho$ to have values $0, 0.2, 0.4$. The columns of $X$ are then normalized to have empirical variance 1.
The noise vector $\epsilon\in\mathbb{R}^n$ is from $\epsilon \sim \mathcal{N}_n(0,I_n)$, independent of $X$. Then we generate response $y$ following the model $y=X\beta+\epsilon$.  The coefficient vector $\beta\in\mathbb{R}^p$ has true sparsity $s=30$, with non-zero coefficients having magnitude equal to 3.5 and signs $\pm 1$ following Bernoulli$(1/2)$. 
See \cite{original_knockoffs} for the details on this choice. Dimension-wise, we look at two scenarios: $\{n=3000, p=1000\}$ (this is the same setting as in \citealt{original_knockoffs}) and  \{$n=2000, p=1000$\}. Knockoff procedure becomes very conservative for $p\in[n/2,n]$; hence in the above two scenarios, we keep $p\leq n/2$ to have fair comparison; however, our methods can be used in $p>n$ settings, while the original knockoffs do not apply. Possible extensions of the original knockoffs to high dimensional setting via data splitting are in \citealt{high_dim_knockoffs}.

\begin{remark}
Theoretically, since $\lambda$s are chosen in data dependent ways, we have to adjust for cross-validation as we proposed. Empirically, we observe that for this specific data generating mechanism, the p-values with and without adjustment are similar. In other words, the p-values adjusting for model selection alone already look uniform. Therefore, to save computational cost, we stay with the non-adjusted version in this subsection.
\end{remark}

Along with knockoff, we compare our procedure with several other procedures that also give us $p$-values and confidence intervals as follows.

\begin{itemize}[leftmargin=*]

\item Data splitting (DS1): We use half of the data to get model $E$ through Lasso, with the penalty level chosen by cross-validation $\lambda=\lambda^{cv}$ as in \eqref{eq:cv:nonrandom} with no additional randomization.  Then, we compute $p$-values and confidence intervals for the least-square estimator constrained to set $E$ using the second half of the data.

\item Lee et al.~(TG1): We perform Lasso on the whole data set with $\lambda=\lambda_R^{cv}$ as in \eqref{eq:cv:random} chosen via randomized cross-validation. We take $Err_R=Err+R_1+R_2$, i.e.~the added randomization is additive, with $R_1\times R_2\sim\mathcal{N}_{|\Lambda|}(0, 0.01)\times \mathcal{N}_{|\Lambda|}(0, 0.01)$. CV curve $Err$ is computed using \verb|glmnet| \citep{friedmantibshirani}.
 Then the $p$-values are constructed by using truncated Gaussian of \citealt{lee2013exact} test statistic on the whole data set as in \eqref{eq:pivot} for the selected coefficients. 

\item Lee et al.~(TG2): This procedure is the same as (TG1) except that we perform Lasso on the whole data set with $\lambda=\lambda_R^{1\sigma}$ chosen via randomized CV followed by one sigma rule. One sigma rule in randomized cross-validation is defined as follows $\lambda_R^{1\sigma}=\{\max \lambda_l: \lambda_l\in\Lambda, Err_R(\lambda_l)\leq Err_R(\lambda_R^{cv})+SD_R(\lambda_R^{cv})\}$, where $SD_R(\lambda_R^{cv})$ corresponds to the standard error of the randomized CV curve $Err_R$ evaluated at $\lambda_R^{cv}$.

\item Randomized Lasso (R1): First we choose $\lambda=\lambda_R^{cv}$ by randomizing the cross-validation curve of non-randomized Lasso as in TG1, where $R_1\times R_2\sim\mathcal{N}_{|\Lambda|}(0, 0.01)\times \mathcal{N}_{|\Lambda|}(0, 0.01)$. Second, we perform randomized Lasso as in \eqref{eq:lasso:objective:randomized} on the whole data set with $\lambda=\lambda_R^{cv}$ to select the model $E$. Then we carry out MCMC sampling and inference based on the selective density as described in Section \ref{sec:randomized:lasso:cv} yielding $p$-values and confidence intervals for the selected coefficients in $E$.

\item Randomized Lasso (R2): This procedure is the same as (R2) except that $\lambda=\lambda_R^{1\sigma}$ is chosen using one sigma rule as in TG2.

\end{itemize}

Averaged over 100 repeated experiments, we summarize the results of running BH(0.2) on the selective $p$-values from (DS1, TG1, TG2, R1, R2) in Table \ref{table:BH:n:3000} and Table \ref{table:BH:n:2000} for $n=3000$ and $n=2000$, respectively.  We present empirical FDR, power and the average size of the selected set $|E|$ before BH. We compare our results with original knockoffs. Note that the size $|E|$ is not available (NA) for knockoffs since they only provide the final models.
We also present results from (DS1, TG1, TG2, R1, R2) in Table \ref{table:threholding:n:3000} and Table \ref{table:thresholding:n:2000}, for $n=3000$ and $n=2000$ respectively, where final models are determined by rule thresholding selective $p$-values at 0.05.


These four tables tell the following story. First, we gain greatly in terms of power when we use randomized Lasso instead of regular Lasso, while retaining valid inference. Second, in terms of power, R1 and R2 are comparable to knockoff for $n=3000$ and more powerful for $n=2000$. Third, with different choices of cross-validated $\lambda$, although the size of selected variables $|E|$ varies from case to case, power and FDR stay close.

It is worth noting that we used the parametric covariance estimates to make our procedures comparable to knockoffs since the latter is parametric as well. One more benefit from using the randomized inference approach (R1 and R2) is that, instead of relying heavily on parametric assumptions, as in both knockoffs and the truncated Gaussian statistic of \citealt{lee2013exact}, we are able to do non-parametric inference by using pairs bootstrap to estimate the covariances. Since we know the pairs bootstrap variance estimates are consistent pre-selection, using our post-selection consistency results we have that these variance estimates are also consistent post-selection.

Code used in this paper, including inference after cross-validated $\ell_1$-penalized logistic loss with different randomization distributions, is available online at \\ \verb|https://github.com/jonathan-taylor/selective-inference|.

\begin{table}[h!] 
\setlength{\tabcolsep}{6pt}
\centering
\begin{tabular}{c | c c c | c c c | c c c } 
\hline\hline
  & \multicolumn{3}{c}{\underline{$\rho=0$}} & \multicolumn{3}{c}{\underline{$\rho=0.2$}} &  \multicolumn{3}{c}{\underline{$\rho=0.4$}}  \\ [0.5ex] 
  & FDR & power & $|E|$ 
  & FDR & power & $|E|$
  & FDR & power & $|E|$ \\ [0.5ex] 
 \hline \hline
DS1 & 0.160 & 0.409 & 80.39
    & 0.159 & 0.420 & 78.88
    & 0.169 & 0.401 & 78.58 \\
TG1 & 0.068 & 0.135 & 124.17
    & 0.048 & 0.118 & 130.4
    & 0.068 & 0.135 & 124.17 \\
TG2 & 0.076 & 0.270 & 53.83
    & 0.071 & 0.331 & 48.24
    & 0.0705 & 0.283 & 47.11\\
R1 & 0.208 & 0.606 & 251.47
   & 0.184 & 0.601 & 254.51
   & 0.214 & 0.508 & 255.56\\
R2 & 0.196 & 0.573 & 120.26
   & 0.204 & 0.579 & 111.77
   & 0.245 & 0.538 & 121.95\\
knockoffs & 0.183 & 0.654 & 
 		  & 0.184 & 0.631 & 
 		  & 0.141 & 0.506 & \\
\hline
\end{tabular}%
\vspace{1em}
\caption{\textit{BH algorithm with target FDR = 0.2. n=3000.} }
\label{table:BH:n:3000}
\end{table}


\begin{table}[h!] 
\setlength{\tabcolsep}{6pt}
\centering
\begin{tabular}{c | c c c | c c c | c c c } 
\hline\hline
  & \multicolumn{3}{c}{\underline{$\rho=0$}} & \multicolumn{3}{c}{\underline{$\rho=0.2$}} &  \multicolumn{3}{c}{\underline{$\rho=0.4$}}  \\ [0.5ex] 
  & FDR & power & $|E|$
  & FDR & power & $|E|$
  & FDR & power & $|E|$ \\ [0.5ex] 
 \hline \hline
DS1 & 0.162 & 0.390 & 76.11
   & 0.148 & 0.379 & 83.11
   & 0.196 & 0.345 & 76.47 \\
TG1 & 0.068 & 0.104 & 117.19
    & 0.055 & 0.108 & 118.37
    & 0.053 & 0.119 & 108.95 \\
TG2 & 0.053 & 0.271 & 53.44
    & 0.071 & 0.253 & 54.63
    & 0.058 & 0.217 & 52.27 \\
R1 & 0.212 & 0.550 & 256.04
   & 0.200 & 0.541 & 245.69
   & 0.251 & 0.482 & 246.63\\
R2 & 0.231 & 0.579 & 135.49
   & 0.217 & 0.582 & 123.17
   & 0.273 & 0.544 & 132.72 \\
knockoffs & 0.130 & 0.503 &
 		  & 0.107 & 0.469 &
 		  & 0.093 & 0.346 & \\
\hline
\end{tabular}%
\vspace{1em}
\caption{\textit{BH algorithm with target FDR = 0.2. n=2000.} }
\label{table:BH:n:2000}
\end{table}


\begin{table}[h!] 
\setlength{\tabcolsep}{6pt}
\centering
\begin{tabular}{c | c c c | c c c| c c c } 
\hline\hline
  & \multicolumn{3}{c}{\underline{$\rho=0$}} & \multicolumn{3}{c}{\underline{$\rho=0.2$}} &  \multicolumn{3}{c}{\underline{$\rho=0.4$}}   \\ [0.5ex] 
   & FDR & power & Type I
   & FDR & power & Type I
   & FDR & power & Type I \\ [0.5ex] 
 \hline \hline
DS1  & 0.183 & 0.431 & 0.039
     & 0.170 & 0.437 & 0.036
     & 0.199 & 0.428 & 0.043 \\
TG1  & 0.347 & 0.225 & 0.031
     & 0.329 & 0.234 & 0.036
     & 0.347 & 0.205 & 0.032 \\
TG2  & 0.132 & 0.284 & 0.019
     & 0.108 & 0.317 & 0.024
     & 0.114 & 0.285 & 0.025 \\
R1  & 0.354 & 0.696 & 0.049
    & 0.346 & 0.699 & 0.047 
    & 0.390 & 0.627 & 0.050 \\
R2  & 0.225 & 0.591 & 0.049
    & 0.227 & 0.584 & 0.052
    & 0.277 & 0.574 & 0.061 \\
\hline
\end{tabular}%
\vspace{1em}
\caption{\textit{Thresholding $p$-values at 0.05. n=3000.} }
\label{table:threholding:n:3000}
\end{table}


\begin{table}[h!] 
\setlength{\tabcolsep}{6pt}
\centering
\begin{tabular}{c | c c c | c c c| c c c} 
\hline\hline
  & \multicolumn{3}{c}{\underline{$\rho=0$}} & \multicolumn{3}{c}{\underline{$\rho=0.2$}} &  \multicolumn{3}{c}{\underline{$\rho=0.4$}}   \\ [0.5ex] 
  & FDR & power & Type I
  & FDR & power & Type I
  & FDR & power  & Type I\\ [0.5ex] 
 \hline \hline
DS1 & 0.193 & 0.411 & 0.042 
    & 0.186 & 0.417 & 0.037 
    & 0.222 & 0.378 & 0.047 \\
TG1  & 0.404 & 0.188 & 0.036
     & 0.340 & 0.206 & 0.033
     & 0.354 & 0.203 & 0.037\\
TG2 & 0.119 & 0.296 & 0.019 
     & 0.117 & 0.289 & 0.018
    & 0.132 & 0.253  & 0.020 \\
R1 & 0.384 & 0.664 & 0.052 
    & 0.376 & 0.653 & 0.051
    & 0.406 & 0.596 & 0.052\\
R2  & 0.232 & 0.630 & 0.051
    & 0.196 & 0.730 & 0.0496
   & 0.158 & 0.670  & 0.064\\
\hline
\end{tabular}%
\vspace{1em}
\caption{\textit{Thresholding $p$-values at 0.05. n=2000.} }
\label{table:thresholding:n:2000}
\end{table}


\end{document}